\newcommand{\plist}{\operatorname{\textit{list}}}
\newcommand{\plistold}{\plist^{\operatorname{\textit{old}}}}
\renewcommand{\dag}{\textnormal{DAG}\xspace}
\def\mspan#1#2{[#1,#2\rangle}
{\def\n#1{\textsf{\tiny{#1}}}
\def\ccell#1{\multicolumn{1}{c}{#1}}
\def\s#1{\mathtt{#1}}

\newcommand{\algops}{\ensuremath{{\{\pi,\cup,\Join\}}}}

\newcommand{\newnode}{\operatorname{node}}
\newcommand{\newmap}{\operatorname{map}}

\newcommand{\markers}{\operatorname{Markers}}

\newcommand{\gname}{\ensuremath{\gamma_{\text{n}}}}
\newcommand{\gemail}{\ensuremath{\gamma_{\text{e}}}}
\newcommand{\gphone}{\ensuremath{\gamma_{\text{p}}}}

%%%%%%%%%%%%%%%
%% Todo list %%
%%%%%%%%%%%%%%%

\usepackage{enumitem}
\newlist{todolist}{itemize}{2}
\setlist[todolist]{label=\rlap{$\Box$}{\phantom{\cmark}}}
\usepackage{pifont}
\newcommand{\cmark}{\ding{51}}%
%

%%%%%%%%%%%%%%
%% Comments %%
%%%%%%%%%%%%%%

%\renewcommand{\cristian}[1]{}
%\renewcommand{\fernando}[1]{}
%\renewcommand{\domagoj}[1]{}
%\renewcommand{\martin}[1]{}
%\renewcommand{\stijn}[1]{}

%%%%%%%%%%%%%%%%%%%%%%%%
%% Complexity classes %%
%%%%%%%%%%%%%%%%%%%%%%%%

%%%%%%%%%%%%%%
%% Problems %%
%%%%%%%%%%%%%%

%%%%%%%%%%%%%%
%% Theorems %%
%%%%%%%%%%%%%%

\newtheorem{theorem}{Theorem}[section]
\newtheorem{example}[theorem]{Example}

\newtheorem{proposition}[theorem]{Proposition}

\newtheorem{lemma}[theorem]{Lemma}

%%%%%%%%%%%%%%%%%%
%% Span-related %%
%%%%%%%%%%%%%%%%%%

\newcommand{\bbN}{\mathbb{N}}

\newcommand{\cA}{\mathcal{A}}
\newcommand{\cL}{\mathcal{L}}

\newcommand{	\VV}{\mathcal{V}}

\newcommand{	\sub}{\text{span}}

\newcommand{	\var}{\text{var}}

\newcommand{	\dom}{\text{dom}}
\newcommand{\sem}[1]{\llbracket#1\rrbracket}

\newcommand{\semp}[1]{[#1]}

\newcommand{\semd}[1]{\sem{#1}_{d}}
\newcommand{\seme}[2]{\sem{#1}_{#2}}

\newcommand{\sempd}[1]{\semp{#1}_{d}}

\newcommand{\trans}[2][]{\raisebox{-1pt}[10pt][0pt]{$\overset{#2}{\underset{^{#1}}{\raisebox{0pt}[3pt][0pt]{$\relbar\mspace{-8mu}\rightarrow$}}}$}}

\newcommand{\VA}{\ensuremath{\mathrm{VA}}}
\newcommand{\EVA}{\ensuremath{\mathrm{eVA}}}
\newcommand{\sEVA}{\ensuremath{\mathrm{seVA}}}
\newcommand{\fEVA}{\ensuremath{\mathrm{feVA}}}
\newcommand{\dsEVA}{\ensuremath{\mathrm{\text{deterministic }seVA}}}

\newcommand{\fVA}{\ensuremath{\mathrm{fVA}}}
\newcommand{\sVA}{\ensuremath{\mathrm{sVA}}}

\newcommand{\RGX}{\ensuremath{\mathrm{RGX}}}

\newcommand{\Var}{\operatorname{var}}

\newcommand{\Open}[1]{#1~\mkern-10mu\vdash}
\newcommand{\Close}[1]{\dashv~\mkern-10mu#1}

\newcommand{\ENUM}{\text{\sc Enumerate}}

\newcommand{\COUNT}{\text{\sc Count}}

\newcommand{\LL}{{\cal L}}

% To fix an error when using references in the title of sections.

%Ref-word up notation:

\newcommand{\Out}{\text{\sc Out}}

%%% Local Variables:
%%% mode: latex
%%% TeX-master: "../main/main"
%%% End:

%!TEX root = ../main/main.tex
\title{Constant delay algorithms for regular document spanners}

\numberofauthors{5}

\author{\alignauthor
Fernando Florenzano \\
\affaddr{PUC Chile}\\
\email{faflorenzano@uc.cl}
\and
\alignauthor
Cristian Riveros \\
\affaddr{PUC Chile}\\
\email{cristian.riveros@uc.cl}
\and
\alignauthor
Mart\'in Ugarte\\
\affaddr{Universit\'e Libre de Bruxelles}\\
\email{mugartec@ulb.ac.be}
\and
\alignauthor
Stijn Vansummeren\\
\affaddr{Universit\'e Libre de Bruxelles}\\
\email{stijn.vansummeren@ulb.ac.be}
\and
\alignauthor
Domagoj Vrgo\v{c}\\
\affaddr{PUC Chile}\\
\email{dvrgoc@ing.puc.cl}
}

%%%%%%%%%%%%%%%%%%%
%% Document body %%
%%%%%%%%%%%%%%%%%%%

\begin{document}
\maketitle
\sloppy

%!TEX root = ../main/main.tex
\begin{abstract}
  Regular expressions and automata models with capture variables are
  core tools in rule-based information extraction. These formalisms,
  also called regular document spanners, use regular languages in
  order to locate the data that a user wants to extract from a text
  document, and then store this data into variables.
  Since document spanners can easily generate large outputs, it is
  important to have good evaluation algorithms that can generate the
  extracted data in a quick succession, and with relatively little
  precomputation time. Towards this goal, we present a practical
  evaluation algorithm that allows constant delay enumeration of a
  spanner's output after a precomputation phase that is linear in the
  document. While the algorithm assumes that the spanner is specified
  in a syntactic variant of variable set automata, we also study how
  it can be applied when the spanner is specified by general variable
  set automata, regex formulas, or spanner algebras. Finally, we study
  the related problem of counting the number of outputs of a document
  spanner, providing a fine grained analysis of the classes of
  document spanners that support efficient enumeration of their
  results.
\end{abstract}

%%% Local Variables: 
%%% mode: latex
%%% TeX-master: "../main/main"
%%% End: 

\section{Introduction}\label{sec:intro}
%!TEX root = ../main/main.tex

%\stijn{I believe that there needs to be a comment made in the
%introduction that we focus on the class of \emph{regular} spanners
%(which is more limited than the core spanners. We had such a comment
%in section 2 previously, but I deleted this some time ago.}
%
%\stijn{Actually, does it make sense to rename the paper to ``CDE
%algorithms for *regular* documents spanners?}
%\domagoj{Will do in contributions.}

%\noindent{\bf Background.} 
Rule-based information extraction (IE for short)
\cite{ChiticariuLR13,FKRV15,Kimelfeld14} has received a fair amount of
attention from the database community recently, revealing interesting
connections with logic \cite{F17,freydenberger2016document}, automata
\cite{FKRV15,MaturanaRV17}, datalog programs \cite{AMRV16,SDNR07}, and
relational languages \cite{CKLRRV10,KLRRVZ08,FKP17}. In rule-based IE,
documents from which we extract the information are modelled as
strings. This is a natural assumption for many formats in use today
(e.g. JSON and XML files, CSV documents, or plain text). The extracted
data are represented by {\em spans}. These are intervals inside the
document string that record the start and end position of the
extracted data, plus the substring (the data) that this interval
spans. The process of information extraction can then be abstracted by
the notion of {\em document spanners} \cite{FKRV15}: operators that
map strings to tuples containing spans.
% operators that take strings as their input, and produce as their
% output a representation of the data extracted from these strings.

The most basic way of defining document spanners is to use some form
or regular expressions or automata with capture variables. The idea is that a regular language is used in order to locate the data to be extracted, and variables to store this data. This approach to IE has been widely adopted in the database literature \cite{FKRV15,FKRV14,AMRV16,F17,MaturanaRV17}, and also forms the core extraction mechanism of commercial IE tools such as IBM's SystemT \cite{KLRRVZ08}. The two classes of expressions and automata for extracting information most commonly used in the literature are {\em regex formulas} (RGX) and {\em variable-set automata} (\VA), both formally introduced in \cite{FKRV15}.

A crucial problem when working with $\RGX$ and $\VA$ in practice is
how to evaluate them efficiently. One issue here is that the output
can easily become huge. For the sake of illustration, consider the regex formula
$\gamma = \Sigma^*\cdot x_1\{\Sigma^*\cdot x_2\{\Sigma^*\} \cdot
\Sigma^*\}\cdot \Sigma^*,$ where $\Sigma$ denotes a finite
alphabet. Intuitively, $\gamma$ extracts any span of a document
$d$ into $x_1$, and any sub-span of this span into
$x_2$. Therefore, on a document $d$ over $\Sigma$ it will produce an
output of size $\Omega(|d|^2)$.  If we keep nesting the variables
(i.e., $x_3$ inside $x_2$, etc.), the output size will be
$\Omega(|d|^\ell)$, with $\ell$ the number of variables in $\gamma$.
Since an evaluation algorithm must at least write down this output,
and since the latter is exponential (in $\gamma$ and $d$),
alternative complexity measures need to be used in order to
answer when this problem is efficiently solvable.

A natural option here is to use {\em enumeration algorithms}
\cite{Segoufin13}, which work by first running a pre-computation
phase, after which they can start producing elements of the output
(tuples of spans in our case) with pre-defined regularity and without
repetitions. The time taken by an enumeration algorithm that has an
input $I$ and an output $O$ is then measured by a function that
depends both on the size of $I$ and the size of $O$. Ideally, we would
like an algorithm that runs in total time $O(f(|I|) + |O|)$, where $f$
is a function not depending on the size of the output, so that the
output is returned without taking much time between generating two of
its consecutive elements. This is achieved by the class of {\em
  constant delay enumeration algorithms} \cite{Segoufin13}, that do a
pre-computation phase that depends only on the size of the input
($\gamma$ and $d$ in our case), followed by an enumeration of the
output without repetitions where the time between two outputs is
constant.

Constant delay algorithms have been studied in various contexts,
ranging from MSO queries over trees
\cite{bagan2006mso,courcelle2009linear}, to relational conjunctive
queries \cite{Bagan:2007}. These studies, however, have been mostly
theoretical in nature, and did not consider practical applicability of
the proposed algorithms. To quote several recent surveys of the area:
``We stress that our study is from the theoretical point of view.  If
most of the algorithms we will mention here are linear in the size of
the database, the constant factors are often very big, making any
practical implementation difficult."
\cite{Segoufin13,segoufin2014glimpse,Segoufin15}. These surveys also
leave open the question of whether practical algorithms could be
designed in specific contexts, where the language being processed is
restricted in its expressive power. This was already shown to be true
in \cite{AMRV16}, where a constant delay enumeration algorithm for a
restricted class of document spanners known as navigation expressions
was implemented and tested in practice. Since navigation expressions
are a very restricted subclass of $\RGX$ and $\VA$
\cite{MaturanaRV17}, and since the latter have been established in the
literature as the two most important classes of rule-based IE
languages, in this paper we study practical constant delay 
algorithms for $\RGX$ and $\VA$.

%\stijn{Is the reference \cite{MaturanaRV17} the correct one ? I  thought that this was shown by Kimmelfeld et al in their upcoming  pods paper (but I probably misunderstood ...).}

\smallskip
\noindent{\bf Contributions.} The principal contribution of our work
is an intuitive constant delay algorithm for evaluating a syntactic
variant of $\VA$ that we call extended $\VA$. Extended $\VA$ are
designed to streamline the way $\VA$ process a string, and the algorithm we present can evaluate an extended $\VA$ $\cA$ that is both sequential \cite{MaturanaRV17} and deterministic over a document $d$ with
pre-processing time $O(|\cA|\times |d|)$, and with constant delay
output enumeration. We then study how this algorithm can be applied to
arbitrary $\RGX$ and $\VA$, and their most studied restrictions such
as functional and sequential $\RGX$ and $\VA$. Both sequential and
functional $\VA$ and $\RGX$ are important subclasses of regular spanners: as shown in \cite{FKRV15,MaturanaRV17,FKP17}, they have both good
algorithmic properties and prohibit unintuitive behaviour.  Next, we
proceed by extending our findings to the setting where spanners are
specified by means of an algebra that allows to combine $\VA$ or $\RGX$
using unions, joins and projections. 
As such, we identify  upper bounds on the preprocessing times when evaluating the class
of regular spanners \cite{FKRV15} with constant delay.

In an effort to get some idea of potential lower-bounds on preprocessing times, we study the problem of counting the number of tuples output by a spanner.
%Furthermore, we also study the problem of counting the number of tuples output by a spanner. 
This problem is strongly connected to the enumeration problem
\cite{Segoufin13}, and gives evidence on whether a constant delay
algorithm with faster pre-computation time exists. Here, we extend our
main constant delay algorithm to count the number of outputs of a deterministic and sequential extended $\VA$ $\cA$ in time
$O(|\cA|\times |d|)$. We also show that counting the number of outputs
of a functional but not necessarily deterministic nor extended $\VA$
is complete for the counting class $\textsc{SpanL}$~\cite{LOGC}, thus
making it unlikely to compute this number efficiently unless
the polynomial hierarchy equals $\textsc{Ptime}$.

\smallskip
\noindent{\bf Related work.}
%!TEX root = ../main/main.tex
%
Constant delay enumeration algorithms (from now on CDAs) for MSO queries have been proposed
in~\cite{bagan2006mso,courcelle2009linear,kazana2013enumeration}. Since
any regular spanner can be encoded by an MSO query (where capture
variables are encoded by pairs of first-order variables), this
implies that CDAs for MSO queries also apply to document
spanners.
%\stijn{What is the preprocessing time in terms of the automaton ?  Giving only the time in the document is not sufficient since we  consider combined complexity precomputation.}
In~\cite{courcelle2009linear}, a CDA was given with preprocessing time
$O(|t| \times \log(|t|))$ in data complexity 
where $|t|$ is the size of the input
structure (e.g. document). In~\cite{kazana2013enumeration}, a CDA
was given based on the deterministic factorization forest
decomposition theorem, a combinatorial result for automata. Our CDA has linear precomputation time over the input document and
does not rely on any previous results, making it
incomparable with \cite{courcelle2009linear,kazana2013enumeration}.

The CDA given by Bagan in~\cite{bagan2006mso} requires a
more detailed comparison. The core algorithm of~\cite{bagan2006mso}
is for a deterministic automaton model which has some resemblance with deterministic $\VA$, but there are several differences.  First of all, Bagan's algorithm is for tree automata and the output are tuples of MSO variables, while our algorithm works only for $\VA$, whose output are first order variables. Second, Bagan's algorithm has preprocessing time $O(|\cA|^3 \times |t|)$, where
$\cA$ is a tree automaton and $t$ is a tree structure.  In contrast,
our algorithm has preprocessing time $O(|\cA| \times |d|)$, namely,
linear in $|\cA|$.  Although Bagan's algorithm is for tree-automata
and this can explain a possible quadratic blow-up in terms of $|\cA|$, it is
not directly clear how to improve its preprocessing time to be linear
in $|\cA|$.  Finally, Bagan's algorithm is described as a composition of high-level operations over automata and
trees, while our algorithm can be described using a few lines of pseudo-code.

There is also recent work~\cite{FKP17,MaturanaRV17} tackling the  enumeration problem
for document spanners directly, but focusing on polynomial delay rather
than constant delay. In \cite{MaturanaRV17}, a complexity theoretic
treatise of polynomial delay (with polynomial pre-processing) is given
for various classes of spanners. And while \cite{MaturanaRV17} focuses
on decision problems that guarantee an existence of a polynomial delay
algorithm, in the present paper we focus on practical algorithms that
furthermore allow for constant delay enumeration. On the other hand,
\cite{FKP17} gives an algorithm for enumerating the results of
a functional $\VA$ automaton $\cA$ over a document $d$ with a delay of
roughly $O(|\cA|^2\times |d|)$, and pre-processing of the order
$O(|\cA|^2 \times |d|)$. The main difference of \cite{FKP17} and the
present paper is that our algorithm can guarantee constant delay,
albeit for a slightly better behaved class of automata. When applying our algorithm
directly to functional $\VA$ as in
\cite{FKP17}, we can still obtain constant delay enumeration, but now
with a pre-processing time of $O(2^{|\cA|}\times |d|)$ (see Section~\ref{sec:spanners}). 
%It is therefore most likely that in the case one wants to consider only functional $\VA$, one would prefer to use the algorithm of \cite{FKP17} when the automaton is large, and when the number of outputs is relatively small, while for spanners that capture a lot of information, or are executed on very big documents, one would be better off using the constant delay algorithm presented here. Another difference is that the algorithm of \cite{FKP17} is presented in terms of automata theoretic constructions, while we aim to give a concise pseudo-code description. %of the algorithm. %, making it easier to implement.
Therefore, if considering only functional $\VA$, the algorithm of \cite{FKP17} would be the preferred option when the automaton is large, and when the number of outputs is relatively small, while for spanners that capture a lot of information, or are executed on very big documents, one would be better off using the constant delay algorithm presented here. Another difference is that the algorithm of \cite{FKP17} is presented in terms of automata theoretic constructions, while we aim to give a concise pseudo-code description.

%%% Local Variables: 
%%% mode: latex
%%% TeX-master: "../main/main"
%%% End: 

\smallskip
\noindent{\bf Organization.} We formally define all the notions used throughout the paper in Section \ref{sec:prelim}. The algorithm for evaluating a deterministic and sequential extended $\VA$ with linear preprocessing and constant delay enumeration is presented in Section \ref{sec:algorithm}, and its application to regular spanners in Section \ref{sec:spanners}. We study the counting problem in Section \ref{sec:lower}, and conclude in Section \ref{sec:concl}. Due to space reasons, most proofs are deferred to the appendix.

%%% Local Variables: 
%%% mode: latex
%%% TeX-master: "../main/main"
%%% End: 

\section{Basic definitions}\label{sec:prelim}
%!TEX root = ../main/main.tex

%\stijn{TODO: $A \to \cA$}
%\domagoj{PERHAPS ADD AN EXAMPLE THAT FOLLOWS EACH NOTION (I.E. ONE DOCUMENT, DEFINE SPANS OVER IT, AND GIVE A RGX AND VSET THAT EXTRACT INFO FROM IT)}

\noindent{\bf Documents and spans.} We use a fixed finite alphabet $\Sigma$
throughout the paper. A \emph{document}, from which we will extract
information, is a finite string $d = a_1 \dots a_n$ in $\Sigma^*$. We
denote the length $n$ of document $d$ by $|d|$.  % As done in
% previous approaches \cite{FKRV15,AMRV16}, we use the notion of a {\em
%   span} to capture the part of a document $d$ that we wish to
% extract. Formally,
A \emph{span} $s$ is a pair $[i,j\rangle$ of natural
numbers $i$ and $j$ with $1 \leq i \leq j$. Such a span is said to be
\emph{of document $d$} if $j \leq |d|+1$.
% a span $p$
% of a document $d$ is an interval $[i,j\rangle$ such that $1 \leq i
% \leq j \leq |d|+1$, where $|d|$ is the length of the string
% $d$.
In that case, $s$ is associated with a continuous region of the document
$d$ (also called a span of $d$), whose content is the substring of $d$ from positions $i$ to
$j-1$. We denote this substring by $d(s)$ or $d(i,j)$. To illustrate,
Figure~\ref{fig:running-example} shows a document $d$ as well as
several spans of $d$. There, for example, $d(1,5) = \s{John}$.
% Every span $p =
% [i,j\rangle$ of $d$ has an associated content, which is denoted by
% $d(p)$ or $d(i,j)$, and is defined as the substring of $d$ from
% position $i$ to position $j-1$.  
Notice that if $i = j$, then $d(s) = d(i,j) = \varepsilon$. Given two
spans $s_1 = [i_1, j_1\rangle$ and $s_2 = [i_2, j_2\rangle$, if $j_1 =
i_2$ then their concatenation is equal to $[i_1, j_2\rangle$ and is
denoted $s_1 \cdot s_2$.  The set of all spans of
$d$ is denoted by $\sub(d)$.% The set of all spans associated with a document $d$, denoted
% $\sub(d)$, is then defined as the set $\{ [i,j\rangle \mid i,j \in
% \{1, \ldots, |d|+1\} \text{ and } i \leq j\}$.

\medskip
\noindent{\bf Mappings.} Following \cite{MaturanaRV17}, we will use
mappings to model the information extracted from a document. Mappings
differ from tuples (as used by e.g., Fagin et al.~\cite{FKRV15} and
Freydenberger et al.~\cite{F17,freydenberger2016document}) in that not
all variables need to be assigned a span.  Formally, let \(\VV\) be a
fixed infinite set of variables, disjoint from $\Sigma$. A
\emph{mapping} is a function $\mu$ from a finite set of variables
$\dom(\mu) \subseteq \VV$ to spans. 
 % For a document $d$, a \textit{mapping} is a partial function from the set of variables \(\VV\) to \(\sub(d)\). The \textit{domain} \(\dom(\mu)\) of a mapping \(\mu\) is the set of variables for which \(\mu\) is defined. 
Two mappings \(\mu_1\) and \(\mu_2\) are said to be \textit{compatible} (denoted
\(\mu_1 \sim \mu_2\)) if \(\mu_1(x) = \mu_2(x)\) for every \(x\) in
\(\dom(\mu_1) \cap \dom(\mu_2)\). If \(\mu_1 \sim \mu_2\), we define \(\mu_1 \cup
\mu_2\) as the mapping that results from extending \(\mu_1\) with the
values from \(\mu_2\) on all the variables in \(\dom(\mu_2) \setminus
\dom(\mu_1)\). The \textit{empty mapping}, denoted by \(\emptyset\), is the only
mapping such that \(\dom(\emptyset) = \emptyset\). Similarly, \([x \to s]\)
denotes the mapping whose domain only contains the variable \(x\),
which it assigns to be the span~\(s\).
The \textit{join} of two set of mappings \(M_1\) and \(M_2\) is defined as
follows:
\begin{equation*}
  M_1 \Join M_2 = \{\mu_1 \cup \mu_2 \mid \mu_1 \in M_1 \text{, }
  \mu_2 \in M_2 \text{ and } \mu_1 \sim \mu_2\}.
\end{equation*}

%Finally, we say that a mapping \(\mu\) is \textit{hierarchical} if for every
%\(x, y \in \dom(\mu)\), either: \(\mu(x)\) is contained in \(\mu(y)\),
%\(\mu(y)\) is contained in \(\mu(x)\), or \(\mu(x)\) and \(\mu(y)\)
%are disjoint. Similarly, a set of mappings is said to be hierarchical if it only contains hierarchical mappings.

\noindent{\bf Document spanners.} A {\em document spanner} is a
function that maps every input document $d$ to a set of mappings $M$
such that the range of each $\mu \in M$ are spans of $d$---thus
modeling the process of extracting the information (in form of
mappings) from $d$. Fagin et al. \cite{FKRV15} have proposed different
languages for defining spanners: by means of regex formulas, by means
of automata, and by means of  algebra. We next recall the definition
of these languages, and define their semantics in the context of
mappings rather than tuples.

% A number of formaTwo most studied types of
% document spanners are regex formulas and variable-set automata
% \cite{FKRV15}. Since these are the classes of spanners we will
% consider in this paper, we next recall their definition, and their
% semantics in the context of mappings. 

\begin{figure}[b]
\centering\small
{\setlength{\tabcolsep}{0.6mm}
\begin{tabular}{cccccccccccccccccccccccccccccc}
\multicolumn{27}{c}{Document $d$}\\
$\s{J}$ & $\s{o}$ & $\s{h}$ & $\s{n}$ &
$\s{\_}$ & $\s{\langle}$ & $\s{j}$ & $\s{@}$ &
$\s{g}$ & $\s{.}$ & $\s{b}$ &
$\s{e}$ & $\s{\rangle}$ & $\s{,}$ &
$\s{\_}$ & $\s{J}$ & $\s{a}$ & $\s{n}$ &
$\s{e}$ & $\s{\_}$ &
$\s{\langle}$ & $\s{5}$ & $\s{5}$ & $\s{5}$  & $\s{-}$ &
$\s{1}$ & $\s{2}$ & $\s{\rangle}$  \\\hline
\n{1} & \n{2} & \n{3} & \n{4} & 
\n{5} & \n{6} & \n{7} & \n{8} & 
\n{9} & \n{10} & \n{11} & \n{12} & 
\n{13} & \n{14} & \n{15} & \n{16} & 
\n{17} & \n{18} & \n{19} & \n{20} & 
\n{21} & \n{22} & \n{23} & \n{24} & 
\n{25} & \n{26} & \n{27} & \n{28} 
% \\\hline
\end{tabular}}
\begin{tabular}[t]{c|l|l|l}
\multicolumn{4}{c}{$\semd{\gamma}$}\\\hline
% $\str s$-tuple 
& \ccell{name} & \ccell{email} & \ccell{phone}\\\hline
$\mu_1$ & $\mspan{1}{5}$ & $\mspan{7}{13}$ & \\\hline
%$\mu_2$ & $\mspan{1}{5}$ &  & $\mspan{22}{28}$\\\hline
$\mu_2$ & $\mspan{16}{20}$ &  & $\mspan{22}{28}$\\\hline
\end{tabular}
\caption{\label{fig:running-example}A document $d$ and the evaluation $\semd{\gamma}$, with $\gamma$ as defined in Equation~(\ref{eq:example}).}
\end{figure}

\bigskip
\noindent{\bf Regex formulas.} Regex formulas extend the syntax of
classic regular expressions with variable capture expressions of the
form $x\{\gamma\}$. Intuitively, and similar to classical regular
expressions, regex formulas specify a search through an input
document. However, when, during this search, a variable capture
subformula $x\{\gamma\}$ is matched against a substring, the span $s$
that delimits this substring is recorded in a mapping $[x \to s]$ as a
side-effect. Formally, the syntax of \emph{regex formulas} is defined
by the following grammar~\cite{FKRV15}:
$$
  \gamma \coloneqq \varepsilon \mid
    a \mid
    x\{\gamma\} \mid
  \gamma \cdot \gamma \mid
  \gamma \vee \gamma \mid
  \gamma^{*}.
$$
Here, $a$ ranges over letters in $\Sigma$ and $x$ over variables in~$\VV$. We will write $\Var(\gamma)$ to denote  the set of
all variables occurring in regex formula $\gamma$. We write $\RGX$ for
the class of all regex formulas.

% FROM STIJN: It is confusing to introduce variable regexes and regex
% formulas. Let us just call them regex formulas. 
%
% In what follows we will often
% refer to variable regex ($\RGX$, resp.) as a regex formula ($\RGX$
% formula, resp.).

%Intuitively, $\RGX$ use regular expressions to navigate the document, while a subexpression of the form $x\{\gamma\}$ stores a span starting at the current position and matching $\gamma$ into the variable $x$.

The mapping-based spanner semantics of \RGX\ is given in
Table~\ref{tab-semantics} (cf.\@ \cite{MaturanaRV17}).  % The idea
% behind this definition is that we view our expression $\gamma$ as a
% way of defining partial mappings $\mu:\Var(\gamma)\rightharpoonup
% \sub(d)$.
The semantics is defined by structural induction on $\gamma$ and has two
layers. The first layer, $\sempd{\gamma}$, defines the set of all
pairs $(s,\mu)$ with $s \in \sub(d)$ and $\mu$ a mapping such that (1)
$\gamma$ successfully matches the substring $d(s)$ and (2) $\mu$
results as a consequence of this successful match. For example, the
regex formula $\gamma = a$ matches all substrings of input document
$d$ equal to $a$, but results in only the empty mapping. On the other
hand, $\gamma = x\{\gamma_1\}$ matches all substrings that are matched by
$\gamma_1$, but assigns $x$ the span $s$ that delimits the substring being
matched, while preserving the previous variable
assignments. Similarly, in the case of concatenation $\gamma_1\cdot \gamma_2$ we
join the mapping defined on the left with the one defined on the
right, while imposing that the same variable is not used in both parts
(as this would lead to inconsistencies). The second layer,
$\semd{\gamma}$ then simply gives us the mappings that $\gamma$
defines when matching the entire document. Note that when $\gamma$ is
an ordinary regular expression ($\Var(\gamma) = \emptyset)$, then the
empty mapping is output if the entire document matches $\gamma$, and
no mapping is output otherwise. % Furthermore, it is readily verified
% that if $\mu \in \semd{\gamma}$, then $\dom(\mu) \subseteq \Var(\gamma)$.

\begin{table}
\begin{align*}
  \semd{\gamma} &= \{ \mu \mid ((1, |d| + 1), \mu) \in \sempd{\gamma} \}\\
  \sempd{\varepsilon} &= \{ (s, \emptyset) \mid s \in \sub(d) \text{ and } d(s)=\varepsilon\}\\
  \sempd{a} &= \{ (s, \emptyset) \mid s \in \sub(d) \text{ and } d(s) = a \}\\
  \sempd{x\{\gamma\}} &= \{ (s, \mu) \mid \exists (s, \mu') \in \sempd{\gamma}:\\
    &\phantom{{}={}} x \not\in \dom(\mu') \text{ and }
    \mu = [x \to s] \cup \mu' \}\\
  \sempd{\gamma_1 \cdot \gamma_2} &= \{ (s, \mu) \mid
    \exists (s_1, \mu_1) \in \sempd{R_1},\\
    &\phantom{{}={}} \exists (s_2, \mu_2) \in \sempd{\gamma_2}: s = s_1 \cdot s_2, \\   
    &\phantom{{}={}} \dom(\mu_1) \cap \dom(\mu_2) = \emptyset, \text{ and }\\
    &\phantom{{}={}} \mu = \mu_1 \cup \mu_2 \}\\
  \sempd{\gamma_1 \vee \gamma_2} &= \sempd{\gamma_1} \cup \sempd{\gamma_2}\\
  \sempd{\gamma^*} &= \sempd{\varepsilon} \cup \sempd{\gamma} \cup \sempd{\gamma^2} \cup
    \sempd{\gamma^3} \cup \cdots\\
\end{align*}
\vspace*{-25pt}
\caption{The semantics \(\semd{\gamma}\) of a \(\RGX\) \(\gamma\) over a document \(d\). Here \(\gamma^2\) is a shorthand for \(\gamma \cdot \gamma\), similarly \(\gamma^3\) for \(\gamma \cdot \gamma \cdot \gamma\), etc.}
\label{tab-semantics}
\end{table}

\begin{example}
  \label{ex:regex}
  
  Consider the task of extracting names, email addresses and phone numbers from documents. To do this
  we could use the regex formula $\gamma$ defined as
  \begin{equation}\label{eq:example}
    \Sigma^* \cdot \textit{name}\{\gname\} \cdot \_ \cdot \langle\cdot(\textit{email}\{\gemail\} \vee \textit{phone}\{ \gphone)\})\cdot \rangle \cdot \Sigma^*
  \end{equation}
  where $\s{\_}$ represents a space; $name$, $email$, and $phone$ are variables; and $\gname$, $\gemail$, and $\gphone$ are regex formulas that recognize person names, email addresses, and phone numbers, respectively. We omit the particular definition of these formulas as this is irrelevant for our purpose.
  % More precisely,
  % \begin{align*}
  %   \gname & := \s{[A\text{-}Z]}\cdot\s{[a\text{-}z]}^*\\
  %   \gemail & := \s{[a\text{-}z]}^+ \cdot \s{@} \cdot
  %   \s{[a\text{-}z]}^+ \cdot \s{.} \cdot \s{[a\text{-}z]}^+\\
  %   \gphone & := \s{[0\text{-}9]}^+ \cdot \s{\text{-}} \cdot \s{[0\text{-}9]}^+.
  % \end{align*}
  % Here, shorthand notations like $\s{[A\text{-}Z]}$ denote the
  % disjunction $\s{A}\lor\dots\lor\s{Z}$ and $\gamma^+$ abbreviates
  % $\gamma \cdot \gamma^*$. 
  % The symbol $\s{\_}$ represents a space. 
  The result $\semd{\gamma}$ of evaluating
  $\gamma$ over the document~$d$ shown in
  Figure~\ref{fig:running-example} is shown at the bottom of
  Figure~\ref{fig:running-example}.
\end{example}

It is worth noting that the syntax of regex formula used here is
slightly more liberal than that used by Fagin et al.~\cite{FKRV15}. In
particular Fagin et al.\@ require regex formulas to adhere to certain
syntactic restrictions that ensure that the formula is
\emph{functional}: every mapping in $\semd{\gamma}$ is defined on all
variables appearing in $\gamma$, for every $d$. For regex formulas
that satisfy this syntactic restriction, the semantics given here
coincides with that of Fagin et al~\cite{FKRV15} (see
\cite{MaturanaRV17} for a detailed discussion).

\medskip
\noindent{\bf Variable-set automata.} A \emph{variable-set automaton}
(\(\VA\)) \cite{FKRV15} is an finite-state automaton extended with captures variables in a
way analogous to \RGX; that is, it behaves as a usual finite state
automaton, except that it can also open and close variables.
Formally, a $\VA$ automaton $\cA$ is a tuple \((Q, q_0, F, \delta)\),
where \(Q\) is a finite set of \textit{states}; $q_0 \in Q$ is the
initial state; $F\subseteq Q$ is the set of final states; and $\delta$
is a \textit{transition relation} consisting of \emph{letter
  transitions} of the form $(q, a, q')$ and \emph{variable
  transitions} of the form $(q, \Open{x}, q')$ or $(q, \Close{x},q')$,
where $q, q' \in Q$, $a \in \Sigma$ and $x \in \VV$.  The \(\vdash\)
and \(\dashv\) are special symbols to denote the opening or closing of
a variable $x$. We refer to $\Open{x}$ and $\Close{x}$ collectively as
\emph{variable markers}.  We define the set $\Var(\cA)$ as the set of
all variables $x$ that are mentioned in some transition of $\cA$.

A configuration of a $\VA$ automaton over a document $d$ is a tuple $(q, i)$ where $q \in Q$ is the current state and \(i \in [1, |d| + 1]\) is the \textit{current position} in $d$.
A run $\rho$ over a document $d = a_1 a_2 \cdots a_n$ is a sequence of the form:
$$
\rho \ = \ (q_0, i_0) \ \trans{o_1} \ (q_1, i_1) \ \trans{o_2} \ \cdots \ \trans{o_m} \ (q_m, i_m)
$$
where $o_j \in \Sigma \cup \{\Open{x}, \Close{x} \mid x \in \VV\}$ and
$(q_j, o_{j+1}, q_{j+1}) \in \delta$. Moreover, $i_0, \ldots, i_n$ is a non-decreasing sequence such that $i_0 = 1$, $i_m = |d|+1$, and $i_{j+1} =
i_j +1$ if $o_{j+1} \in \Sigma$ (i.e.\ the automata moves one position
in the document only when reading a letter) and $i_{j+1} = i_j$
otherwise. Furthermore, we say that a run $\rho$ is \emph{accepting}
if $q_m \in F$ and that it is \emph{valid} if variables are opened and closed
in a correct manner (that is, each $x$ is opened or closed at most
once, and $x$ is opened at some position $i$ if and only if it is closed at
some position $j$ with $i \leq j$).
Note that not every accepting run is valid. In case that
$\rho$ is both accepting and valid, we define $\mu^{\rho}$ to be the
mapping that maps $x$ to $[i_j, i_k\rangle \in \sub(d)$ if, and only
if, $o_{i_j} = \Open{x}$ and $o_{i_k} = \Close{x}$ in~$\rho$.
Finally, the semantics of $\cA$ over $d$, denoted by \(\semd{\cA}\) is
defined as the set of all $\mu^{\rho}$ where $\rho$ is a valid and
accepting run of $\cA$ over $d$.

Note that validity requires only that variables are
opened and closed in a correct manner; it does not require that all
variables in $\Var(\cA)$ actually appear in the run.  Valid runs that do
mention all variables in $\Var(\cA)$ are called \emph{functional}. In a
functional run, all variables are hence opened and closed exactly once
(and in the correct manner) whereas in a valid run they are opened and
closed at most once. % As such, $\dom(\mu^\rho) = \var(\cA)$ for every
% functional accepting run $\rho$, whereas we only know $\dom(\mu^\rho)
% \subseteq \var(\cA)$ for valid accepting $\rho$.

A VA $\cA$ is \emph{sequential} (\sVA) if every accepting
run of $\cA$ is valid. It is \emph{functional} (\fVA) if every accepting run is
functional. In particular, every $\fVA$ is also
sequential. Intuitively, during a run a $\sVA$ does not need to check
whether variables are opened and closed in a correct manner; the run is guaranteed to be valid whenever a final state is
reached.

% \begin{definition}
%   VA $A$ is \emph{sequential} if every accepting run of $A$ is valid.
%   It is \emph{functional} if it is sequential and every variable $x
%   \in \Var(A)$ is opened in every accepting run of $A$.
% \end{definition}

%\stijn{Cristian: can you add to the following paragraph the
%  preprocessing time required ?}

It was shown in \cite{MaturanaRV17,FKP17} that constant delay enumeration
(after polynomial-time preprocessing) is not possible for variable-set automata in general. However, the authors in \cite{MaturanaRV17} also show that for the class of $\fVA$ or $\sVA$, polynomial delay enumeration is possible, thus leaving open the question of constant delay in this case. 
As we will see, the sequential property is important in order to have constant-delay algorithms. % in general.
%To define the class of {\em functional $\VA$} \cite{F17,MaturanaRV17}, we first need to define the notion of a path. A path $\pi$ of a variable-set automaton $A = (Q, q_0, F, \delta)$ is a finite sequence of transitions $\pi: (q_1, s_2, q_2), (q_2, s_3, q_3) \ldots, (q_{m-1}, s_m, q_m)$ such that $(q_i, s_{i+1}, q_{i+1}) \in
%\delta$ for all $i \in [1, m - 1]$.
%We say that a path $\pi$ of $A$ is functional if for every variable \(x \in
%\VV\) it holds that there is at most one \(i \in [1, m]\) such that \(s_i = \Open{x}\); and if such an $i$ exists, then there is precisely one $j\geq i$ such that \(s_j = \Close{x}\). We say that variable-set automaton $A$ is functional if every path from the initial to a final state in $A$ is functional\footnote{Strictly speaking, \cite{MaturanaRV17} defines a slightly more general class of automata called sequential $\VA$, that are not required to close an open variable on a path. Since the functionality constraint seems more natural, and is better established in the literature, in this paper we will consider functional $\VA$ instead of sequential.}.

\medskip
\noindent{\bf Spanner algebras.} In addition to defining basic
document spanners through $\RGX$ or $\VA$, practical information
extraction systems also allow spanners to be defined by applying basic
algebraic operators on already existing spanners. This is formalized
as follows. Let $\mathcal{L}$ be a language for defining document
spanners (such as $\RGX$ or $\VA$). Then we denote by
$\mathcal{L}^{\{\pi,\cup,\Join\}}$ the set of all expressions
generated by the following grammar:
\[ e := \alpha \mid \pi_Y(e) \mid e \cup e \mid e \Join e. \] Here,
$\alpha$ ranges over expressions of $\mathcal{L}$, and $Y$ is a finite subset of $\VV$. Assume that
$\sem{\alpha}$ denotes the spanner defined by $\alpha \in
\mathcal{L}$. Then the semantics $\sem{e}$ of expression $e$ is the
spanner inductively defined as follows:
\begin{align*}
  \semd{\pi_Y(e)} & = \{\mu|_Y : \mu \in \semd{e}\} \\
  \semd{e_1 \cup e_2} & = \semd{e_1} \cup \semd{e_2} \\
  \semd{e_1 \Join e_2} & = \semd{e_1} \Join \semd{e_2}
\end{align*}
Here, $\mu|_Y$ is the restriction of $\mu$ to variables $Y$
and $\semd{e_1} \Join \semd{e_2}$ is the join of two sets of mappings.  

It was shown by Fagin et al.\@~\cite{FKRV15} that $\VA$,
$\RGX^\algops$, and $\VA^\algops$ all express the same class of
spanners, called \emph{Regular
  Spanners}. In particular, every expression in $\RGX^\algops$, and
$\VA^\algops$ is equivalent to a VA. This will be used later
in Section~\ref{sec:spanners}.

\medskip
\noindent{\bf The enumeration problem.} In this paper, we study the
problem of enumerating all mappings in $\semd{\gamma}$, given a document spanner $\gamma$ (e.g. by means of a $\VA$) and
a document $d$. Given a language
$\LL$ for document spanners we define the main enumeration problem of
evaluating expressions from $\LL$ formally as follows:
\begin{center}
	\framebox{
		\begin{tabular}{rl}
			\textbf{Problem:} & $\ENUM[\LL]$\\
			\textbf{Input:} &  Expression $\gamma \in
                        \LL$ and  document $d$.  \\
			\textbf{Output:} & All mappings in
                        $\semd{\gamma}$ without \\ &repetitions. \\
		\end{tabular}
	}
\end{center}
As usual, we assume that the size $|R|$ of a $\RGX$ expression $R$ is
the number of alphabet symbols and operations, and the size $|\cA|$ of a VA $\cA$ is given by the number of transitions plus the number of states. 
Furthermore, the size $|e|$ of an expression $e$ in $\mathcal{L}^{\{\pi,\cup,\Join\}}$ (e.g $\RGX^\algops$) is given by $\sum_i |\alpha_i|$ where $\alpha_i$ are the expressions in $\mathcal{L}$ plus the number of operators (i.e. $\algops$) used in $e$.

\medskip
\noindent{\bf Enumeration with constant delay.} 
We use the definition of constant delay enumeration presented in~\cite{Segoufin13,segoufin2014glimpse,Segoufin15} adapted to $\ENUM[\LL]$. 
As it is standard in the literature~\cite{Segoufin13}, we consider enumeration algorithms over Random Access Machines (RAM) with addition and uniform cost measure~\cite{AhoHU74}. 
Given a language $\LL$ for document spanners, we say that an enumeration algorithm~$\mathcal{E}$ for $\ENUM[\LL]$ has constant delay if $\mathcal{E}$ runs in two phases over the input $\gamma \in \LL$ and~$d$.
\begin{compactitem}[-]
	\item The first phase (\emph{precomputation}) which does not produce output. 
	\item The second phase (\emph{enumeration}) which occurs
          immediately after the precomputation phase and enumerates
          all mappings in $\semd{\gamma}$ without repetitions. We
          require that the delay between the start of enumeration,
          between any two consecutive outputs, and between the last
          output and the end of this phase depend only on
          $|\gamma|$. A such, it is constant in $|d|$.
\end{compactitem}
We say that $\mathcal{E}$ is a \emph{constant delay algorithm} for $\ENUM[\LL]$ with precomputation phase $f(|\gamma|, |d|)$, if $\mathcal{E}$ has constant delay and the precomputation phase takes time $O(f(|\gamma|, |d|))$. We say that $\mathcal{E}$ features constant delay enumeration after linear time pre-processing if $f(|\gamma|, |d|)=g(|\gamma|)\cdot|d|$ for some function $g$. It is important to stress that the delay between consecutive outputs has to be constant, so we seek to reduce the precomputation time $f(|\gamma|, |d|)$ as much as possible.

% to  taking constant time between two consecutive writes in the output tape, including constant  time until the first register is written.
%Note that, in contrast to the definition given
%in~\cite{segoufin2014glimpse}, we restrict to have constant time
%between two write operations rather than constant time between two
%ouputs (e.g. mappings). 

%
%\cristian{PREVIOUS VERSION}
%In this paper, we use a slightly more strict definition of constant delay algorithm than the one presented in~\cite{segoufin2014glimpse}. 
%As it is standard in the literature, we consider enumeration algorithms over Random Access Machines (RAM) with addition and uniform cost measure~\cite{AhoHU74}, plus a write-once output tape (i.e. every register can be written at most once in the output). Then we say that an enumeration algorithm on a RAM model has constant delay if it consists of two phases. The first phase (called \emph{precomputation} phase) takes polynomial time in the size of the input and, further, nothing is output during this process. 
%The second step (called enumeration phase) enumerates the whole output using what was precomputed in the previous phase taking constant time between two consecutive writes in the output tape, including constant  time until the first register is written.
%Note that, in contrast to the definition given
%in~\cite{segoufin2014glimpse}, we restrict to have constant time
%between two write operations rather than constant time between two
%ouputs (e.g. mappings). 

%%% Local Variables: 
%%% mode: latex
%%% TeX-master: "../main/main"
%%% End: 

\section{Constant delay evaluation of extended Vset automata}\label{sec:algorithm}
%!TEX root = ../main/main.tex

In this section we present an algorithm featuring constant delay enumeration after linear pre-processing for a syntactic variant of $\VA$ that we call extended variable-set automata ($\EVA$ for short). This variant avoids several problems that $\VA$ have in terms of evaluation. Later, in Section~\ref{sec:spanners}, we show how this algorithm can be applied to ordinary $\VA$, $\RGX$ formulas, and spanner algebras. We start by introducing extended $\VA$.

\subsection{Extended variable-set automata}\label{ss:extended}

$\VA$ can open or close variables in arbitrary ways, which can lead to multiple runs that define the same output. An example of this is given in Figure \ref{fig:badauto}, where we have a functional $\VA$ (\fVA) that has two runs resulting in the same output (i.e. they produce a mapping that assigns the entire document both to $x$ and $y$). This is of course problematic for constant delay enumeration, as outputs must be enumerated without repetitions\footnote{As shown in \cite{FKRV15}, such behaviour also leads to a factorial blow-up when defining the join of two $\VA$, as all possible orders between variables need to be considered. See Section~\ref{sec:spanners} for further discussion.}. 

Ideally, when running a $\VA$ one would like to start by declaring which variable operations take place before reading the first letter of the input word, then process the letter itself, followed by another step declaring which variable operations take place after this, read the next letter, etc. Extended variable-set automata achieve this by allowing multiple variable operations to take place during a single transition, and by forcing each transition that manipulates variables to be followed by a transition processing a letter from the input word.

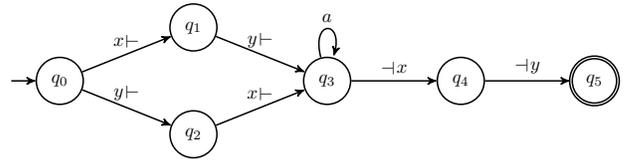
\begin{figure}
\begin{center}
 \resizebox{\columnwidth}{!}{
		\begin{tikzpicture}[->,>=stealth',auto, thick, scale = 1.0,initial text= {}]
 
		  % The graph
		  \node [state,initial] at (0,0) (q0) {$q_0$};
		  \node [state] at (2.5,1) (q1) {$q_1$};
		  \node [state] at (2.5,-1) (q2) {$q_2$};		  
		  \node [state] at (5,0) (q3) {$q_3$};
		  \node [state] at (7.5,0) (q4) {$q_4$};
		  \node [state,accepting] at (10,0) (q5) {$q_5$};
  
		  % Graph edges
		  \path[->] (q0) edge node[above] {$\Open{x}$} (q1);
		  \path[->] (q0) edge node[above] {$\Open{y}$} (q2);
		  \path[->] (q1) edge node[above] {$\Open{y}$} (q3);
		  \path[->] (q2) edge node[above] {$\Open{x}$} (q3);		  
		  \path[->] (q3) edge[loop above] node[above] {$a$} (q3);
		  \path[->] (q3) edge node[above] {$\Close{x}$} (q4);		  
		  \path[->] (q4) edge node[above] {$\Close{y}$} (q5);		  
		 \end{tikzpicture}
	}
\end{center}
\vspace*{-15pt}
\caption{A functional $\VA$ with multiple runs defining the same output mapping.}
\label{fig:badauto}
\end{figure}

Formally, let $\markers_{\VV} = \{\Open{x}, \Close{x} \mid x \in \VV \}$ be the set of open and close markers for all the variables in~$\VV$. An {\em extended variable-set automaton} (extended $\VA$, or $\EVA$) is a tuple $\cA = (Q, q_0, F, \delta)$, where $Q$, $q_0$, and $F$ are the same as for variable-set automata, and $\delta$ is the transition relation consisting of letter transitions $(q,a,q')$, or \emph{extended variable transitions} $(q, S, q')$, where $S \subseteq \markers_{\VV}$ and $S \neq \emptyset$. 
A run $\rho$ over a document $d = a_1 a_2 \cdots a_n$ is a sequence of the form:
\begin{equation}\label{eq:run}
  \rho \ = \ q_0 \ \trans{S_1} \ p_0 \ \trans{a_1} \ q_1 \ \trans{S_2} \ p_1 \ \trans{a_2} \ \ldots \ \ \trans{a_n} \ q_n \ \trans{S_{n+1}} p_n
\end{equation}
where every $S_i$ is a (possibly empty) set of markers, $(p_i, a_{i+1},
q_{i+1}) \in \delta$, and $(q_i, S_{i+1}, p_i) \in \delta$ whenever
$S_{i+1} \neq \emptyset$, and $q_i = p_i$ otherwise. Notice that
extended variable transitions and letter transitions must alternate in
a run of an \EVA, and that a transition with the $\emptyset$
of variable markers is only allowed when it stays in the same
position.

As in the setting of ordinary \VA, we say that a run $\rho$ is
\emph{valid} if variables are opened and closed in a correct manner:
the sets $S_i$ are pairwise disjoint; for every $i$ and every
$\Open{x}\, \in S_i$ there exists $j \geq i$ with $\Close{x}\, \in
S_j$; and, conversely, for every $j$ and every $\Close{x}\, \in S_j$
there exists $i \leq j$ with $\Open{x}\, \in S_i$.  For a valid run
$\rho$ we define the mapping $\mu^{\rho}$ that maps $x$ to $[i,
j\rangle \in \sub(d)$ if, and only if, $\Open{x}\, \in S_i$,
$\Close{x}\, \in S_j$ and $i \leq j$.  Also, we say that $\rho$ is
\emph{accepting} if $p_n \in F$.  Finally, the semantics of $\cA$ over
$d$, denoted by \(\semd{\cA}\) is defined as the set of all mappings
$\mu^{\rho}$ where $\rho$ is a valid and accepting run of $\cA$ over
$D$. We transfer the notion of being \emph{sequential} (\sEVA) and
\emph{functional} (\fEVA) from normal \VA\ to extended \VA\ in the obvious way.

An extended $\VA$ $\cA$ is \emph{deterministic} if the transition relation
$\delta$ of $A$ is a partial function $\delta: Q \times (\Sigma
\cup 2^{\markers_{\VV}} \backslash \{\emptyset\}) \rightarrow Q$. If
$\cA$ is deterministic, then we define $\markers_\delta(q)$ as the set $\{S
\subseteq \markers_{\VV} \mid (q, S) \in \dom(\delta)\}$. Note that,
in contrast to determinism for classical NFAs, determinism as defined
here does not imply that there is at most one run for each input
document~$d$. Instead, it implies that for every document $d$ and
every $\mu \in \semd{\cA}$, there is exactly one valid and accepting run $\rho$ with $\mu
= \mu^\rho$. In other words: there may still be many valid accepting runs on
a document $d$, but each such run defines a unique mapping.
%\cristian{Add an example here of an extended VA. }
For instance, we could convert the $\VA$ $\cA$ from Figure \ref{fig:badauto} into an equivalent $\EVA$ $\cA'$ by adding a transition $(q_0,\{\Open{x},\Open{y}\},q_3)$ to $\delta$, and removing the states $q_1$ and $q_2$, together with their associated transitions. It is easy to see that $\cA'$ is deterministic, so all accepting runs will define an unique mapping, thus avoiding the issues that $\cA$ has when considering the enumeration of output mappings.

The following results shows that $\EVA$ are indeed a natural variant of normal
\VA\ and that all $\EVA$ can be determinized.
\begin{theorem}\label{theo:equivalence}
For every \VA\ $\cA$ there exists an \EVA\ $\cA'$ such that $\cA \equiv \cA'$ and vice versa. Furthermore, if $\cA$ is sequential (resp. functional), then $\cA'$ is also sequential (resp. functional).
\end{theorem}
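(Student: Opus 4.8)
The plan is to prove both directions by direct construction. The easy direction is to translate an $\EVA$ into an ordinary $\VA$: given $\cA = (Q, q_0, F, \delta)$ an $\EVA$, I would simulate each extended variable transition $(q, S, q')$ by a chain of single-marker $\VA$ transitions that apply the markers in $S$ one at a time, in some fixed linear order, through fresh intermediate states. Reading letters is already a $\VA$ transition, so it is kept unchanged. Because the markers of a single extended transition all occur at the same position $i$ in an $\EVA$ run, the order I choose to emit them in the $\VA$ does not change which span each variable receives; hence $\mu^\rho$ is preserved and $\cA \equiv \cA'$. Validity is transferred directly (the disjointness and open-before-close conditions on the $S_i$ translate into the same conditions on the $\VA$ markers), and if the original run mentions all variables so does its image, giving preservation of sequentiality and functionality.

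The more delicate direction is from $\VA$ to $\EVA$, because a $\VA$ may perform variable operations and letter reads interleaved in an arbitrary way, whereas an $\EVA$ run must alternate a single (possibly empty) block of markers with a single letter read. First I would observe that in any $\VA$ run the markers occurring between two consecutive letter reads all happen at the same position, so they can be \emph{bundled} into one set $S$. The construction is therefore: contract every maximal chain of variable transitions $q \trans{\Open{x_1}} \cdots \trans{\Close{x_k}} q'$ of $\cA$ into a single extended transition $(q, S, q')$ with $S$ the set of markers along the chain, and keep letter transitions as they are. To force the strict alternation required by \eqref{eq:run} (a letter read must be preceded by a marker block, even an empty one), I would introduce $\varepsilon$-style empty-marker transitions where needed, using the clause ``$q_i = p_i$ when $S_{i+1}=\emptyset$'' in the definition of an $\EVA$ run. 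The span assigned to each variable is unchanged because bundling markers that already share a position cannot move the position at which $x$ is opened or closed, so $\mu^\rho$ is again preserved.

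The main obstacle is twofold. First, a single maximal variable-chain of $\cA$ might contain the \emph{same} marker more than once or might open and close the \emph{same} variable, and such a chain cannot be collapsed into one set $S$ because the sets $S_i$ in a valid $\EVA$ run must be pairwise disjoint and each marker appears at most once. I would handle this by arguing that on any \emph{valid} $\VA$ run each variable is opened at most once and closed at most once, so along a valid run no marker is repeated within a chain; for the equivalence of semantics only valid accepting runs matter, so chains that would violate disjointness correspond to runs that are discarded anyway, and I can simply refrain from creating the offending extended transition (or route it to a dead sink). Second, I must ensure the bijection between valid accepting runs of $\cA$ and of $\cA'$ is genuine so that no output mapping is lost or duplicated; this requires checking that the contraction does not merge two runs that produced the same marker sequence but via different state paths, and that every $\EVA$ run decontracts to a unique $\VA$ run. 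Both points are verified by exhibiting the inverse of the first (easy) direction's translation and checking it is a two-sided inverse on valid runs. Once this run-level bijection is in place, preservation of sequentiality and functionality follows since the bijection preserves the set of variables mentioned and whether the final state lies in $F$.
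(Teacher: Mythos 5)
Your overall strategy coincides with the paper's: both directions are handled by direct constructions, bundling blocks of single-marker transitions into extended transitions for the $\VA$-to-$\EVA$ direction, and unbundling each extended transition into a chain of fresh intermediate states in a fixed marker order for the converse. However, your $\VA$-to-$\EVA$ construction has a genuine gap: you contract only \emph{maximal} chains of variable transitions of $\cA$. This loses runs. A state in the middle of a variable chain may also have an outgoing letter transition, and a valid run of $\cA$ may take only a \emph{prefix} of the chain and then read a letter; the extended transition corresponding to that proper sub-chain is never created by your construction. Concretely, if $\cA$ has $q_1 \trans{\Open{x}} q_2 \trans{\Close{x}} q_3$ and $q_2$ also has an outgoing $a$-transition, then a run that uses only $\Open{x}$ and then reads $a$ from $q_2$ has no counterpart in your $\EVA$, since only $(q_1,\{\Open{x},\Close{x}\},q_3)$ is created. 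The paper avoids this by adding an extended transition $(p,S,q)$ for \emph{every} variable-path with pairwise distinct markers between \emph{every} pair of states $p,q$, not only the maximal ones. Your handling of repeated markers (dropping the offending maximal chain, or sending it to a dead sink) compounds the problem: a repeated marker in a maximal chain would then also erase its legitimate repeat-free sub-chains, which the paper's per-path formulation keeps automatically.

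Two smaller points. In the $\EVA$-to-$\VA$ direction, ``some fixed linear order'' on the markers of $S$ is not enough: when a set $S$ contains both $\Open{x}$ and $\Close{x}$ (a variable capturing the empty span at that position, which validity permits since it only requires the closing index to be $\geq$ the opening index), the decomposition must emit $\Open{x}$ before $\Close{x}$; the paper makes all opening markers precede all closing markers precisely for this reason. Finally, your plan to establish a \emph{bijection} between valid accepting runs of $\cA$ and $\cA'$ is both unnecessary and unattainable: distinct $\VA$ runs that permute the markers inside one block, possibly through different intermediate states (as in Figure~\ref{fig:badauto}), collapse to the same $\EVA$ run, so the contraction is many-to-one and your proposed two-sided inverse does not exist. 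For the theorem one only needs that each construction preserves the \emph{set} of output mappings in both directions, which is how the paper argues equivalence as well as preservation of sequentiality and functionality.
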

\begin{proposition}\label{prop:determinization}
For every \EVA\ $\cA$ there exists a deterministic 
\EVA\ $\cA'$  such that $\cA \equiv \cA'$. %Furthermore, the size of $\cA'$ is at most exponential in the size of $\cA$.
\end{proposition}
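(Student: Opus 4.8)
The plan is to adapt the classical subset construction, but carried out over the \emph{extended} transition alphabet $\Sigma \cup (2^{\markers_{\VV}} \setminus \{\emptyset\})$ rather than over $\Sigma$ alone. Concretely, I would set $\cA' = (2^Q, \{q_0\}, F', \delta')$, where a subset $P \subseteq Q$ is declared final whenever $P \cap F \neq \emptyset$, and where $\delta'$ is defined on both kinds of symbols by the usual reachability rule: $\delta'(P,a) = \{q' \mid \exists q \in P,\ (q,a,q') \in \delta\}$ for each letter $a$, and $\delta'(P,S) = \{q' \mid \exists q \in P,\ (q,S,q') \in \delta\}$ for each nonempty marker set $S$. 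Since $\delta'$ returns a single subset for each pair $(P, \cdot)$, it is a partial function, so $\cA'$ is deterministic by definition. This construction deliberately does \emph{not} merge the outgoing marker transitions of a state: distinct marker sets $S \neq S'$ induce distinct $\delta'$-transitions, which is exactly what keeps the several runs of the automaton (and hence its several output mappings) alive. This is consistent with the fact, stressed just after the definition of determinism, that a deterministic $\EVA$ may still admit many accepting runs on a single document.

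The correctness argument rests on the bijection, for a document $d$ of length $n$, between the mappings over $d$ and the marker-set sequences $(S_1, \ldots, S_{n+1})$ that a run can produce: a mapping $\mu$ records, at each position $i$, exactly the set of variables it opens or closes there, and conversely this sequence reconstructs $\mu$. Hence a run's mapping is entirely determined by its marker-set sequence, and two runs yield the same mapping precisely when they carry the same marker sets. Given this, I would prove $\semd{\cA} = \semd{\cA'}$ in the two standard directions. For the forward inclusion, any valid accepting run of $\cA$ on $d$ with marker sets $(S_i)$ induces the run of $\cA'$ reading the same letters and the same $(S_i)$; an easy induction shows each $\cA'$-subset contains the corresponding $\cA$-state, so the final subset meets $F$ and the run is accepting, it is valid because validity is a property of $(S_i)$ alone, and it defines the same mapping. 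For the reverse inclusion, a valid accepting run of $\cA'$ ends in a subset $P$ with $P \cap F \neq \emptyset$; picking $r \in P \cap F$ and back-tracing through the subsets (routine for subset constructions) recovers a run of $\cA$ with the same marker sets ending in $r$, which is again valid and accepting and defines the same mapping.

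The one point that needs care, rather than a genuine obstacle, is the treatment of the empty marker set together with the alternation discipline of $\EVA$ runs. Because a run may take $S_i = \emptyset$ at a position (staying in place before reading $a_i$), the subset for that position is simply left unchanged and the letter transition $\delta'(P, a_i)$ is applied next; a state $q \in P$ that would have required a nonempty marker transition to proceed then contributes nothing to $\delta'(P, a_i)$ and harmlessly drops out of that branch, while being correctly explored in the branch that commits to the appropriate nonempty $S$. Since an $\EVA$ takes at most one marker transition between consecutive letters, there are no chains of marker transitions to close off, so no $\varepsilon$-closure is needed and the alternation of $\cA'$-runs mirrors that of $\cA$-runs step for step. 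Finally, determinism of $\cA'$ together with the mapping/marker-set bijection yields that every $\mu \in \semd{\cA'}$ arises from exactly one valid accepting run, matching the intended reading of determinism for $\EVA$.
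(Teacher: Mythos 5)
Your proposal is correct and follows essentially the same route as the paper: the classical subset construction $\cA' = (2^Q, \{q_0\}, F', \delta')$ with $F' = \{P \mid P \cap F \neq \emptyset\}$ and the reachability rule for $\delta'$, applied uniformly to letters and nonempty marker sets, with correctness argued by translating runs in both directions. Your additional care about the empty marker set, the alternation discipline, and the observation that a run's mapping is determined by its marker-set sequence only makes explicit what the paper's terser argument leaves implicit.
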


In Section~\ref{sec:spanners} we will study in detail the complexity of these translations; to present our algorithm we only require equivalence between the models.
 % \stijn{Should we not add here that more insight into the complexity
 %   of both conversions will be given in Section 4? At current, the
 %   reader will be wondering about this.}

%\cristian{Here we should be more precise with the size of the automaton with respect to the transitions (both in Theorem~\ref{theo:equivalence} and Proposition~\ref{prop:determinization}). This is also used later in Section 4. Maybe, we can state both results as it is now, and refine the size of the automaton later in Section 4.}

\subsection{Constant delay evaluation algorithm}\label{ss:cda}
%\begin{itemize}
%\item start with Stijn's intuitive description
%\item The algorithm
%\item Example
%\item Sketch of the correctness proof (explain why each step does what it does, and why is it correct and can be implemented as we claim)
%\item Counting the number of solutions is in PTIME
%\end{itemize}

The objective of this section is to describe an algorithm that takes as input
a {\em deterministic and sequential} $\EVA$ $\cA$ (deterministic $\sEVA$ for short)
and a document $d$, and enumerates the set $\semd{\cA}$ with a constant
delay after pre-processing time $O(|\cA|\times |d|)$.  We start with an
intuitive explanation of the algorithm's underlying idea, and then
give the full algorithm.

\subsubsection{Intuition} 
\label{sec:intuition}
%\medskip
%\noindent{\bf Intuition.}
As with the majority of constant delay algorithms, in the
pre-processing step we build a compact representation of the output
that is used later in the enumeration step. In our case, we build a
directed acyclic graph (\dag) that can then be traversed in a
depth-first manner to enumerate all the output mappings. This \dag
will encode all the runs of $\cA$ over $d$, and its construction can
be summarized as follows:
\begin{itemize}\itemsep=0pt
\item Convert the input word $d$ into a deterministic extended $\VA$ $\cA_d$;
\item Build the product between $\cA$ and $\cA_d$, and annotate the variable transitions with the position of $d$ where they take place;
\item Replace all the letters in the transitions of $\cA\times \cA_d$ with $\varepsilon$, and construct the ``forward" $\varepsilon$-closure of the resulting graph.
\end{itemize}

\begin{figure}[t]
\begin{center}
 \resizebox{\linewidth}{!}{
		\begin{tikzpicture}[->,>=stealth',auto, thick, scale = 1.0, initial text= {}]
 
		  % The graph
		  \node [state,initial] at (0,0) (q0) {$q_0$};
		  \node [state] at (2,1) (q1) {$q_1$};
		  \node [state] at (4,1) (q4) {$q_4$};
		  \node [state] at (6,1) (q6) {$q_6$};
		  \node [state] at (2,-1) (q2) {$q_2$};
		  \node [state] at (4,-1) (q5) {$q_5$};
		  \node [state] at (6,-1) (q7) {$q_7$};
		  \node [state] at (8,0) (q8) {$q_8$};
		  \node [state] at (5,-3) (q3) {$q_3$};
		  \node [state,accepting] at (10,0) (q9) {$q_9$};

		  % Graph edges
		  \path[->]
		  (q3) edge[loop above] node[above] {$a,b$}   (q3);
		  \path[->] (q0) edge node[above] {$\Open{x}$} (q1);
		  \path[->] (q0) edge node[above] {$\Open{y}$} (q2);
		  \path[->] (q0) edge[bend right] node[right = 5pt] {$\Open{x},\Open{y}$} (q3);
		  \path[->] (q3) edge[bend right] node[left = 5pt] {$\Close{x},\Close{y}$} (q9);
		  \path[->] (q1) edge node[above] {$a$} (q4);
		  \path[->] (q2) edge node[above] {$a$} (q5);
		  \path[->] (q6) edge node[above] {$b$} (q8);
		  \path[->] (q7) edge node[above] {$b$} (q8);
		  \path[->] (q4) edge node[above] {$\Open{y}$} (q6);
		  \path[->] (q1) edge node[above] {$a$} (q4);
		  \path[->] (q5) edge node[above] {$\Open{x}$} (q7);
		  \path[->] (q8) edge node[above=0.2] {$\Close{x},\Close{y}$} (q9);
		 \end{tikzpicture}
	}
\end{center}
\caption{An extended functional $\VA$ $\cA$.}
\label{fig-automaton}
\end{figure}

We first illustrate
how this construction works by means of an example. For this, consider
the $\EVA$ $\cA$ from Figure \ref{fig-automaton}. It
is straightforward to check that this automaton is functional (hence sequential) and
deterministic. To evaluate $\cA$ over document $d=ab$ we first convert the input document $d$ into an $\EVA$ $\cA_d$ that represents all possible ways of assigning spans
over $d$ to the variables of $\cA$. The automaton $\cA_d$ is a chain
of $|d|+1$ states linked by the transitions that spell out the word $d$. That is, $\cA_d$ has the states $p_1,\ldots ,p_{|d|+1}$, and
letter transitions $(p_i,d_{i},p_{i+1})$, with $i=1\ldots |d|$, and
where $d_i$ is the $i$th symbol of $d$. 
Furthermore, each state $p_i$ has $2^{|\Var(\cA)|-1}$ self loops, each
labelled by a different non-empty subset of $\markers_{\Var(A)}$. For
instance, in the case of $d=ab$, the automaton $\cA_d$ is the following:

\begin{center}
\resizebox{0.6\linewidth}{!}{
		\begin{tikzpicture}[->,>=stealth',auto, thick, scale = 1.0, initial text= {}]
 
		  % The graph
		  \node [state,initial] at (0,0) (p0) {$p_1$};
		  \node [state] at (2,0) (p1) {$p_2$};
		  \node [state, accepting] at (4,0) (p2) {$p_3$};
  
		  % Graph edges
		  \path[->] (p0) edge node[above] {$a$} (p1);
		  \path[->] (p1) edge node[above] {$b$} (p2);

		  \path[->] (p0) edge[out=160, in=100, distance=1.5cm] node[above] {$\{\Open{x}\}$}   (p0);

		  \path[->] (p0) edge[out=20, in=80, distance=1.5cm] node[above] {$\{\Open{y}\}$}   (p0);

		  \path[->] (p0) edge[out=240, in=190, distance=1.5cm]
                  node[below] {$\{\Close{x}, \Open{y}\}$}   (p0);

		  \path[->] (p0) edge[out=340, in=280, distance=1.5cm]
                  node[right] {$\dots$}   (p0);

                  \path[->] (p1) edge[out=120, in=60, distance=1.5cm] node[above] {$\dots$}   (p1);

                  \path[->] (p2) edge[out=120, in=60, distance=1.5cm] node[above] {$\dots$}   (p2);
		 \end{tikzpicture}
}
\end{center}
\vspace{-2ex}
Next, we build the product automaton $\cA \times \cA_d$ in the
standard way (i.e. by treating variable transitions as letters and
applying the NFA product construction). During construction, we take
care to only create product states of the form $(q, p)$ that are
reachable from the initial product state $(q_0,p_1)$. In addition, we annotate
the variable transitions of this automaton with the position in $d$
where the particular transition is applied. For this, we use the fact
that $\cA_d$ is a chain of states, so in the product $\cA\times
\cA_d$, each variable transition is of the form
$((q,p_i),S,(q',p_i))$. We therefore annotate the set $S$ with the
number $i$. We depict the resulting annotated product automaton for
$\cA$ and $d=ab$ in Figure~\ref{fig:numbered} (top).

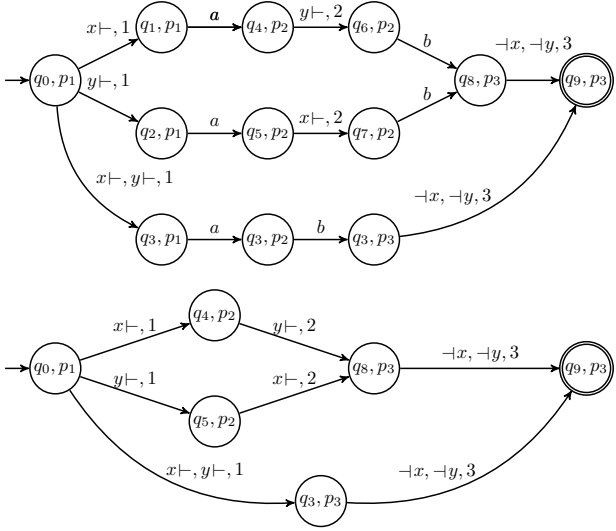
\begin{figure}[t]
\begin{center}
 \resizebox{\linewidth}{!}{
		\begin{tikzpicture}[->,>=stealth',auto, thick, scale = 1.0, initial text={},every state/.style={inner sep=0.5mm}]
 
		  % The graph
		  \node [state,initial] at (0,0) (q0) {$q_0, p_1$};
		  \node [state] at (2,1) (q1) {$q_1, p_1$};
		  \node [state] at (4,1) (q4) {$q_4, p_2$};
		  \node [state] at (6,1) (q6) {$q_6, p_2$};
		  \node [state] at (2,-1) (q2) {$q_2, p_1$};
		  \node [state] at (4,-1) (q5) {$q_5, p_2$};
		  \node [state] at (6,-1) (q7) {$q_7, p_2$};
		  \node [state] at (8,0) (q8) {$q_8, p_3$};
		  \node [state] at (2,-3) (q3) {$q_3, p_1$};
		  \node [state] at (4,-3) (q3p1) {$q_3, p_2$};
		  \node [state] at (6,-3) (q3p2) {$q_3, p_3$};
		  \node [state,accepting] at (10,0) (q9) {$q_9,p_3$};
  
		  % Graph edges
		  \path[->] (q0) edge node[above=0.2] {$\Open{x}, 1$} (q1);
		  \path[->] (q0) edge node[above=0.2] {$\Open{y}, 1$} (q2);
		  \path[->] (q0) edge[bend right] node[right = 5pt] {$\Open{x},\Open{y},1$} (q3);
                  \path[->] (q3) edge node[above] {$a$} (q3p1);
                  \path[->] (q3p1) edge node[above] {$b$} (q3p2);
		  \path[->] (q3p2) edge[bend right] node[left = 5pt]
                  {$\Close{x},\Close{y},3$} (q9);
		  \path[->] (q1) edge node[above] {$a$} (q4);
		  \path[->] (q2) edge node[above] {$a$} (q5);
		  \path[->] (q6) edge node[above] {$b$} (q8);
		  \path[->] (q7) edge node[above] {$b$} (q8);
		  \path[->] (q4) edge node[above] {$\Open{y},2$} (q6);
		  \path[->] (q1) edge node[above] {$a$} (q4);
		  \path[->] (q5) edge node[above] {$\Open{x}, 2$} (q7);
		  \path[->] (q8) edge node[above=0.4]
                  {$\Close{x},\Close{y}, 3$} (q9);
		 \end{tikzpicture}
	}
\end{center}
\vspace{-5ex}
\begin{center}
 \resizebox{\linewidth}{!}{
		\begin{tikzpicture}[->,>=stealth',auto, thick, scale = 1.0, initial text= {},every state/.style={inner sep=0.5mm}]
 
		  % The graph
		  \node [state,initial] at (0,0) (q0) {$q_0, p_1$};
		  \node [state] at (3,1) (q1) {$q_4, p_2$};

		  \node [state] at (6,0) (q6) {$q_8, p_3$};
		  \node [state] at (3,-1) (q2) {$q_5, p_2$};

		  \node [state] at (5,-2.5) (q3) {$q_3, p_3$};
		  \node [state,accepting] at (10,0) (q9) {$q_9,p_3$};
  
		  % Graph edges
		  \path[->] (q0) edge node[above] {$\Open{x}, 1$} (q1);
		  \path[->] (q0) edge node[above] {$\Open{y}, 1$} (q2);
		  \path[->] (q1) edge node[above] {$\Open{y}, 2$} (q6);
		  \path[->] (q2) edge node[above] {$\Open{x}, 2$} (q6);		  
		  \path[->] (q6) edge node[above] {$\Close{x},\Close{y}, 3$} (q9);		  		  
		  \path[->] (q0) edge[bend right] node[above right = -5pt] {$\Open{x},\Open{y},1$} (q3);
		  \path[->] (q3) edge[bend right] node[above left = -5pt] {$\Close{x},\Close{y},3$} (q9);
%		  \path[->] (q1) edge node[above] {$\Open{y},1$} (q6);
%		  \path[->] (q2) edge node[above] {$\Open{x}, 1$} (q7);
%
%                  \path[->] (q6) edge node[above]
%                  {$\Close{x},\Close{y}, 2$} (q9);
%                  \path[->] (q7) edge node[above]
%                  {$\Close{x},\Close{y}, 2$} (q9);
		 \end{tikzpicture}
	}
\end{center}
\vspace{-2ex}
\caption{The annotated product automaton (top) and its
  ``forward''$\varepsilon$-closure (bottom).}% Solid lines represent the transitions before applying the $\varepsilon$-closure, and dashed lines the transitions added during the $\varepsilon$-closure.}
\label{fig:numbered}
\end{figure}

% \begin{figure}[t]
% \begin{center}
%  \resizebox{\linewidth}{!}{
% 		\begin{tikzpicture}[->,>=stealth',auto, thick, scale = 1.0, initial text= {},every state/.style={inner sep=0.5mm}]
 
% 		  % The graph
% 		  \node [state,initial] at (0,0) (q0) {$q_0, p_1$};
% 		  \node [state] at (3,1) (q1) {$q_4, p_2$};

% 		  \node [state] at (6,0) (q6) {$q_8, p_3$};
% 		  \node [state] at (3,-1) (q2) {$q_5, p_2$};

% 		  \node [state] at (5,-2.5) (q3) {$q_3, p_3$};
% 		  \node [state,accepting] at (10,0) (q9) {$q_9,p_3$};
  
% 		  % Graph edges
% 		  \path[->] (q0) edge node[above] {$\Open{x}, 1$} (q1);
% 		  \path[->] (q0) edge node[above] {$\Open{y}, 1$} (q2);
% 		  \path[->] (q1) edge node[above] {$\Open{y}, 2$} (q6);
% 		  \path[->] (q2) edge node[above] {$\Open{x}, 2$} (q6);		  
% 		  \path[->] (q6) edge node[above] {$\Close{x},\Close{y}, 3$} (q9);		  		  
% 		  \path[->] (q0) edge[bend right] node[above right = -5pt] {$\Open{x},\Open{y},1$} (q3);
% 		  \path[->] (q3) edge[bend right] node[above left = -5pt] {$\Close{x},\Close{y},3$} (q9);
% %		  \path[->] (q1) edge node[above] {$\Open{y},1$} (q6);
% %		  \path[->] (q2) edge node[above] {$\Open{x}, 1$} (q7);
% %
% %                  \path[->] (q6) edge node[above]
% %                  {$\Close{x},\Close{y}, 2$} (q9);
% %                  \path[->] (q7) edge node[above]
% %                  {$\Close{x},\Close{y}, 2$} (q9);
% 		 \end{tikzpicture}
% 	}
% \end{center}
% \caption{The ``forward'' $\varepsilon$-closure of $\cA\times \cA_d$. }
% \label{fig:epsilon-closure}
% \end{figure}
In the final step, we replace all letter transitions with
$\varepsilon$-transitions and compute what we call the ``forward''
$\varepsilon$-closure. This is done by considering each variable
transition $((q,p),(S,i),(q',p'))$ of the annotated product automaton, and then computing all the states $(r,s)$ such that
one can reach $(r,s)$ from $(q',p')$ using only $\varepsilon$ transitions. We then add an annotated variable transition $((q,p),(S,i),(r,s))$ to the
automaton. For instance, for the product automaton at the top of Figure
\ref{fig:numbered}, we would add a transition $((q_0,p_1),
(\Open{x},1),(q_4,p_2))$, due to the fact that we can reach $(q_1,p_1)$
from $(q_0,p_1)$ using $(\Open{x},1)$, and we can reach $(q_4,p_2)$ from
$(q_1,p_1)$ using $\varepsilon$ (which replaced $a$). We repeat this
procedure for all the variable transitions of $\cA\times \cA_d$, and
the newly added transitions, until no new transition can be
generated. In the end, we simply erase all the $\varepsilon$
transition from the resulting automaton. An example of this process
for the automaton $\cA$ of Figure~\ref{fig-automaton} and the document
$d=ab$ is given at the bottom of Figure \ref{fig:numbered}.

%In the final step, we replace all letter transitions with
%$\varepsilon$-transitions and compute what we call the ``forward''
%$\varepsilon$-closure. This is done by considering each variable
%transition $(q,S,q')$, and then computing all the states $p$ such that
%one can reach $p$ from $q'$ using only $\varepsilon$ transitions, and
%such that $p$ itself has at least one outgoing variable
%transition. We then add a variable transition $(q,S,p)$ to the
%automaton. For instance, for the product automaton at the top of Figure
%\ref{fig:numbered}, we would add a transition $((q_0,p_1),
%\Open{x},(q_4,p_2))$, due to the fact that we can reach $(q_1,p_1)$
%from $(q_0,p_1)$ using $\Open{x}$, and we can reach $(q_4,p_2)$ from
%$(q_1,p_1)$ using $\varepsilon$ (which replaced $a$). We repeat this
%procedure for all the variable transitions of $\cA\times \cA_d$, and
%the newly added transitions, until no new transition can be
%generated. In the end, we simply erase all the $\varepsilon$
%transition from the resulting automaton. An example of this process
%for the automaton $\cA$ of Figure~\ref{fig-automaton} and the document
%$d=ab$ is given at the bottom of Figure \ref{fig:numbered}.

From the resulting \dag we can now
easily enumerate $\semd{\cA}$. For this, we simply start from the final
state, and do a depth-first traversal taking all the edges
backwards. Every time we reach the initial state, we will have the
complete information necessary to construct one of the output
mappings. For example, starting from the accepting state and moving
backwards to $(q_3,p_3)$, and then again to the initial state. From
the labels along this run we can then reconstruct the mapping $\mu$
with $\mu(x)=\mu(y)=[1,3\rangle$.

Since $\cA$ and $\cA_d$ are deterministic, we will never output the same mapping twice. Also, note that the time for generating each output is bounded by the number of variables in $\cA$, and therefore the delay between outputs depends only on $|\cA|$ (and is constant in the document).

\subsubsection{The algorithm}  
%\medskip
%\noindent{\bf The algorithm.}
While the previous construction works
correctly, there is no need to perform the three construction phases
separately in a practical implementation. In fact, by a clever merge
of the three construction steps we can avoid materializing
$\cA_d$ and $\cA \times \cA_d$ altogether. The result is a succinct,
optimized, and easily-implementable algorithm that we describe next.

There are two main differences with the construction described above
and our algorithm. First, the algorithm never materializes $\cA_d$,
nor the product $\cA \times \cA_d$. Rather, it
\emph{traverses} this product automaton on-the-fly by processing the
input document one letter at a time. Second, the algorithm does not
construct the $\varepsilon$-closure itself, but its \emph{reverse
  dual}. That is, the resulting \dag has the edge labels of the
$\varepsilon$-closure as nodes and there is an edge from $(T,j)
\rightarrow (S,i)$ in the reverse dual if we had $(q,p)
\xrightarrow{(S,i)} (q',p') \xrightarrow{(T,j)} (q'', p'')$ in the
$\varepsilon$-closure for some product states $(q,p), (q',p')$, and
$(q'', p'')$. 
%To illustrate, the dashed arrows in Figure~\ref{fig-dag} show the reverse dual of the $\varepsilon$-closure shown in Figure~\ref{fig:epsilon-closure}. 

The algorithm builds the reverse dual \dag incrementally by processing
$d$ one letter at a time. In order to do this, it tracks at every
position $i$ ($1 \leq i \leq |d|)$ the states of $\cA$ that are
\emph{live}, i.e., the states $q \in Q$ such that there exists at least
one run of $\cA$, on the prefix $d(1,i)$ of $d$ that ends in $q$. For
each such state, the algorithm keeps track of the nodes in the reverse
dual that represent the last variable transitions taken by runs ending
in $q$. When appropriate, new nodes are added to the reverse dual
based on this information.

The different procedures that comprise the evaluation algorithm are
given in Algorithms~\ref{alg:ma-eval} and~\ref{alg:output}. In
particular, the procedure \textsc{Evaluate} shown in
Algorithm~\ref{alg:ma-eval} takes a deterministic and sequential $\EVA$ $\cA$ and a document $d = a_1 \ldots a_n$ as
input, and creates the reverse dual \dag that encodes all the runs of
$\cA$ over $d$. The procedure \textsc{Enumerate} shown in
Algorithm~\ref{alg:output} enumerates all the resulting mappings.
Before discussing these procedures in detail, we need to elaborate on
the data structures used.

\smallskip
\noindent{\bf Data structures.} We store the reverse dual \dag by
using the adjacency list representation. Each node $n$ in this \dag
is a pair $((S, i), l)$ where $S \subseteq \markers_{\VV}$, $i
\in \bbN$, and a $l$ is the list of nodes to which $n$ has outgoing
edges. Given a node $n$, the method $n.{\tt content}$ retrieves the
pair $(S,i)$ while the method $n.{\tt list}$ retrieves the adjacency
list $l$. A special node, denoted by $\bot$ will be used as the sink
node (playing the same role as the initial state of $\cA\times \cA_d$).

%\stijn{Here is a paragraph that I believe makes explicit the claims  wrt the running times of the different operations. It may be overly  detailed, but avoids confusion (I think). If you want to rever to  the old text, it is in comment.}

The algorithm makes extensive use of list operations. Lists are
represented as a pair $(s,e)$ of pointers to the start and end elements
in a singly linked list of elements. Elements are created and
never modified. The only exception to this is an element whose {\tt
  next} pointer is {\tt null}. Such an element may have its {\tt next}
pointer  updated, but only once. Lists are endowed with six
methods: {\tt begin}, {\tt next}, {\tt atEnd}, {\tt add}, {\tt
  lazycopy}, and {\tt append}. The first three methods {\tt begin},
{\tt next}, and {\tt atEnd} are standard methods for iterating through
a list.  Specifically, {\tt begin} starts the iteration from the  beginning (i.e. it locates the position {\em before} the first node),  {\tt next} gives the next node in the list, and {\tt atEnd} tells  whether the iteration is at the end or not.
The last three methods {\tt add}, {\tt lazycopy} and {\tt append} are
methods for modifying or extending a list $l = (s,e)$. {\tt add}
receives a node $n$ and inserts $n$ at the beginning of $l$ (i.e., it
creates a new element whose payload is $n$ and whose {\tt next}
pointer is $s$, and updates $l := (s',e)$ with $s'$ pointing to this
new element). {\tt lazycopy} makes a lazy copy of $l$ by returning a
copy of the pair $(s,e)$. This copy is not updated on operations to
$l$ (such as, {\tt add}, which would modify $s$).  {\tt append}
receives another list $l' = (s', e')$ and appends $l'$ at the end of
$l=(s,e)$ by updating the {\tt next} pointer of $e$ to $s'$ and subsequently
updating $l$ to $(s, e')$. Note that all of these operations are
clearly $O(1)$ operations.

\begin{algorithm}[t]
%	\caption{Evaluate $A = (Q, q_0, F, \delta)$ over a word $a_1 \ldots a_n$}\label{alg:ma-eval}
	\caption{Evaluate $\cA$ over the document  $a_1 \ldots a_n$}\label{alg:ma-eval}
	\begin{algorithmic}[1]
		\Procedure{Evaluate}{$\cA$, $a_1 \ldots a_n$}
			\ForAll{$q \in Q \setminus \{q_0\}$}
				\State $\plist_q \gets \epsilon$
			\EndFor
			\State $\plist_{q_0} \gets [\bot]$

			\For{$i := 1$ to $n$}
				\State $\textsc{Capturing}(i)$			
				\State $\textsc{Reading}(i)$
	
			\EndFor
			
			\State $\textsc{Capturing}(n+1)$

			\State $\textsc{Enumerate}(\{\plist_q\}_{q \in Q}, F)$
		\EndProcedure
		
		\medskip
		
		\Procedure{Capturing}{$i$}
			\ForAll{$q \in Q$}
				\State $\plistold_q \gets \plist_q \!.{\tt lazycopy}$
			\EndFor
			\ForAll{$q \in Q$ \textbf{with} $\plistold_q \neq \epsilon$}  					\ForAll{$S \in \markers_\delta(q)$}
					\State $\newnode \gets {\tt Node}((S,i),\plistold_q)$ \label{alg-node-creation}
					\State $p \gets \delta(q,S)$
					\State $\plist_{p}\!.{\tt add}(\newnode)$
				\EndFor
			\EndFor
		\EndProcedure
		
		\medskip
		
		\Procedure{Reading}{$i$}
		
			\ForAll{$q \in Q$}
				\State $\plistold_q \gets \plist_q$
				\State $\plist_q \gets \epsilon$
			\EndFor
			
			\ForAll{$q \in Q$ \textbf{with} $\plistold_q \neq \epsilon$}
				\State $p \gets \delta(q,a_i)$
				
				\State $\plist_{p}\!.{\tt append}(\plistold_{q})$
			\EndFor
		\EndProcedure
	\end{algorithmic}
\end{algorithm}

\begin{algorithm}[t]
	\caption{Enumerate all mappings}\label{alg:output}
	\begin{algorithmic}[1]
		\Procedure{{Enumerate}}{$\{\plist_q\}_{q \in Q}$, $F$}
		\ForAll{$q \in F$ \textbf{with} $\plist_q \neq \epsilon$}
		\State ${\tt EnumAll}(\plist_q,\epsilon)$
		\EndFor
		\EndProcedure
		
		\smallskip
		
		\Procedure{{EnumAll}}{$\plist, \newmap$}
		
		\State $\plist\!.{\tt begin}$
		
		\While{$\plist\!.{\tt atEnd} = {\tt false}$}
				\State $\newnode \gets \plist\!.{\tt next}$
				
				\If{$\newnode = \bot$}
					\State ${\tt Output}(\newmap)$
				\Else
					\State $(S, i) \gets \newnode.{\tt content}$
					\State $\textsc{EnumAll}(\newnode.{\tt list}, \,(S, i) \cdot \newmap)$
				\EndIf			
		\EndWhile
		\EndProcedure
	\end{algorithmic}
\end{algorithm}

\smallskip
\noindent{\bf Evaluation.} 
The procedure $\textsc{Evaluate}$ maintains a list $\plist_q$ of
nodes, for every state $q$ of $\cA$. If $\plist_q$ is empty, then $q$
is not live for the current letter position. Otherwise, $q$ is live
and $\plist_q$ contains the nodes in the reverse dual \dag that
represent the last variable transitions taken by runs of $\cA$ on the
current prefix that end in $q$. Initially, $\plist_q$ is empty for
every state $q$ except the initial state $q_0$, which is initialized
to the singleton list containing the special sink node
$\bot$. \textsc{Evaluate} then alternates between calls to
$\textsc{Capturing}(i)$ and $\textsc{Reading}(i)$, where $i$ is a
letter position in $d$ (recall that all the runs of an extended automata alternate between variable and
letter transitions and start with a variable transition,
cf. \eqref{eq:run}). $\textsc{Capturing}(i)$ simulates the variable
transitions that $\cA$ does immediately before reading the letter
$a_i$, and modifies the reverse dual \dag accordingly. Similarly, $\textsc{Reading}(i)$ simulates what
$\cA$ does when reading the letter $a_i$ of the input. Finally,
$\textsc{Capturing}(n+1)$ simulates the last variable transition of
$\cA$.

In $\textsc{Capturing}(i)$ we first make a lazy copy of all the
lists. We then try to extend the runs of $\cA$ from each state $q$
that was live at position $i-1$ (i.e., $\plistold_q\neq \epsilon$) by
executing a variable transition. If we can do this (i.e. there is a
transition of the form $(q,S,p)$ in $\cA$), we create a new node $n$
labeled by $(S,i)$ that has an edge to each node in
$\plist_q^{old}$. Finally, we add $n$ to the beginning of the list
$\plist_p$, thus recording that $\cA$ can be in state $p$ after
executing the $i$th variable transition. Notice that it is possible
that two transitions enter the same state $p$ (like the transitions
reaching the accepting state in Figure \ref{fig-automaton}). To
accommodate for this, our algorithm adds the new node at the beginning
of the list, so by traversing the entire list we get the information
about all the runs.

It is important to note that in $\textsc{Capturing}(i)$ we do not
overwrite the lists $\plist_q$ that were created in
$\textsc{Reading}(i-1)$ for $i > 1$. This is necessary to correctly keep track of the situation in which no transition using variable markers was triggered in $\textsc{Capturing}(i)$ (i.e. when $S=\emptyset$ in our run). On a run of a sequential extended variable-set automaton this can happen for instance when we have self loops (as in e.g. state $q_3$ in Figure \ref{fig-automaton}). This way, the list $\plist_q$ is kept for the next iteration; i.e. $\textsc{Reading}(i)$ can again continue from $q$ since no variable markers were used in between.

In $\textsc{Reading}(i)$ we simulate what happens when $\cA$ reads the
letter $a_i$ of the input document by updating the lists of the states
that $\cA$ reaches in this transition. That is, we first mark all
lists as ``old'' lists, and then set $\plist_q$ to empty. Then for
each live state $q$ (i.e., $\plist_q^{old}\neq \epsilon$, hence $\cA$
was in $q$ immediately before reading $a_i$), and the transitions of
the form $(q,a_i,p)$, we append the list $\plist_q^{old}$ at the end
of the list $\plist_p$. Appending this list at the end is done in
order to accommodate the fact that two letter transitions can enter
the same state $p$ while reading $a_i$ (see e.g. the state $q_8$ in
the automaton from Figure \ref{fig-automaton}).  Note here that each
$\plist_q^{old}$ is appended to at most one $\plist_p$, since $\cA$ is
deterministic.

\smallskip
\noindent{\bf Enumeration.} At the end of \textsc{Evaluate}, procedure
\textsc{Enumerate} simply traverses the constructed reverse dual \dag
in a depth first manner. In this way, \textsc{Enumerate} traces all
the accepting runs (since it starts from an accepting state), and
outputs a string allowing us to reconstruct the mapping.

\medskip
\noindent{\bf Example.} Next we give an example detailing the situations that could occur while running Algorithm \ref{alg:ma-eval}. For this, consider the deterministic $\sEVA$ $\cA$ from Figure \ref{fig-automaton} and an input document $d=ab$. In this case we have that $\semd{\cA}=\{\mu_1,\mu_2,\mu_3\}$, where: 
\begin{itemize}\itemsep=0pt
\item $\mu_1(x)  =  [1,3\rangle, \ \mu_1(y)  =  [2,3\rangle$;
\item $\mu_2(x)  =  [2,3\rangle, \ \mu_2(y)  =  [1,3\rangle$; and
\item $\mu_3(x)  =  [1,3\rangle, \ \mu_3(y)  =  [1,3\rangle$.
\end{itemize}
%\begin{align*}
%\mu_1(x)  =  (1,3) & , & \mu_1(y)  =  (2,3)\\
%\mu_2(x)  =  (2,3) & , & \mu_2(y)  =  (1,3)\\
%\mu_3(x)  =  (1,3) & , & \mu_3(y)  =  (1,3)
%\end{align*}

To show how Algorithm \ref{alg:ma-eval} works, in Figure \ref{fig-lists} we provide the state of all the active lists after completion of each phase of the algorithm. To stress that we are talking about the state of some list $\plist_q$ during the iteration $i$ of Algorithm \ref{alg:ma-eval}, that is, about the state of the list after executing $\textsc{Reading}(i)$ or $\textsc{Capturing}(i)$, we will use the notation $\plist_q^i$. To keep the notation simple, we also denote lists using the array notation.

\begin{figure}[t]
\centering
\resizebox{\linewidth}{!}{
\begin{tabular}{c|l}
 Stage & \hspace{45pt}Non-empty lists \\
 \hline
 Initial & 
 \def\arraystretch{1.5}
 \begin{tabular}{l}
 $\plist_{q_0}^0 = [\bot]$
 \end{tabular}\\
 \hline
 $\textsc{Capturing}(1)$ &  
 \def\arraystretch{1.5}
 \begin{tabular}{l}
 $\plist_{q_0}^0 = [\bot]$\\
 $\plist_{q_1}^0 = [\texttt{node(}(\{\Open{x}\},1),[\bot]\texttt{)}]$\\
 $\plist_{q_2}^0 = [\texttt{node(}(\{\Open{y} \},1),[\bot]\texttt{)}]$\\
 $\plist_{q_3}^0 = [\texttt{node(}(\{\Open{x},\Open{y}\},1),[\bot]\texttt{)}]$
 \end{tabular}\\
 \hline
 $\textsc{Reading}(1)$ & 
 \def\arraystretch{1.5}
 \begin{tabular}{l}
 $\plist_{q_4}^1 = \plist_{q_1}^0$\\
 $\plist_{q_5}^1 = \plist_{q_2}^0$\\
 $\plist_{q_3}^1 = \plist_{q_3}^0$
 \end{tabular}\\
 \hline
 $\textsc{Capturing}(2)$ &
 \def\arraystretch{1.5}
 \begin{tabular}{l}
 $\plist_{q_4}^1 = \plist_{q_1}^0$\\
 $\plist_{q_5}^1 = \plist_{q_2}^0$\\
 $\plist_{q_3}^1 = \plist_{q_3}^0$\\
 $\plist_{q_6}^1 = [\texttt{node(}(\{\Open{y} \},2),\plist_{q_4}^1\texttt{)}]$\\
 $\plist_{q_7}^1 = [\texttt{node(}(\{\Open{x} \},2),\plist_{q_5}^1\texttt{)}]$\\
 $\plist_{q_9}^1 = [\texttt{node(}(\{\Close{x},\Close{y} \},2),\plist_{q_3}^1\texttt{)}]$
 \end{tabular} 
  \\ \hline
 $\textsc{Reading}(2)$ &  
  \def\arraystretch{1.5}
 \begin{tabular}{l}
 $\plist_{q_3}^2 = \plist_{q_3}^1$\\
 $\plist_{q_8}^2 = [\plist_{q_6}^1,\plist_{q_7}^1]$
 \end{tabular} 
 \\ \hline
 $\textsc{Capturing}(3)$ &
   \def\arraystretch{1.5}
 \begin{tabular}{l}
 $\plist_{q_3}^2 = \plist_{q_3}^1$\\
 $\plist_{q_8}^2 = [\plist_{q_6}^1,\plist_{q_7}^1]$\\
 $\plist_{q_9}^2 = [\texttt{node(}(\{\Close{x},\Close{y}\},3),\plist_{q_8}^2\texttt{)},$ \\
 \hspace*{35pt} $\texttt{node(}(\{\Close{x},\Close{y}\},3),\plist_{q_3}^2\texttt{)}]$
 \end{tabular}  
\end{tabular}
}
\caption{The state of non-empty lists after executing each stage of the algorithm.}
\label{fig-lists}
\end{figure}

At the beginning only the list $\plist_{q_0}$ corresponding to the
initial state of $\cA$ is non-empty. When $\textsc{Capturing}(1)$ is
triggered, we create three new nodes, each corresponding to the
variable transitions leaving the state $q_0$. These nodes are then
added to the appropriate lists. In $\textsc{Reading}(1)$ we ``move"
the non-empty lists by renaming their state. For instance, since $A$
can go from $q_1$ to $q_4 $ while reading $a_1=a$, the list
$\plist_{q_1}^0$ now becomes $\plist_{q_4}^1$, signalling that $q_4$
is one of the states where $\cA$ can be at this point. The same is done
by the other two transition reading the letter $a$. Notice that the
list $\plist_{q_0}$ becomes empty at this point.

Next, $\textsc{Capturing}(2)$ is executed. Here, the lists that were
non-empty after $\textsc{Reading}(1)$ will remain unchanged after
$\textsc{Capturing}(2)$, simulating the situation when no variable
bindings were used in the run of $\cA$ over $d$ after processing the
first letter. Other variable transitions that can be triggered create
new nodes and add them at the beginning of the appropriate lists.

$\textsc{Reading}(2)$ again ``moves" the lists according to what $\cA$ does when reading $a_2=b$. The lists $\plist_{q_3}^1$ gets propagated (simulating a self loop). A more interesting situation occurs when the transitions $\delta(q_6,b)=q_8$ and $\delta(q_7,b)=q_8$ are processed. Since they both reach $q_8$, we first append the list $\plist_{q_6}^1$ at the end of (the empty list) $\plist_{q_8}^2$, and then to keep track that one can also get here from $q_7$, also append the list $\plist_{q_7}^1$ at the end of (now non empty list) $\plist_{q_8}^2$. Since these are the only way that $\cA$ can move while reading $b$, we forget about all the other lists.

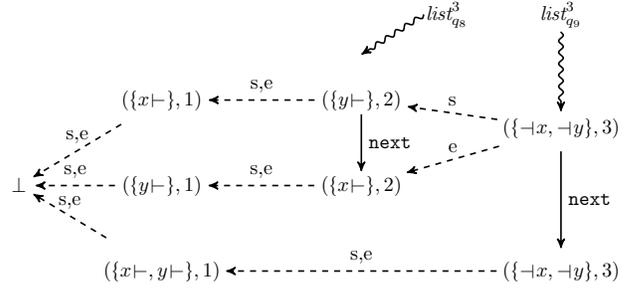
\begin{figure}[t]
\begin{center}
\resizebox{\linewidth}{!}{
\begin{tikzpicture}[->,>=stealth',auto,thick, scale = 1.0,state/.style={circle,inner sep=2pt}]

    % The graph
	\node [state] at (0,0) (b) {$\bot$};
	\node [state] at (2.5,1.5) (x0){$(\{\Open{x}\},1)$};  
	\node [state] at (2.5,0) (y0){$(\{\Open{y}\},1)$};  	
	\node [state] at (2.5,-1.5) (xy0){$(\{\Open{x},\Open{y}\},1)$};  	
	\node [state] at (6,1.5) (y1){$(\{\Open{y}\},2)$};
	\node [state] at (6,0) (x1){$(\{\Open{x}\},2)$};
	\node [state] at (9.5,1) (xy2){$(\{\Close{x},\Close{y}\},3)$};
	\node [state] at (9.5,-1.5) (xy22){$(\{\Close{x},\Close{y}\},3)$}; 

	% Graph edges
	\path[dashed,->]
	(x0) edge node[above] {s,e}   (b)
	(y0) edge node[above] {s,e}  (b)
	(xy0) edge node[above] {s,e}   (b)
	(y1) edge node[above] {s,e}  (x0)	
	(x1) edge node[above] {s,e}  (y0)
	(xy2) edge[above] node {s} (y1)
	(xy2) edge[above] node {e}  (x1)			
	(xy22) edge node[above] {s,e}  (xy0)	
	;  	
	
	\path[->]	
	(6,1.25) edge node[right] {\texttt{next}} (6,0.25)
	(9.5,0.6) edge node[right] {\texttt{next}} (9.5,-1.1)
	;
	
	\node[state] at (9.5,3) {$\plist_{q_9}^3$};
	\node[state] at (7.5,3) {$\plist_{q_8}^3$};
		
	\draw [->,decorate,decoration={snake,amplitude=.4mm,segment length=2mm,post length=1mm}] (7.1,3) -- (6,2.3);
	
	\draw [->,decorate,decoration={snake,amplitude=.4mm,segment length=2mm,post length=1mm}] (9.5,2.7) -- (9.5,1.3);

\end{tikzpicture} 
}
\end{center}
\vspace*{-20pt}
\caption{DAG created by Algorithm \ref{alg:ma-eval} to record the output mappings.}
\label{fig-dag}
\end{figure}

Finally, $\textsc{Capturing}(3)$ keeps track of what happens during the last variable transition of $\cA$. There are two transitions that can reach the accepting state $q_9$, and they get added to the list $\plist_{q_9}^3$. Note that the two lists from $\textsc{Reading}(2)$ also remain non-empty at this stage.

The \dag created by Algorithm \ref{alg:ma-eval} is given in Figure~\ref{fig-dag}. Here the dashed edges point to the list corresponding to the node with this label (i.e. the list representation $(s,e)$). For instance, $\texttt{node(}(\{\Open{x}\},0),\bot\texttt{)}$ corresponds to the edge between the node with the label $(\{\Open{x}\},0)$ and its associated list $\bot=\plist_{q_0}^0$. Full edges link the nodes that belong to the same list, and curvy edges to the start of a list generated after $\textsc{Capturing}(3)$. 

To enumerate the answers, we now call the procedure $\textsc{Enumerate}$, passing it as a parameter all the lists corresponding to the final states of $\cA$. Since $\cA$ has only one final state, the procedure will trigger only $\textsc{EnumAll}(\plist_{q_9},\varepsilon)$. This procedure now recursively traverses the structure of connected lists created by Algorithm \ref{alg:ma-eval} in  a depth-first manner generating the output mappings. For instance, the mapping $\mu_1$, with $\mu_1(x)=[1,3\rangle$ and $\mu_2(y)=[2,3\rangle$ is generated by traversing the upper most path from $(\{\Close{x},\Close{y}\},2)$ until reaching $\bot$, and similarly for other mappings.

\medskip
\noindent{\bf Correctness.} To prove the correctness of the above algorithm, we first introduce some notation. 
For encoding mappings in the enumeration procedure, we assume that mappings are sequences of the form $(S_1, i_1)$ \ldots $(S_m, i_m)$ where $S_j \subseteq \markers_{\VV}$, $i_1 < \ldots < i_m$ and variables in $S_1 \ldots S_m$ are open and closed in a correct manner, i.e. like in the definition of a run of an extended variable set automata. Clearly, from a sequence $M = (S_1, i_1) \ldots (S_m, i_m)$ we can obtain a mapping $\mu^M$ and viceversa. 
For this reason, in the sequel we call $M$ and $\mu$ mappings without making any distinction.
Furthermore, we say that a sequence $M = (S_1, i_1) \ldots (S_k, i_k)$ is a \emph{partial mapping} if it is the prefix sequence of some mapping, i.e., it can be extended to the right to create a mapping. 
This is useful to represent the output of partial  
run of $\cA$ over $d$; that is, if $\rho$ is of the form:
$$\rho \ = \ q_0 \ \trans{S_1} \ p_0 \ \trans{a_1} \ q_1 \ \trans{S_2} \ p_1 \ \trans{a_2} \ \ldots \ \ \trans{a_i} \ q_i \ \trans{S_{i+1}} {p_{i}}$$
where $i\leq |d|$, the mapping $\mu^{\rho}$ is not necessarily
well-defined, or is possibly incomplete. We therefore define a partial
mapping $M$ of $\rho$, denoted by $\Out(\rho)$, as the concatenation
of all the pairs $(S_j,j)$ where $S_j\neq \emptyset$, in an 
%\stijn{Did  you men increasing order here? If not, the example is wrong.}
increasing order on $j$. For instance, in the run $\rho\ = \ q_0 \ \trans{\{\Open{x}\}} \ p_0 \ \trans{a_1} \ q_1 \ \trans{\emptyset} p_1 \ \trans{a_2} \ q_2 \ \trans{\{\Open{y}\}} p_2$ we will have that $\Out(\rho)= (\{\Open{x}\},0) \, (\{\Open{y}\},2)$. Note that in the case that $\rho$ is an accepting run of $A$, it is then clear that $\Out(\rho)$ defines the mapping $\mu^\rho$.

The proof that Algorithm \ref{alg:ma-eval} correctly enumerates all the mappings in $\semd{\cA}$ without repetitions follows from  the invariant stated in the lemma below.

\begin{lemma}\label{lemma:correctness1}
	Let $d = a_1 \ldots a_n$ be a document and $\cA$ an extended variable-set automaton that is deterministic and sequential (deterministic $\sEVA$). Then for every $0\leq i \leq n$, the following two statements are equivalent:
	\begin{enumerate}
		\item There exists a run of $\cA$ over $a_1\cdots a_i$ of the form $\rho \ = \ q_0 \ \trans{S_1} \ p_0 \ \trans{a_1} \ q_1 \ \ldots \ \ \trans{a_i} \ q_{i} \ \trans{S_{i+1}} p_i.$ % over $a_1 \ldots a_i$.
		\item After executing $\textsc{Capturing}(i+1)$ in Algorithm \ref{alg:ma-eval}, it holds that $\plist_{p_i} \neq \epsilon$ and there is partial output $M$ of \textsc{EnumAll}($\plist_{p_i},\epsilon$) with $M = \Out(\rho)$.
	\end{enumerate}
\end{lemma}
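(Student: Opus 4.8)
The plan is to prove the equivalence by induction on $i$, tracking how the lists $\plist_q$ evolve through the alternating calls to $\textsc{Capturing}$ and $\textsc{Reading}$. Because the algorithm interleaves these two procedures, I would strengthen the statement into two mutually supporting invariants: a \emph{capturing invariant} $C(i)$, which is exactly the lemma (the equivalence for the configuration right after $\textsc{Capturing}(i+1)$, for runs over $a_1\cdots a_i$ that end with a capturing transition $S_{i+1}$ landing in $p_i$), and a \emph{reading invariant} $R(i)$, asserting the analogous equivalence right after $\textsc{Reading}(i)$ for runs over $a_1\cdots a_i$ that end with the letter transition $a_i$ landing in a state $q_i$. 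Since the algorithm runs $\textsc{Capturing}(1),\textsc{Reading}(1),\textsc{Capturing}(2),\dots$, the configurations are established in the order $C(0),R(1),C(1),R(2),\dots$, so each inductive step has two halves: deriving $R(i)$ from $C(i-1)$ by analysing a single call to $\textsc{Reading}(i)$, and deriving $C(i)$ from $R(i)$ by analysing a single call to $\textsc{Capturing}(i+1)$.

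For the base case $C(0)$ I would unfold $\textsc{Capturing}(1)$ from the initial configuration $\plist_{q_0}=[\bot]$. A one-step run $q_0\trans{S_1}p_0$ with $S_1=\emptyset$ forces $p_0=q_0$, and is matched because $\textsc{Capturing}$ never clears $\plist_{q_0}$: the list still contains $\bot$, and $\textsc{EnumAll}$ emits the empty partial mapping, which equals $\Out(\rho)$. A run with $S_1\neq\emptyset$ requires $(q_0,S_1,p_0)\in\delta$, which is precisely the situation in which the node $((S_1,1),[\bot])$ is created and added to $\plist_{p_0}$; then $\textsc{EnumAll}$ emits $(S_1,1)=\Out(\rho)$. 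The two directions of the equivalence correspond to the two directions of ``a capturing transition out of $q_0$ fires iff it lies in $\delta$''.

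The inductive steps carry the bookkeeping. For $R(i)$ from $C(i-1)$: $\textsc{Reading}(i)$ only \emph{appends} the unchanged list $\plistold_q$ onto $\plist_{\delta(q,a_i)}$, creating no node and no new index, so since $\Out$ ignores letter transitions the partial output attached to $q_i=\delta(p_{i-1},a_i)$ after reading coincides with the one attached to the predecessor $p_{i-1}$ before reading. Determinism (the target $\delta(q,a_i)$ is a single state) is what lets me conclude that each source list is relocated to exactly one target, and that lists from distinct predecessors are simply concatenated into $\plist_{q_i}$ without interference. For $C(i)$ from $R(i)$, $\textsc{Capturing}(i+1)$ splits on the last marker set $S_{i+1}$: when $S_{i+1}=\emptyset$ we have $p_i=q_i$ and the invariant transfers verbatim because $\plist_{q_i}$ is left intact (the self-loop / ``stay in place'' case flagged in the text); when $S_{i+1}\neq\emptyset$ a fresh node $((S_{i+1},i+1),\plistold_{q_i})$ is created and added to $\plist_{p_i}$ with $p_i=\delta(q_i,S_{i+1})$.

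The step I expect to be the main obstacle, and would treat most carefully, is verifying that $\textsc{EnumAll}$ reconstructs $\Out(\rho)$ exactly. The procedure traverses the reverse-dual \dag \emph{backwards} and builds its output by \emph{prepending} each visited label $(S,i)$ to the accumulator, so I must check that this prepend-while-recursing discipline produces precisely the increasing-index concatenation defining $\Out$. The freshly created node carries the largest index $i+1$ and is visited first; after recursing into $\plistold_{q_i}$, which by $R(i)$ emits $\Out(\rho')$ for the truncated run $\rho'$, and then prepending, the emitted string is $\Out(\rho')\cdot(S_{i+1},i+1)=\Out(\rho)$, as required. Two side remarks support this: the \texttt{lazycopy} in $\textsc{Capturing}$ freezes $\plistold_{q_i}$ as the child list of the new node, so later \texttt{add} operations on $\plist_{q_i}$ cannot corrupt the already-recorded sub-\dag; and determinism keeps $p_i=\delta(q_i,S_{i+1})$ unique, preserving the one-to-one match between runs and \dag-paths. (Sequentiality is a standing hypothesis but, I expect, is not needed for the invariant itself; it enters the surrounding correctness argument, where accepting runs must be valid for $\mu^\rho$ to be defined.) Combining the base case with the two inductive halves yields $C(i)$ for all $0\le i\le n$, which is the lemma.
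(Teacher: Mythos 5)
Your proof follows essentially the same route as the paper's: induction on $i$, unfolding one call to \textsc{Reading} and one call to \textsc{Capturing} per step, splitting on whether the last marker set is empty, and checking that the backward, prepend-to-accumulator traversal of \textsc{EnumAll} rebuilds $\Out(\rho)$ in increasing index order. The only difference is organizational: you name the intermediate after-\textsc{Reading} configuration as an explicit invariant $R(i)$, whereas the paper keeps it implicit inside a single inductive step (and handles the converse direction by a case split on whether the last index of the output equals $i+2$, which is the mirror image of your case split); your side remarks---that \texttt{lazycopy} freezes the child lists and that sequentiality is needed only for the surrounding correctness argument, not for the invariant itself---are both accurate.
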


The proof of the lemma is done by a detailed induction on the number of steps of the algorithm and can be found in the appendix. Note that the case when $i=0$ corresponds to a run over the empty word $\varepsilon$ (i.e. processing the part of $d$ ``before" $a_1$), thus simulating the first variable transition of $\cA$. With the invariant proved in Lemma \ref{lemma:correctness1}, we can now easily show that running $\textsc{Evaluate}(\cA,d)$ will enumerate all of the mappings in $\semd{\cA}$ and only those mappings. Indeed, if $\mu\in \semd{\cA}$, this means that there is an accepting run $\rho$ such that $\mu^{\rho} = \mu$, so by Lemma \ref{lemma:correctness1}, the algorithm will output $M$ with $M=\Out(\rho)$. On the other hand, if $\textsc{Evaluate}(\cA,d)$ produces an output $M$, we can match this output with a run $\rho_M$. Furthermore, since the output was produced from an accepting state, and since $\cA$ is sequential, this means $\rho_M$ is valid, so $\mu^{\rho_M}=\mu^M\in \semd{A}$ as desired.

%This now allows us to prove that Algorithm \ref{lemma:correctness1} indeed enumerates all of the mappings in $\semd{A}$ and only those mappings. On the one hand, if $\mu\in \semd{A}$, this means that there is an accepting run $\rho$ such that $\mu^{\rho} = \mu$. By Lemma \ref{lemma:correctness1}, this means that Algorithm \ref{alg:ma-eval} will produce an output $M$ such that $M = \Out(\rho)$, so we have that $\mu^M=\mu$. Conversely, if $M$ is an output of Algorithm \ref{alg:ma-eval}, this means that $M$ was obtained by running \textsc{EnumAll}($\newnode.{\tt list}, (S, i) \cdot \epsilon$), for some $\newnode$ on the list $\plist_{q}^{n}$ and $\newnode.{\tt content} = (S,i)$, with $q$ and accepting state of $A$, and $n=|d|$. Again, by Lemma \ref{lemma:correctness1}, this means that there is a run $\rho$ ending in $q$, such that $\Out(\rho) = M$. Since $\rho$ ends in $q$ it is accepting, and since $A$ is functional, we know that $\rho$ is valid. Therefore $\mu^\rho$ is well defined and  $\mu^\rho \in \semd{A}$, so $\mu^M\in \semd{A}$ as desired.

Finally, we need to show that Algorithm \ref{alg:ma-eval} does not enumerate any answer twice when executed over a deterministic $\sEVA$ $\cA$ and a document $d$. For this, observe  that if we have two accepting runs $\rho$ and $\rho'$ of $\cA$ over $d$ such that $\mu^\rho=\mu^{\rho'}$, then $\rho=\rho'$. This follows from the fact that $\cA$ is deterministic. Therefore, it follows from Lemma \ref{lemma:correctness1} that there is a one to one correspondence between accepting runs of $A$ and outputs of Algorithm~\ref{alg:ma-eval}, which gives us the desired result.

\medskip
\noindent{\bf Complexity.} It is rather straightforward to see that the pre-processing step takes time $O(|\cA|\times |d|)$. Namely, for each letter $a_i$ of $d$ we run the procedures $\textsc{Capturing}(i)$ and $\textsc{Reading}(i)$ once. These two procedures simply scan the transitions of the automaton and manipulate the list pointers as needed, thus taking $O(|\cA|)$ time, where $|\cA|$ is measured as the number of transitions of the automaton. 

As far as the enumeration is concerned, Algorithm \ref{alg:output}, traverses the graph generated in the pre-processing step in a depth-first manner. From Lemma \ref{lemma:correctness1}, it follows that all the paths in the constructed graph must reach the initial node $\bot$ and that the length of each path is linear in the number of variables. Thus, we are able to enumerate the output by taking only constant delay (i.e. constant in the size of the document) between two consecutive mappings.

Note that the actual delay is not really dependent on the entire automaton $\cA$, as allowed by the definition of constant delay, but depends only on the number of variables. We argue that this is the best delay that can be achieved, because to write down a single output mapping one needs at least the time that is linear in the number of variables.
%
%As far as the enumeration is concerned, Algorithm \ref{alg:output}, traverses the graph constructed in the pre-processing step in a depth-first manner, and is thus not constant delay, as one might need to backtrack all the way to the final state. However, from Lemma \ref{lemma:correctness1}, it follows that all the paths in the constructed graph must reach the node $\bot$, thus we can output them taking only constant delay between consecutive outputs by implementing  ``smart" backtracking that remembers the ultimate fork in the path using a stack. For space reasons, and in order to keep the presentation more understandable, we present the constant delay version of Algorithm \ref{alg:output} in the appendix.

%%% Local Variables: 
%%% mode: latex
%%% TeX-master: "../main/main"
%%% End: 

\section{Evaluating regular spanners}\label{sec:spanners}
%!TEX root = ../main/main.tex

The previous section shows an algorithm that evaluates a deterministic
and sequential extended VA (\dsEVA\ for short) $\cA$ over a document $d$ with
constant-delay enumeration after $O(|\cA|\times |d|)$
preprocessing. Since the wider objective of this algorithm is to
evaluate regular spanners, in this section we present a fine-grained
study of the complexity of transforming an arbitrary regular spanner,
expressed in $\RGX^{\{\pi,\cup,\Join\}}$ or
$\VA^{\{\pi,\cup,\Join\}}$ to a $\dsEVA$. This will illustrate the real
cost of our constant delay algorithm for evaluating
regular spanners.

Because it is well-known that $\RGX$ formulas can be translated into $\VA$ in
linear time~\cite{FKRV15}, we can focus our study on the
setting where spanners are expressed in $\VA^\algops$. We first
study how to translate arbitrary $\VA$s into $\dsEVA$s, and then turn to the
algebraic constructs.  For the sake of simplification, throughout this section
we assume the following notation: given a VA $\cA = (Q, q_0, F,
\delta)$, $n = |Q|$ denotes the number of states, $m = |\delta|$ the
number of transitions, and $\ell = |\var(\cA)|$ the number of
variables in $\cA$.

To obtain a sequential $\VA$ from a $\VA$, we can use a construction similar to the one presented in~\cite{F17}. This yields a sequential $\VA$ with $2^n3^\ell$ states that can later be extended and determinized (see Theorem~\ref{theo:equivalence} and Proposition~\ref{prop:determinization}, respectively). Unfortunately, following these steps would yield an automaton whose size is double exponential in the size of the original $\VA$. The first positive result in this section is that we can actually transform a $\VA$ into a $\dsEVA$ avoiding this double exponential blow-up.
\begin{proposition}\label{prop:general-VA}
	For any $\VA$ $\cA$ there exists an equivalent $\dsEVA$ $\cA'$ with at most $2^n3^\ell$ states and $2^n3^\ell(2^\ell+|\Sigma|)$ transitions.
\end{proposition}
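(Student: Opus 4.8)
The plan is to obtain $\cA'$ by a single subset-style construction that performs sequentialization, extension, and determinization \emph{simultaneously}, rather than applying the three operations one after another (which, as the preceding discussion notes, incurs a double-exponential blow-up). Concretely, I would take the states of $\cA'$ to be pairs $(C,V)$, where $C\subseteq Q$ is a set of states of $\cA$ and $V:\var(\cA)\to\{\bot,\circ,\bullet\}$ is a \emph{variable configuration} recording, for each variable, whether it is still waiting ($\bot$), currently open ($\circ$), or already closed ($\bullet$). The configuration is precisely the device used in the sequentialization of~\cite{F17} to enforce validity, while the set $C$ is the usual subset-construction component used for determinization; pairing them is what lets us fold all three operations into one pass.

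The transitions of $\cA'$ are then defined as a genuine partial function, which makes $\cA'$ deterministic. For a letter $a\in\Sigma$, I set $\delta'((C,V),a)=(C_a,V)$ with $C_a=\{q'\mid \exists q\in C:(q,a,q')\in\delta\}$, leaving the configuration untouched. For a nonempty marker set $S\subseteq\markers_{\var(\cA)}$ that is \emph{compatible} with $V$ (i.e.\ $\Open{x}\in S$ only when $V(x)=\bot$, and $\Close{x}\in S$ only when $x$ is open in $V$ or is simultaneously opened by $S$), I set $\delta'((C,V),S)=(C_S,V_S)$, where $V_S$ is $V$ updated according to $S$ and $C_S$ is the set of all states reachable in $\cA$ from some $q\in C$ along a path of marker transitions whose labels are exactly the elements of $S$ in a validity-respecting order. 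This last step is exactly the ``bundling'' of consecutive marker transitions of $\cA$ into one extended transition, and it realizes empty spans by allowing both $\Open{x}$ and $\Close{x}$ to occur in a single $S$. The initial state is $(\{q_0\},V_0)$ with $V_0\equiv\bot$, and I would declare $(C,V)$ accepting exactly when $C\cap F\neq\emptyset$ and no variable has status $\circ$; the latter condition guarantees that every accepting run of $\cA'$ is valid, i.e.\ that $\cA'$ is sequential.

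The crux of the argument --- and the reason the state count stays single-exponential --- is the observation that every reachable super-state carries a \emph{single, well-defined} configuration, shared by all of its $\cA$-states. This holds because $V$ is completely determined by the sequence of marker sets read so far, which is common to the entire deterministic computation of $\cA'$. Hence the reachable states form a subset of $2^{Q}\times\{\bot,\circ,\bullet\}^{\var(\cA)}$, of size at most $2^{n}3^{\ell}$, rather than the $2^{n\cdot 3^{\ell}}$ one would get from naively determinizing a sequential automaton on $n\cdot 3^{\ell}$ states. Determinism is immediate from the definition of $\delta'$ as a function, and the semantic guarantee behind Proposition~\ref{prop:determinization} (one valid accepting run per mapping) then follows, since a mapping $\mu\in\semd{\cA}$ fixes the marker set at every position and therefore pins down a unique run of the deterministic $\cA'$.

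The main work --- and the step I expect to be the real obstacle --- is the run-correspondence lemma establishing $\cA\equiv\cA'$. Here one must show that bundling consecutive marker transitions of $\cA$ into single extended transitions preserves the induced mapping: the order in which the markers of a bundle $S$ are fired in $\cA$ is irrelevant to $\mu^{\rho}$ (so taking the union over all firing orders when forming $C_S$ loses no mapping and introduces no spurious one), empty spans arising from $\{\Open{x},\Close{x}\}\subseteq S$ are produced correctly, and the configuration-based acceptance condition rejects exactly the non-valid runs, so that sequentiality does not discard any mapping of $\cA$. Once this correspondence is in place, the size bounds are routine bookkeeping: there are at most $2^{n}3^{\ell}$ states, and from each state one counts the $|\Sigma|$ letter transitions together with the transitions labelled by marker sets compatible with the current configuration, which yields the claimed total of at most $2^{n}3^{\ell}(2^{\ell}+|\Sigma|)$ transitions.
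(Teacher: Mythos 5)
Your construction is essentially identical to the paper's: its states $(P,S)$, where $S$ is the set of markers seen so far, encode exactly your pairs $(C,V)$ (a variable is $\bot$, $\circ$, or $\bullet$ according to whether $S$ contains neither marker, only $\Open{x}$, or both), and the letter transitions, the compatibility condition on marker sets, the variable transitions via paths firing exactly the markers of $S$, the acceptance condition, and the resulting determinism and sequentiality arguments all coincide with the paper's proof. One small caution on your final bookkeeping: from a single state the number of compatible marker sets can be as large as $3^{\ell}$ (each $\bot$-variable may be ignored, opened, or opened-and-closed within one transition), not $2^{\ell}$; the claimed bound nevertheless holds because summing $3^{\#\bot(V)}\,2^{\#\circ(V)}$ over all $3^{\ell}$ configurations $V$ yields $6^{\ell}=3^{\ell}2^{\ell}$ variable transitions per subset of states.
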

Therefore, evaluating an arbitrary $\VA$ with constant delay can be done with preprocessing that is exponential in the size of the $\VA$ and linear in the document. However, note that the resulting $\dsEVA$ is exponential both in the number of states and in the number of variables of the original~$\VA$. While having an automaton that is exponential in the number of states is to be expected due to the deterministic restriction of the resulting $\VA$, it is natural to ask whether there exists a  subclass of $\VA$ where the blow-up in the number of variables can be avoided. 

The two subclasses of $\VA$ that were shown to have good algorithmic properties \cite{FKP17,MaturanaRV17} are sequential $\VA$ and functional $\VA$, so we will consider if the cost of translation is smaller in these cases. In the more general case of sequential $\VA$ we can actually show that the blow-up in the number of variables is inevitable. The main issue here is that preserving the sequentiality of a $\VA$ when transforming it to an extended $\VA$ can be costly. 
To illustrate this, consider the automaton in
Figure~\ref{fig:badsequential}. In this automaton any path between $q_0$ and $q_F$ opens and closes exactly one
variable in $\{x_i,y_i\}$, for each $i\in\{1,\ldots,n\}$. Therefore,
to simulate this behaviour in an extended \VA\ (which disallows two
consecutive variable transitions), we need $2^{\ell}$ transitions between
the initial and final states, one for each possible set of variables. More formally, we have the following proposition.

 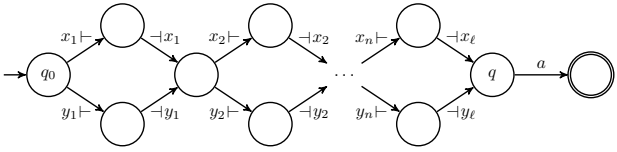
\begin{figure}
 	\begin{center}
 		\resizebox{\columnwidth}{!}{
 			\begin{tikzpicture}[->,>=stealth',auto, thick, scale = 1.0,initial text= {}]
			
 			% The graph
 			\node [state,initial] at (0,0) (q0) {$q_0$};
 			\node [state] at (1.5,1) (q01) {};
 			\node [state] at (1.5,-1) (q02) {};		  
 			\node [state] at (3,0) (q1) {};
 			\node [state] at (4.5,1) (q11) {};
 			\node [state] at (4.5,-1) (q12) {};
 			\node [inner sep=2mm] at (6,0) (q2) {$\ldots$};
 			\node [state] at (7.5,1) (qn1) {};
 			\node [state] at (7.5,-1) (qn2) {};
 			\node [state] at (9,0) (qn) {$q$};
 			\node [state,accepting] at (11,0) (qf) {};

 			%\node [state,accepting] at (10,0) (q5) {$q_5$};
			
 			% Graph edges
 			\path[->] (q0) edge node[above] {$\Open{x_1} \ \ \ $} (q01);
 			\path[->] (q0) edge node[below] {$\Open{y_1} \ \ \ $} (q02);
 			\path[->] (q01) edge node[above] {$\ \ \Close{x_1}$} (q1);
 			\path[->] (q02) edge node[below] {$\ \ \Close{y_1}$} (q1);	  
 			\path[->] (q1) edge node[above] {$\Open{x_2} \ \ \ $} (q11);
 			\path[->] (q1) edge node[below] {$\Open{y_2} \ \ \ $} (q12);
 			\path[->] (q11) edge node[above] {$\ \  \Close{x_2}$} (q2);
 			\path[->] (q12) edge node[below] {$ \ \ \Close{y_2}$} (q2);
 			\path[->] (q2) edge node[above] {$\Open{x_n} \ \ \ $} (qn1);
 			\path[->] (q2) edge node[below] {$\Open{y_n} \ \ \ $} (qn2);
 			\path[->] (qn1) edge node[above] {$\ \ \Close{x_{\ell}}$} (qn);
 			\path[->] (qn2) edge node[below] {$\ \ \Close{y_{\ell}}$} (qn);
 			\path[->] (qn) edge node[above] {$a$} (qf);
 			\end{tikzpicture}
 		}
 	\end{center}
 	\vspace*{-15pt}
 	\caption{A sequential VA with $\ell$ variables such that every equivalent  eVA has $O(2^{\ell})$ transitions.}
 	\label{fig:badsequential}
 \end{figure}

\begin{proposition}\label{prop:seq-extended-existance}
For every $\ell>0$ there is a sequential VA $\cA$ with $3\ell+2$ states, $4\ell+1$ transitions, and $2\ell$ variables, such that for every extended $\VA$ $\cA'$ equivalent to $\cA$ it is the case that $\cA'$ has at least $2^{\ell}$ transitions.
\end{proposition}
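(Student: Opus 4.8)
The plan is to exhibit the automaton $\cA$ drawn in Figure~\ref{fig:badsequential} and to analyze its behaviour on the single-letter document $d = a$. First I would verify the stated parameters: the ``joining'' states $q_0, q_1, \dots, q_\ell = q$ give $\ell+1$ states, the $2\ell$ intermediate states (two per index $i$) give $2\ell$, and the accepting state gives one, for a total of $3\ell+2$; the four variable transitions per index together with the single letter transition $q \trans{a} q_F$ give $4\ell+1$ transitions; and the variables are exactly $x_1,\dots,x_\ell,y_1,\dots,y_\ell$, i.e.\ $2\ell$ of them. Sequentiality holds because the only way to reach the accepting state is to traverse, for each index $i$, exactly one of the two branches, and each branch opens and immediately closes a single variable in the correct order; hence every accepting run is valid.

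Next I would compute $\semd{\cA}$ for $d = a$. Since the unique letter transition reads $a$ at the very end, the only accepted document is $a$, and every accepting run performs all of its variable operations at position $1$ (before the letter is read). Choosing one variable $v_i \in \{x_i, y_i\}$ per index yields a valid accepting run that maps each chosen $v_i$ to the empty span $\mspan{1}{1}$ and leaves the non-chosen variables undefined. Thus $\semd{\cA}$ consists of exactly the $2^\ell$ mappings $\mu_C$, one per choice set $C = \{v_1,\dots,v_\ell\}$, with $\dom(\mu_C) = C$ and $\mu_C(v) = \mspan{1}{1}$ for every $v \in C$.

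Now I would fix any extended \VA\ $\cA'$ with $\semd{\cA'} = \semd{\cA}$ for all $d$ and specialize to $d = a$. By the run format of an \EVA\ (Equation~\eqref{eq:run}), every run over $a$ has the shape $s_0 \trans{S_1} t_0 \trans{a} s_1 \trans{S_2} t_1$, where $s_0$ is the initial state, $S_1$ collects the markers fired at position $1$, and $S_2$ those at position $2$. The heart of the argument is that any valid accepting run producing $\mu_C$ must satisfy $S_1 = \{\Open{v}, \Close{v} \mid v \in C\}$ exactly: each $v \in C$ has span $\mspan{1}{1}$, so both $\Open{v}$ and $\Close{v}$ must lie in $S_1$ (placing $\Close{v}$ in $S_2$ would yield span $\mspan{1}{2}$), and $S_1$ can contain no further marker, since any $\Open{w}$ with $w \notin C$ would, by validity, force $w$ into $\dom(\mu_C)$, a contradiction. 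As $C \neq \emptyset$, we have $S_1 \neq \emptyset$, so $(s_0, S_1, t_0)$ is a genuine variable transition of $\cA'$ out of its initial state.

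Finally, distinct choice sets $C$ yield distinct marker sets $S_1$, so the $2^\ell$ mappings $\mu_C$ force $2^\ell$ pairwise distinct initial transitions $(s_0, S_1, t_0)$ in $\cA'$: distinct labels make the triples distinct irrespective of the targets $t_0$. Hence $\cA'$ has at least $2^\ell$ transitions, as claimed. The step I would take most care over, and the crux of the whole proof, is justifying that $S_1$ must be \emph{exactly} the marker set of $C$ and that no variable operation can be deferred to $S_2$ or split differently; this is precisely where the defining \EVA\ restriction, that markers may not be spread across two consecutive variable transitions, does the work and converts the $2^\ell$ outputs into $2^\ell$ distinct transitions.
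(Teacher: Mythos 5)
Your proof is correct and follows essentially the same route as the paper: the same witness automaton from Figure~\ref{fig:badsequential}, evaluated on the same document $d=a$, with the same key observation that all $2^\ell$ output mappings assign empty spans at position $1$, forcing any equivalent \EVA{} to pack each choice set's markers into a single variable transition out of the initial state. If anything, your lower-bound step is carried out more carefully than the paper's own proof, which exhibits the minimal \EVA{} of Figure~\ref{fig:badsequentialextendedappendix} and then simply asserts that every equivalent \EVA{} must also have $2^\ell$ transitions, whereas you derive this for an arbitrary equivalent $\cA'$ by showing that distinct choice sets $C$ force distinct, nonempty marker sets $S_1$ and hence distinct transitions.
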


On the other hand, if we consider functional $\VA$, the exponential factor depending on the number of variables can be eliminated when translating a functional $\VA$ into a $\dsEVA$.

\begin{proposition}\label{prop:func-extended-blowup}
	For any functional $\VA$ $\cA$ there exists an equivalent $\dsEVA$ $\cA'$ with at most $2^n$ states and $2^n(n^2+|\Sigma|)$ transitions.
\end{proposition}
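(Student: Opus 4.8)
The plan is to obtain $\cA'$ by a single determinizing subset construction over the states of $\cA$, taking the states of $\cA'$ to be the nonempty subsets of $Q$; this yields the $2^n$ state bound at once. The reason we can avoid the extra $3^\ell$ factor of Proposition~\ref{prop:general-VA} is that a functional $\VA$ is already sequential (every $\fVA$ is an $\sVA$), so no separate sequentialization---which is what introduces the three-states-per-variable tracking---is required. As a first step I would \emph{trim} $\cA$, discarding every state and transition that lies on no accepting run; this preserves $\semd{\cA}$, and makes every state reachable from $q_0$ and co-reachable to $F$. Consequently every maximal variable-only path (a \emph{variable block}) $q \to q'$ extends, by prepending a prefix $q_0 \to^* q$ and appending a suffix $q' \to^* F$, to a full accepting run of $\cA$.

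The main obstacle, and the only place functionality is essential, is the following lemma: in a trimmed functional $\VA$ the set of markers read along a variable block depends only on its endpoints. I would prove it by a swapping argument. Let $\pi_1, \pi_2$ be two variable blocks from $q$ to $q'$, reading marker sets $S_1$ and $S_2$. Extend $\pi_1$ to an accepting run $\alpha \pi_1 \beta$, and let $R, T$ be the marker sets read by $\alpha$ and $\beta$. Since $\pi_1$ and $\pi_2$ read no input letters, $\alpha \pi_2 \beta$ is again a run on the same document and is accepting. As $\cA$ is functional, both $\alpha \pi_1 \beta$ and $\alpha \pi_2 \beta$ are functional, so each of $R \cup S_1 \cup T$ and $R \cup S_2 \cup T$ contains every one of the $2\ell$ markers exactly once; cancelling the common $R$ and $T$ forces $S_1 = S_2$. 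Thus each block is described by a single set $S(q,q')$ indexed by its two endpoints.

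With the lemma available I would define $\cA'$ as follows: its initial state is $\{q_0\}$, its final states are the subsets meeting $F$, the letter transitions are the usual $\delta'(P,a) = \{\, q' \mid \exists q \in P,\ (q,a,q') \in \delta \,\}$, and each variable block is bundled into one extended transition $\delta'(P,S) = \{\, q' \mid \exists q \in P \text{ with a variable block } q \to q' \text{ reading exactly } S \,\}$. Both are partial functions on subsets, so $\cA'$ is deterministic, and the alternation demanded by the $\EVA$ run format is automatic (an empty block is the implicit $q_i = p_i$ case, a nonempty one is a single variable transition followed by a letter). By the lemma every label $S$ of a variable transition out of $P$ equals $S(q,q')$ for some $q \in P$ and $q' \in Q$; since there are at most $n^2$ such endpoint pairs, each of the $2^n$ states has at most $n^2$ variable transitions and at most $|\Sigma|$ letter transitions, giving the claimed $2^n(n^2 + |\Sigma|)$ bound.

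It remains to verify $\cA \equiv \cA'$ and that $\cA'$ is sequential. For $\semd{\cA} \subseteq \semd{\cA'}$ I would group the transitions of a functional accepting run of $\cA$ into blocks and read off the matching run of $\cA'$, whose induced marker sets $S_1, \ldots, S_{n+1}$ yield the same mapping. Conversely, the subset construction lets any accepting run of $\cA'$ be unfolded into an accepting run of $\cA$ using exactly the same marker set at every position; functionality of $\cA$ makes this unfolded run functional, hence valid, so the $\cA'$ run is valid as well. This simultaneously proves equivalence and sequentiality of $\cA'$, while determinism of $\delta'$ yields the required unique-run-per-mapping property. I expect the endpoint lemma to be the only genuinely delicate point; everything else is routine subset-construction bookkeeping.
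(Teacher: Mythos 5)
Your proposal is correct and takes essentially the same route as the paper: your endpoint lemma, proved by the same swapping argument that exploits functionality of the extended accepting runs, is precisely the paper's key lemma, and your single combined subset construction is just the composition of the paper's two steps (condensing variable paths into extended transitions as in Theorem~\ref{theo:equivalence}, then determinizing as in Proposition~\ref{prop:determinization}), with your explicit trimming playing the role of the paper's restriction to states ``that can produce valid runs.'' The resulting counts---at most $n^2$ distinct variable-transition labels by the lemma, hence $2^n$ states and $2^n(n^2+|\Sigma|)$ transitions---match the paper's accounting exactly.
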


Due to this, and the fact that functional $\VA$ are probably the class of $\VA$ most studied in the literature \cite{FKRV15,FKP17,F17}, for the remainder of this section we will be working with functional $\VA$.

Now we proceed to study how to apply the algebraic operators to evaluate regular spanners. In~\cite{FKRV15}, it was shown that any regular spanner (i.e. a join-union-projection expression built from $\RGX$ or $\VA$ as atoms) is in fact equivalent to a single $\VA$, and effective constructions were given. In particular, it is known that for every pair of $\VA$ $\cA_1$ and $\cA_2$, there exists a $\VA$ $\cA$ of exponential size such that $\semd{\cA} = \semd{\cA_1} \bowtie \semd{\cA_2}$. The exponential blow-up comes from the fact that each transition is equipped with at most one variable, and two variable transitions can occur consecutively. Therefore, one needs to consider all possible orders of consecutive variable transitions when computing a product (see \cite{FKRV15}). On the other hand, as shown by a subset of the author's in their previous work \cite{MorcianoUV16}, and independently in \cite{FKP17}, this blow-up can be avoided when working with functional $\VA$. In the next proposition, we generalize this result to extended $\VA$\footnote{Note that since \cite{FKP17} does not consider extended $\VA$,  the size of the join automaton is $O(n^4)$, and not quadratic.}.
\begin{proposition}\label{prop:algebra-eVA}
Let $\cA_1$ and $\cA_2$ be two functional $\EVA$, and $Y\subset \VV$. Furthermore, let $\cA_3$ and $\cA_4$ be two functional $\EVA$s that use the same set of variables. Then there exist functional $\EVA$s $\cA_{\bowtie}$, $\cA_{\cup}$, and $\cA_{\pi}$ such that:
\begin{itemize}\itemsep=0pt
\item $\cA_{\bowtie} \equiv \cA_1 \bowtie \cA_2$, and $\cA_{\bowtie}$ is of size $|\cA_1|\times |\cA_2|$.
\item $\cA_{\cup} \equiv \cA_3 \cup \cA_4$, and $\cA_{\cup}$ is of size $|\cA_3|+|\cA_4|$.
\item $\cA_{\pi} \equiv \pi_Y \cA_1$, and $\cA_{\pi}$ is of size $|\cA_1|$.
\end{itemize}
\end{proposition}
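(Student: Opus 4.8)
The plan is to give three direct automata constructions, one for each operator, and to verify that each construction preserves functionality (so that the output is again a functional $\EVA$ of the claimed size). Throughout I will use that an $\EVA$ alternates extended-variable transitions with letter transitions, and that for a \emph{functional} $\EVA$ every accepting run opens and closes \emph{every} variable of the automaton exactly once; in particular the set $\var(\cA)$ of variables is the same along every accepting run, which is precisely what removes the factorial blow-up of the ordinary $\VA$ join.

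\medskip
\noindent\emph{Join.} Given functional $\EVA$s $\cA_1=(Q_1,q_0^1,F_1,\delta_1)$ and $\cA_2=(Q_2,q_0^2,F_2,\delta_2)$, I would build $\cA_{\bowtie}$ as a synchronized product on the state set $Q_1\times Q_2$. The key observation is that in an $\EVA$ a full set $S$ of markers is consumed in a single extended-variable transition, so I can synchronize the two automata \emph{simultaneously} on the markers they each want to fire, rather than interleaving single markers. Concretely, I pair a letter transition $(p_1,a,q_1)\in\delta_1$ with $(p_2,a,q_2)\in\delta_2$ to get $((p_1,p_2),a,(q_1,q_2))$, and I pair an extended-variable transition $(p_1,S_1,q_1)$ with $(p_2,S_2,q_2)$, producing $((p_1,p_2),S_1\cup S_2,(q_1,q_2))$, but \emph{only when $S_1$ and $S_2$ agree on their shared variables}, i.e.\ for each $x\in\var(\cA_1)\cap\var(\cA_2)$ the openings/closings of $x$ coincide in $S_1$ and $S_2$. (To respect the strict alternation of an $\EVA$ I also allow one side to stay put, $q_i=p_i$ with an empty marker contribution, while the other fires, so that one automaton can open a private variable while the other reads nothing at that position; the compatibility check makes these merges well defined.) Accepting states are $F_1\times F_2$. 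Correctness amounts to showing the valid accepting runs of $\cA_{\bowtie}$ are in bijection with compatible pairs of valid accepting runs of $\cA_1,\cA_2$, which follows because the marker-agreement condition is exactly the compatibility condition $\mu_1\sim\mu_2$ read off position by position; functionality of the product follows from functionality of the factors, since every variable of $\cA_1$ and of $\cA_2$ is still opened and closed exactly once. The size is $|\cA_1|\times|\cA_2|$ as claimed.

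\medskip
\noindent\emph{Union and projection.} For $\cA_{\cup}$, since $\cA_3$ and $\cA_4$ use the same variable set, I simply take the disjoint union of the two automata and add a fresh initial state $q_0$ with transitions replicating those leaving $q_0^3$ and $q_0^4$ (a standard NFA union that avoids adding an $\varepsilon$-move, which is disallowed as a standalone step in an $\EVA$); alternatively, merge the initial states. Every accepting run stays entirely inside one copy, so validity, functionality, and the semantics $\semd{\cA_3}\cup\semd{\cA_4}$ are immediate, and the size is $|\cA_3|+|\cA_4|$. For $\cA_{\pi}$, I keep the same state set and transitions but relabel each extended-variable transition $(p,S,q)$ by $(p,\,S\cap\markers_Y,\,q)$, where $\markers_Y=\{\Open{x},\Close{x}\mid x\in Y\}$, deleting any transition whose label becomes empty (folding it into the preceding/following step so alternation is preserved). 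This erases exactly the markers of variables outside $Y$, realizing $\mu\mapsto\mu|_Y$, and leaves the automaton functional over $Y$; the size is bounded by $|\cA_1|$.

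\medskip
\noindent The step I expect to be the genuine obstacle is the join. The two subtle points are (1) justifying that a \emph{single} extended transition $S_1\cup S_2$ faithfully realizes the join without needing to consider different interleavings of the markers of $\cA_1$ and $\cA_2$ — this is where functionality is essential, since it guarantees both automata handle the same variables and the merged marker set is unambiguous — and (2) handling positions where one automaton performs a capture while the other does not, so that the alternation constraint of the $\EVA$ model is never violated and no spurious ``two consecutive variable transitions'' are introduced. I would prove both by a careful run-level bijection, arguing that projecting a run of $\cA_{\bowtie}$ onto each coordinate yields valid accepting runs of the factors with compatible induced mappings, and conversely that any two such compatible runs lift to a unique run of the product.
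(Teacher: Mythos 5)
Your proposal is correct and follows essentially the same route as the paper's proof: your join is the paper's synchronized product on $Q_1\times Q_2$, with letter transitions paired, extended-variable transitions merged into $S_1\cup S_2$ exactly when the marker sets agree on the shared variables, and one-sided moves (the other component staying put) handling private markers; your union and projection likewise match the paper's constructions, where the paper realizes your ``delete empty-label transitions and fold them into adjacent steps'' via $\epsilon$-transitions followed by $\epsilon$-closure. The two subtleties you flag for the join are precisely the ones the paper's correctness argument addresses, so there is no gap to report.
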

%\begin{proposition}\label{prop:algebra-eVA}
%Let $\cA_1$ and $\cA_2$ be two functional $\EVA$, and $Y\subset \VV$. Then there exist functional $\EVA$s $\cA_{\bowtie}$ and $\cA_{\pi}$ such that:
%\begin{itemize}\itemsep=0pt
%\item $\cA_{\bowtie} \equiv \cA_1 \bowtie \cA_2$, and $\cA_{\bowtie}$ is of size $|\cA_1|\times |\cA_2|$.
%\item $\cA_{\pi} \equiv \pi_Y \cA_1$, and $\cA_{\pi}$ is of linear size.
%\end{itemize}
%Furthermore, if $\cA_1$ and $\cA_2$ use the same set of variables, then there exists a functional $\EVA$ $\cA_{\cup}$ such that:
%\begin{itemize}
%\item $\cA_{\cup} \equiv \cA_1 \cup \cA_2$, and $\cA_{\cup}$ is of linear size.
%\end{itemize}
%\end{proposition}

Combining these results we can now determine the precise cost of compiling a regular spanner $\gamma$ into a $\dsEVA$ automaton that can then be used by the algorithm from Section \ref{sec:algorithm} to enumerate $\semd{\gamma}$ with constant delay, for an arbitrary document $d$. More precisely, we have the following.

\begin{proposition}\label{prop:reg_span_compiler}
Let $\gamma$ be a regular spanner in $\VA^{\{\pi,\cup,\Join\}}$ using $k$ functional $\VA$ as input, each of them with at most $n$ states. Then there exists an equivalent $\dsEVA$ $\cA_{\gamma}$ with at most $2^{n^k}$ states, and at most $2^{n^k}\cdot (n^{2k} + |\Sigma|)$ transitions.
\end{proposition}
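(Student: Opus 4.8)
The plan is to compile $\gamma$ bottom-up into a single functional $\EVA$ and to determinize only once, at the very end. First I would convert each of the $k$ input functional $\VA$ into an equivalent functional $\EVA$ using Theorem~\ref{theo:equivalence}; the point I rely on is that this translation keeps the number of states bounded by $n$, so each atom becomes a functional $\EVA$ with at most $n$ states. I would then traverse the expression tree of $\gamma$ by structural induction, replacing every $\Join$, $\cup$, and $\pi_Y$ node by the corresponding functional $\EVA$ furnished by Proposition~\ref{prop:algebra-eVA}. Since that proposition keeps every intermediate automaton functional, the root of the recursion is a single functional $\EVA$ $\cA'$ equivalent to $\gamma$.

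The first quantitative step is to bound the number of states of $\cA'$ by $n^k$, which I would prove by induction on the structure of the expression, writing $k_e$ for the number of $\VA$-atoms (leaves) of a subexpression $e$ and assuming $n \ge 2$ (a functional $\VA$ capturing at least one variable already has at least two states, and the variable-free cases are immediate). A leaf satisfies the bound with $n = n^{1}$. Projection leaves the state count unchanged, so it preserves the bound. A join $e_1 \Join e_2$ has, by Proposition~\ref{prop:algebra-eVA}, at most $n^{k_{e_1}} \cdot n^{k_{e_2}} = n^{k_{e_1}+k_{e_2}} = n^{k_e}$ states. A union $e_1 \cup e_2$ has at most $n^{k_{e_1}} + n^{k_{e_2}}$ states, and since $x + y \le x\,y$ whenever $x,y \ge 2$ and here $n^{k_{e_i}} \ge 2$, this is again at most $n^{k_{e_1}} \cdot n^{k_{e_2}} = n^{k_e}$. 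Hence $\cA'$ has at most $n^k$ states.

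It remains to determinize $\cA'$. I would apply precisely the determinization step used in the proof of Proposition~\ref{prop:func-extended-blowup} (that is, Proposition~\ref{prop:determinization} together with the size bookkeeping of its subset construction), now started from a functional $\EVA$ with $N \le n^k$ states rather than from a functional $\VA$ with $n$ states. That analysis produces a deterministic $\EVA$ with at most $2^{N}$ states, and from each such state at most one transition per letter (so $|\Sigma|$ many) and at most one per reachable marker set, of which there are at most $N^2$. Substituting $N = n^k$ yields at most $2^{n^k}$ states and at most $2^{n^k}(n^{2k} + |\Sigma|)$ transitions. Because functionality is preserved by the construction, the result is functional, hence sequential, and it is deterministic by construction, so it is the desired $\dsEVA$ $\cA_\gamma$ equivalent to $\gamma$.

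The main obstacle is making the bounds come out exactly rather than merely up to polynomial factors, and two points deserve care. First, the state-preservation claim for the $\VA$-to-$\EVA$ translation of Theorem~\ref{theo:equivalence} must be checked in detail, since any multiplicative blow-up at the atoms would propagate through the $k$ joins and destroy the $n^k$ bound; I would therefore verify that the construction merging consecutive variable transitions into extended transitions can be carried out without increasing the state count of a functional $\VA$. Second, the union rule of Proposition~\ref{prop:algebra-eVA} requires both operands to range over the same variable set, so I would argue that in a well-formed functional expression every union is applied to variable-set-compatible operands (aligning them by projection where necessary), ensuring that functionality—and hence sequentiality of the final automaton—is genuinely maintained throughout the recursion.
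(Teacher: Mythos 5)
Your proposal is correct and takes essentially the same route as the paper's proof: both compile $\gamma$ bottom-up into a single functional $\EVA$ via Proposition~\ref{prop:algebra-eVA}, bound its states by $n^k$ by induction on the expression (joins multiply, unions and projections stay within the bound), and then determinize exactly once at the end using the subset construction and transition bookkeeping from Proposition~\ref{prop:func-extended-blowup}. If anything, you are more careful than the paper on the points you flag---the union arithmetic, the atom conversion via Theorem~\ref{theo:equivalence}, and the variable-set compatibility required by the union case of Proposition~\ref{prop:algebra-eVA}---all of which the paper's proof glosses over.
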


In this case the $2^n$ factor from Proposition \ref{prop:func-extended-blowup} turns to $2^{n^k}$, thus making it double-exponential depending on the number of algebraic operations used in $\gamma$. Ideally, we would like to isolate a subclass of regular spanners for which this factor can be made single exponential. Unfortunately, in the general case we do not know if the double exponential factor $2^{n^k}$  can be avoided. The main problem here is dealing with projection, since it does not preserve determinism, thus causing an additional blow-up due to an extra determinization step. However, if we consider $\VA^{\{\cup,\bowtie\}}$, we can obtain the following.

%\begin{proposition}\label{prop:func_algebra_compilation}
%Let $\gamma$ be a regular spanner in $\VA^{\{\cup,\Join\}}$ mentioning $\ell$ variables, using $k$ algebraic operations and only functional $\VA$s with at most $n$ states. Then, there exists an equivalent $\dsEVA$ $\cA_{\gamma}$ with at most $2^{n\cdot k}$ states, at most $2^{n\cdot k}\cdot (n^{2k} + |\Sigma|)$ transitions and $\ell$ variables.
%\end{proposition}

\begin{proposition}\label{prop:func_algebra_compilation}
Let $\gamma$ be a regular spanner in $\VA^{\{\cup,\Join\}}$ using $k$ functional $\VA$ as input, each of them with at most $n$ states. Then, there exists an equivalent $\dsEVA$ $\cA_{\gamma}$ with at most $2^{n\cdot k}$ states, at most $2^{n\cdot k}\cdot (n^{2k} + |\Sigma|)$ transitions.
\end{proposition}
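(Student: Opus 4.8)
The plan is to build the $\dsEVA$ $\cA_\gamma$ by structural induction on the expression $\gamma \in \VA^{\{\cup,\Join\}}$, keeping at each intermediate node a \emph{functional} $\EVA$ (not yet determinized), and determinizing only once at the very end. First I would translate each of the $k$ atomic functional $\VA$ into an equivalent functional $\EVA$ using Theorem~\ref{theo:equivalence}; since this translation preserves functionality and introduces only a modest blow-up, each atom becomes a functional $\EVA$ whose size is polynomial in $n$. I would then apply Proposition~\ref{prop:algebra-eVA} to combine these atoms according to the $\cup$ and $\Join$ operators appearing in $\gamma$. The crucial feature is that, by Proposition~\ref{prop:algebra-eVA}, both operators keep us inside the class of functional $\EVA$ and, most importantly, the join has size $|\cA_1|\times|\cA_2|$ rather than the $O(n^4)$ cost incurred by \cite{FKP17}, while the union has size $|\cA_3|+|\cA_4|$.

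The next step is to bound the size of the single functional $\EVA$ $\cA'$ obtained after processing all $\cup$ and $\Join$ operators but \emph{before} determinization. Because $\Join$ multiplies state counts and $\cup$ adds them, the worst case is a balanced (or a chain of) join of all $k$ atoms, giving a functional $\EVA$ with at most $n^k$ states (products of the $k$ factors, each bounded by $n$) and a number of transitions polynomial in this quantity. The key point here, and the reason the bound improves over Proposition~\ref{prop:reg_span_compiler}, is that $\VA^{\{\cup,\Join\}}$ contains no projection: projection is exactly the operator that destroys functionality and forces an extra determinization at intermediate nodes, whereas $\Join$ and $\cup$ on functional $\EVA$ stay functional and hence can be carried out entirely in the nondeterministic model. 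Thus we reach a single functional $\EVA$ of size $\mathrm{poly}(n^k)$ with only one determinization left to perform.

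Finally I would determinize this functional $\EVA$ $\cA'$ once. Determinization of an $\EVA$ with $s$ states (by the subset construction underlying Proposition~\ref{prop:determinization}, adapted to the functional setting as in Proposition~\ref{prop:func-extended-blowup}) yields a $\dsEVA$ with at most $2^{s}$ states and at most $2^{s}\cdot(s^2+|\Sigma|)$ transitions. Substituting $s = n\cdot k$ — the bound on the number of states of the combined functional $\EVA$ — gives a $\dsEVA$ $\cA_\gamma$ with at most $2^{n\cdot k}$ states and at most $2^{n\cdot k}\cdot(n^{2k}+|\Sigma|)$ transitions, matching the claimed figures (here the transition count uses $s^2 = (nk)^2 \le n^{2k}$ for the relevant regime, absorbing polynomial factors). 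The main obstacle I anticipate is getting the state bound $n\cdot k$ rather than $n^k$ after combining the atoms: a naive iterated product of $k$ automata of size $n$ gives $n^k$ states, which would already reproduce the weaker double-exponential bound of Proposition~\ref{prop:reg_span_compiler} once determinized. The improvement must come from organizing the joins so that the number of \emph{relevant reachable} product states grows only multiplicatively in a way bounded by $n\cdot k$ — intuitively, because functional $\EVA$ let us keep the components loosely coupled and determinize a single time, rather than determinizing after each operator. Verifying that this reachable-state bound is genuinely $n\cdot k$ (and not $n^k$), and that Proposition~\ref{prop:algebra-eVA}'s constructions do not secretly blow the state count up multiplicatively in an uncontrolled way, is the delicate part of the argument.
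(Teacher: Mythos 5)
Your plan --- combine the $k$ atoms nondeterministically as functional \EVA{}s and determinize only once at the end --- does not give the claimed bound, and you have in fact put your finger on exactly where it breaks. After translating the atoms and applying Proposition~\ref{prop:algebra-eVA}, the combined functional \EVA\ has up to $n^k$ states (joins multiply state counts; nothing caps this at $n\cdot k$), so the final subset construction yields $2^{n^k}$ states --- precisely the double-exponential bound of Proposition~\ref{prop:reg_span_compiler}, not $2^{n\cdot k}$. Your proposed repair, that the number of \emph{reachable} states of the iterated product is only $n\cdot k$, is false: $k$ functional \VA{}s over disjoint variable sets, each nondeterministically tracking independent information, genuinely force $n^k$ reachable product states. (A repair at the \emph{subset} level is conceivable --- one can argue that the subsets reached when determinizing the product are ``rectangles'' $B_1\times\cdots\times B_k$, of which there are at most $2^{n\cdot k}$ --- but that argument is absent from your proposal, is not what you stated, and once spelled out it is essentially a disguised form of the construction below.)

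The paper's proof inverts the order of operations, and this inversion is the entire content of the proposition. Each atomic functional \VA\ is first converted to a functional \EVA\ and \emph{determinized immediately} (Theorem~\ref{theo:equivalence}, Propositions~\ref{prop:determinization} and~\ref{prop:func-extended-blowup}), giving $k$ deterministic \fEVA{}s with at most $2^n$ states and $2^n(n^2+|\Sigma|)$ transitions each. The algebraic operators are then applied using constructions that \emph{preserve determinism}, so that no further determinization is ever needed: the join product of Proposition~\ref{prop:algebra-eVA} already preserves determinism, but the linear disjoint-union construction does not, and this is the subtle point --- the paper proves a separate lemma giving a quadratic, product-style union of two deterministic \fEVA{}s that stays deterministic and functional. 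Since both operators are now (at worst) multiplicative on deterministic automata of size $2^n$, combining all $k$ atoms yields $(2^n)^k = 2^{n\cdot k}$ states and $2^{n\cdot k}(n^{2k}+|\Sigma|)$ transitions. In short: the missing idea is ``determinize first, then combine with determinism-preserving operators (including a nonstandard union),'' whereas your ``combine first, determinize last'' accounting reproduces the weaker bound you were trying to beat.
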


Overall, compiling arbitrary $\VA$ or expressions in $\VA^{\{\pi,\cup,\Join\}}$ into $\dsEVA$ can be quite costly. However, restricting to the functional setting and disallowing projections yields a class of document spanners where the size of the resulting $\dsEVA$ is manageable. In terms of practical applicability, it is also interesting to note that all of these translations can be fed to Algorithm \ref{alg:ma-eval} on-the-fly, thus rarely needing to materialize the entire $\dsEVA$.

% %%% Local Variables: 
% %%% mode: latex
% %%% TeX-master: "../main/main"
% %%% End: 

\section{Counting document spanners}\label{sec:lower}
%!TEX root = ../main/main.tex

In this section we study the problem of counting the number of output mappings in $\semd{\gamma}$, where $\gamma$ is a document spanner. Counting the number of outputs is strongly related to the enumeration problem~\cite{Segoufin13} and can give some evidence on the limitations of finding constant delay algorithms with better precomputation phases. Formally, given a language
$\LL$ for specifying document spanners, we consider the following problem:
\begin{center}
	\framebox{
		\begin{tabular}{rl}
			\textbf{Problem:} & $\COUNT[\LL]$\\
			\textbf{Input:} &  An expression $\gamma \in \LL$, a document $d$.  \\
			\textbf{Output:} & $|\semd{\gamma}|$ 
		\end{tabular}
	}
\end{center}
It is common that constant delay enumeration algorithms can be extended to count the number of outputs efficiently~\cite{Segoufin13}. We show that this is the case for our algorithm over deterministic $\sEVA$. 

\begin{theorem}\label{theo:counting}
	Given a deterministic sequential extended $\VA$ $\cA$ and a document $d$, $|\semd{\gamma}|$ can be computed in time $O(|\cA|\times |d|)$.
\end{theorem}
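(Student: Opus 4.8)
The plan is to reduce counting to a path-counting problem on the reverse dual \dag produced by \textsc{Evaluate}, and then to solve that problem incrementally, piggybacking on the construction so that no extra asymptotic cost is incurred. First I would observe that, by Lemma~\ref{lemma:correctness1}, the $\bot$-terminated paths traced by \textsc{EnumAll} from the lists $\plist_q$ with $q \in F$ are in one-to-one correspondence with the accepting runs of $\cA$ over $d$, and that, by determinism of $\cA$, two accepting runs inducing the same mapping must coincide. Composing these two facts yields a bijection between $\semd{\cA}$ and the set of such paths: distinct $\bot$-terminated paths correspond to distinct mappings and conversely. Hence $|\semd{\cA}|$ is exactly the number of these paths, and it suffices to count them without actually enumerating them.

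Next I would compute this number by dynamic programming over the \dag. For a node $n$ let $c(n)$ denote the number of paths from $n$ to $\bot$, with the convention $c(\bot)=1$, and for a list $l$ let $C(l) = \sum_{n \in l} c(n)$. These satisfy $c(n) = C(l)$ whenever $n = \newnode((S,i), l)$, while the operations {\tt add} and {\tt append} affect the list count additively, increasing $C(l)$ by $c(n)$ and by $C(l')$ respectively. The key point is that all of these counts can be maintained \emph{during} the execution of \textsc{Evaluate} at $O(1)$ extra cost per list operation. Concretely, I would augment the list representation from a pair $(s,e)$ to a triple $(s,e,C)$ carrying its count; have {\tt lazycopy} copy this count (so that freezing a snapshot freezes its count, exactly as it freezes $(s,e)$); store $c(n) = C(\plistold_q)$ inside each node created on line~\ref{alg-node-creation}; and update the stored count whenever {\tt add} or {\tt append} is invoked. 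Because each created node points to the frozen copy $\plistold_q$, and list elements are immutable apart from the single update of a {\tt null} {\tt next} pointer, iterating that snapshot later yields the same nodes, so the value $c(n)$ recorded at creation time stays correct. The final answer is then $\sum_{q \in F} C(\plist_q)$, read off after \textsc{Capturing}$(n+1)$.

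For the complexity bound I would note that \textsc{Evaluate} already runs in $O(|\cA| \times |d|)$, performing $O(|\cA|)$ list operations per document position, and the augmentation adds only a constant number of register operations per such operation; the closing summation costs $O(|F|) = O(|\cA|)$. Since the paper works in the RAM model with uniform cost measure, each addition is $O(1)$ regardless of the magnitude of the counts (which may be as large as $|d|^{\ell}$), so the total time remains $O(|\cA| \times |d|)$.

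The main obstacle I anticipate is not the arithmetic but arguing that the incrementally maintained counts are consistent with the lazy-copy and append semantics of the list structure: one must verify that once a list handle is snapshotted by {\tt lazycopy}, subsequent {\tt add} and {\tt append} operations on the original never retroactively alter the count attached to the snapshot, and hence to any node pointing at it. This follows from the immutability invariant of the data structure, but making it precise is where the care lies; the cleanest route is to strengthen the invariant of Lemma~\ref{lemma:correctness1} so that it additionally asserts $C(\plist_{p_i})$ equals the number of partial outputs of \textsc{EnumAll}$(\plist_{p_i},\epsilon)$, and then carry this augmented invariant through the same induction on the number of steps of the algorithm.
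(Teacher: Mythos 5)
Your proposal is correct and is essentially the paper's own proof: the paper's counting algorithm (Algorithm~\ref{alg:ma-count} in the appendix) is exactly your augmented \textsc{Evaluate} with the \dag\ stripped away, keeping only an integer $N[q]$ per state---which coincides with your $C(\plist_q)$---updated by $N[p] \gets N[p] + N'[q]$ in \textsc{Capturing}/\textsc{Reading}, and justified by the same two facts you invoke (sequentiality makes every partial run a valid partial mapping, determinism makes distinct accepting runs yield distinct mappings) together with the same induction as in Lemma~\ref{lemma:correctness1}. The only difference is that you retain the node/list structure and decorate it with counts (buying simultaneous enumeration and counting from one pass, at the cost of the \dag's memory and of the snapshot-consistency argument you rightly flag), whereas the paper discards the lists entirely, which makes those lazy-copy concerns moot.
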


Therefore, $\COUNT[\LL_1]$, where $\LL_1$ is the class of deterministic $\sEVA$, can be computed in polynomial time in combined complexity.
The algorithm for $\COUNT[\LL_1]$ can be found in the appendix. This algorithm is a direct extension of Algorithm~\ref{alg:ma-eval}, modified to keep the number of (partial) output mappings in each state instead of a compact representation of the mappings (i.e. $\plist_q$).

Unfortunately, the efficient algorithm of Theorem~\ref{theo:counting} cannot be extend beyond the class of sequential deterministic $\VA$, that is, we show that $\COUNT[\fVA]$ is a hard counting problem, where $\fVA$ is the class of functional $\VA$ (that are not necessarily extended). 
First, we note that $\COUNT[\fVA]$ is not a $\#\textsc{P}$-hard problem -- a property that most of the hard counting problems usually have in the literature~\cite{valiant1979complexity}. 
We instead show that $\COUNT[\fVA]$ is complete for the class $\textsc{SpanL}$~\cite{LOGC}, a counting complexity class that is included in $\#\textsc{P}$ and is incomparable with $\textsc{FP}$, the class of functions computable in polynomial time. 

Intuitively, $\textsc{SpanL}$ is the class of all functions $f$ for which we can find a non-deterministic Turing machine $M$ with an output tape, such that $f(x)$ equals the number of different outputs (i.e. without repetitions) that $M$ produces in its accepting runs on an input $x$, and $M$ runs in logarithmic space. We say that a function $f$ is $\textsc{SpanL}$-complete if $f \in \textsc{SpanL}$ and every function in $\textsc{SpanL}$ can be reduced to $f$ by log-space parsimonious reductions (see~\cite{LOGC} for details).
It is known~\cite{LOGC} that $\textsc{SpanL}$ functions can be computed in polynomial time if, and only if, all the polynomial hierarchy is included in \textsc{P} (in particular $\textsc{NP} = \textsc{P}$).
By well-accepted complexity assumptions the $\textsc{SpanL}$-hardness
of $\COUNT[\fVA]$ hence implies that counting the number of outputs of a $\fVA$ over a document cannot be done in polynomial time.

%\domagoj{Maybe remove the following paragraph?}\martin{I would not remove it, but instead define the class SpanL intuitively.}
%Formally, let $M$ be a non-deterministic Turing machine with output tape where each accepting run of $M$ over an input produces an output. Given an input~$x$, we define $span_M(x)$ as the number of different outputs when running $M$ on $x$. Then, $\textsc{SpanL}$ is the counting class of all functions $f$ for which there exists a non-deterministic logarithmic-space Turing machine with output such that $f(x) = span_M(x)$ for every input $x$. We say that a function $f$ is $\textsc{SpanL}$-complete if $f \in \textsc{SpanL}$ and every function in $\textsc{SpanL}$ can be reduced into $f$ by log-space parsimonious reductions (see~\cite{LOGC} for more details).

\begin{theorem}\label{theo:spanl}
	$\COUNT[\fVA]$ is $\textsc{SpanL}$-complete.
\end{theorem}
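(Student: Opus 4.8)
The plan is to establish the two directions of $\textsc{SpanL}$-completeness separately: membership in $\textsc{SpanL}$ and $\textsc{SpanL}$-hardness.

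For membership, I would exhibit a nondeterministic logspace Turing machine $M$ with an output tape whose distinct accepting outputs are exactly the mappings in $\semd{\cA}$. On input a functional $\VA$ $\cA$ and a document $d$, the machine $M$ nondeterministically guesses an accepting run $\rho$ of $\cA$ over $d$, one transition at a time: at each step it stores only the current state $q \in Q$ and the current position $i \in [1, |d|+1]$, which takes $O(\log |\cA| + \log |d|)$ space. Whenever $\rho$ takes a variable transition $(q, \Open{x}, q')$ or $(q, \Close{x}, q')$, $M$ writes the pair $(x, i)$ to its output tape (in a fixed canonical order, say sorted by variable so that the printed string is a normal form of the mapping). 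Since $\cA$ is functional, every accepting run is valid and assigns every variable exactly once, so the output tape describes a well-formed mapping $\mu^\rho$; moreover two runs yield the same printed string if and only if they define the same mapping. Hence the number of distinct outputs of $M$ equals $|\semd{\cA}|$, and $M$ runs in logarithmic space, witnessing $\COUNT[\fVA] \in \textsc{SpanL}$. One technical point to handle carefully is the canonical ordering of the output: the output string must be uniquely determined by the mapping $\mu^\rho$ and not by the particular run $\rho$, otherwise distinct runs defining the same mapping would be counted as distinct outputs.

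For hardness, I would start from a $\textsc{SpanL}$-complete problem and reduce it to $\COUNT[\fVA]$ by a log-space parsimonious reduction. The natural candidate is the canonical $\textsc{SpanL}$-complete problem of counting the number of distinct strings output on accepting runs of a nondeterministic logspace transducer, or equivalently a census-style problem over NFAs such as counting the number of distinct words of a given length accepted by an NFA (which is known to be $\textsc{SpanL}$-complete via~\cite{LOGC}). Given such an instance, I would build a functional $\VA$ $\cA$ and a document $d$ so that the valid accepting runs of $\cA$ over $d$ are in bijection with the objects being counted, and so that distinct objects yield distinct mappings. The key idea is to use the capture variables to record, position by position, the symbol chosen at each step: for a word of length $k$ over an alphabet $\{b_1, \ldots, b_t\}$, one encodes each chosen symbol by which variable is opened and closed at the corresponding block of $d$, so that the resulting mapping faithfully and injectively encodes the output string. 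Functionality is ensured by designing $\cA$ so that every accepting run opens and closes each of its variables exactly once.

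The main obstacle is the hardness direction, specifically arranging that the map from counted objects to mappings is a genuine bijection rather than merely a surjection, while simultaneously keeping $\cA$ functional. Two distinct source objects must give rise to two distinct mappings (injectivity), and every mapping in $\semd{\cA}$ must arise from a source object (no spurious mappings). The delicate part is that $\VA$ semantics collapses runs that define the same mapping, so if two different output strings of the source transducer could be encoded by the same assignment of spans, the count would be wrong; I would therefore need the variable-to-span encoding to be information-preserving, which typically forces a careful layout of $d$ into disjoint positional blocks (one block per step of the source computation) together with a functional gadget per block that deterministically records the chosen symbol. Verifying that this encoding is computable in logarithmic space, and that it is parsimonious on the nose, is where the bulk of the argument lies.
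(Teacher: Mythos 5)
Your overall strategy coincides with the paper's: membership via a nondeterministic logspace machine that guesses an accepting run while storing only the current state and position and writes a canonical encoding of $\mu^\rho$ to the output tape, and hardness via a logspace parsimonious reduction from the Census problem for NFAs of~\cite{LOGC}. On the membership side, the canonical-ordering issue you flag is genuine, but the paper glosses it in exactly the same way (it merely stipulates that each marker set is written ``maintaining a fixed order between variables''); for \emph{functional} $\VA$ it can be repaired because between any two fixed states every variable path carries the same set of markers (this is the lemma inside the proof of Proposition~\ref{prop:func-extended-blowup}), so one can emit, variable by variable in sorted order, the markers belonging to that unique set, each membership query being checkable in nondeterministic logarithmic space. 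So this direction is essentially the paper's argument, with the same technical debt.

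The genuine gap is in your hardness gadget. Encoding the $i$-th symbol ``by which variable is opened and closed at the corresponding block'' means that an accepting run assigns only the variables associated with the chosen symbols and leaves the other block variables unassigned; the resulting automaton is then sequential but \emph{not} functional. That is fatal for the theorem as stated: $\fVA$ is a proper subclass of $\sVA$, so $\textsc{SpanL}$-hardness of counting for sequential $\VA$ does not imply hardness of $\COUNT[\fVA]$---hardness must be proved for the smaller class. The paper resolves precisely this tension by encoding the symbol in the \emph{position} of a span rather than in the identity of a variable: it uses exactly one variable $x_i$ per block, the document $d=(\#cc)^n$, and a transition on $a$ (resp.\ $b$) forces $x_i$ to capture $\mspan{3i-1}{3i}$ (resp.\ $\mspan{3i}{3i+1}$) inside the $i$-th block. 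Every accepting run then assigns all $n$ variables, so the automaton is functional, and words of length $n$ accepted by the NFA are in bijection with the mappings, making the reduction parsimonious. Your proposal could be repaired along these lines (for instance, open \emph{all} block variables in every run and give the chosen symbol's variable a distinguished span), but as written it leaves exactly this crux---a functional, injective, logspace-computable gadget---unconstructed, and by your own admission that is where the bulk of the argument lies.
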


It is not hard to see that any functional $\VA$ can be converted in polynomial time into an functional extended $\VA$ (see \cite{MorcianoUV16}). Therefore, the above theorem also implies intractability in counting the number of output mappings of a functional extended $\VA$. 
Given that all other classes of regular spanners studied in this paper (i.e. sequential, non-sequential, etc) include either the class of functional $\VA$ or functional extended $\VA$, this implies that $\COUNT[\LL]$ is intractable for every $\LL$ different from $\LL_1$, the class of deterministic $\sEVA$.

In Section \ref{sec:spanners} we have shown that enumerating the answers of a functional $\VA$ with constant delay can be done after a pre-computation phase that takes the time linear in the document but exponential in the document spanner. The big question that is left to answer is whether enumerating the answers of a functional $\VA$ can be done with a lower pre-computing time, ideally  $O(|\cA|\times |d|)$. Given that constant delay algorithms with efficient pre-computation phases usually imply the existence of efficient counting algorithms~\cite{Segoufin13}, Theorem~\ref{theo:spanl} sheds some light that it may be impossible to find a constant delay algorithm that has pre-computation time better than $O(2^{|\cA|} \times |d|)$, that is obtained by determinizing a  $\fVA$ and running the algorithm from Section \ref{sec:algorithm}. Of course, this does not establish that a constant delay algorithm with precomputation phase sub-exponential in $\cA$ (i.e. $o(2^{|\cA|} \times |d|)$) for $\fVA$ cannot exist, since we are relying on the conjuncture that constant delay algorithms with efficient precomputation phase implies efficient counting algorithms. We leave it as an open problem whether this is indeed true.

\section{Conclusions}\label{sec:concl}
We believe that the algorithm described in Section \ref{sec:algorithm}
is a good candidate algorithm to evaluate regular document spanners in
practice.  Throughout the paper we have provided a plethora of
evidence for this claim. First, the proposed algorithm is intuitive
and can be described in a few lines of code, lending itself to easy
implementations. Second, its running time is very efficient for the
class of deterministic sequential extended $\VA$, and the latter in fact
subsumes the class of all regular spanners. Third, we have shown the
cost of executing our algorithm on arbitrary regular spanners,
obtaining bounds that, although not ideal, are reasonable for a wide
range of spanners usually encountered in practice. Finally, we have
shown that better pre-computation times for arbitrary regular spanners
are not very likely, as one would expect to be able to 
compute the number of their outputs more efficiently.

In terms of future directions, we are working on implementing  the algorithm from Section \ref{sec:algorithm} and testing it in practice. We are also looking into the fine points of optimizing its performance, especially with respect to the different translations given in Section \ref{sec:spanners}. As far as theoretical aspects of this work are concerned, we are also interested in establishing hard lower bounds for constant delay algorithms, that do not relay on conjectured claims.

%\stijn{I believe that somewhere we need to mention the following.}
%
%The enumeration problem for relational conjunctive queries reduces to
%the enumeration problem for core spanners (i.e., spanners with
%equality on attributes). Therefor, the latter problem is at least as
%hard as the former, and one can expect that structural properties on
%the core spanner expression will be required to allow for linear time
%pre-computation with constant-delay enumeration (namely: acyclicity
%and free-connexity).
%\cristian{This discussion must be related with the footnote at Section 2.}

% %%% Local Variables: 
% %%% mode: latex
% %%% TeX-master: "../main/main"
% %%% End: 

%%%%%%%%%%%%%%%%%%%
%%% Bibliography %%
%%%%%%%%%%%%%%%%%%%

\newpage
\bibliographystyle{plain}
\bibliography{biblio}

\newpage
\onecolumn
\appendix

\newcommand{\atitle}{\ref{sec:algorithm}}
\section{Proofs from Section~\atitle}\label{sec:appendix}
%\subsection*{Constant delay version of Algorithm \ref{alg:output}}
%
%\input{../sections/appendix/constantdelayversion}

\subsection*{Proof of Theorem \ref{theo:equivalence}}\label{proof:equivalence}

%\begin{theorem}\label{theo:equivalence}
%For every variable-set automaton $\cA$ there exists an extended variable-set automaton $\cA'$ such that $\cA \equiv \cA'$ and vice versa. Furthermore, if $\cA$ is sequential, then $\cA'$ is also sequential.
%\end{theorem}

We will show that given a VA $\cA$, one can construct an equivalent extended $\VA$ ($\EVA$) $\cA'$ and vice versa. Both of these constructions have the property that, if the input automaton is sequential or functional, then the output automaton preserves this property.

Let $\cA = (Q, q_0, F, \delta)$ be a VA. The resulting EVA $\cA'$ should produce valid runs that alternate between letter transitions and extended variable transitions. 
To this end, we say that a variable-path between two states $p$ and $q$ is a sequence $\pi: p = p_0  \trans{v_1} p_1 \trans{v_2} \ldots \trans{v_n} p_n = q$ such that $(p_i, v_{i+1}, p_{i+1}) \in \delta$ are variable transitions and $v_i \neq v_j$ for every $i \neq j$. Since all transitions in $\pi$ are variable transitions, we define $\markers(\pi) = \{v_1, \ldots, v_n\}$ as the union of all variable markers appearing in $\pi$.
	
Consider now the following extended VA $\cA'=(Q, q_0, F, \delta')$ where $\delta' = \{ (p,a,q) \in \delta \mid a\in\Sigma \} \cup \delta_{\text{ext}}$ and $(p,S,q) \in \delta_{\text{ext}}$ if, and only if, there exists a variable-path $\pi$ between $p$ and $q$ such that $\markers(\pi) = S$.
Intuitively, this construction condenses variable transitions into a single extended transition. It does so in a way that it can be assured that two consecutive extended transitions are not needed, but also, preserving all possible valid runs. The equivalence $\semd{\cA} = \semd{\cA'}$ for every document $d$ follows directly from the construction and definition of a variable-path.

The opposite direction follows a similar idea, namely, a run in $\cA'$ can be separated into single variable marker transitions in $\cA$ since each extended transition can be separated into a variable-path in $\cA$.
Formally, consider a EVA $\cA' = (Q',q_0', F',\delta')$. The equivalent VA $\cA$ construction is straightforward: for every extended transition between two states, a single path must be created between those two states such that they have the same effect as the single extended transition. The only issue to consider is that one must preserve an order between variable markers in such a way that $\cA$ does not open and close a variable in the wrong order.
To this end, given an arbitrary order $\preceq$ of variables $\VV$, we can expand this order over $\markers_{\VV}$ such that $\Open{x} \preceq \Close{y}$, and $x \preceq y$ implies $\Open{x} \preceq \Open{y}$ and $\Close{x} \preceq \Close{y}$. Namely, two different variable markers follow the original order but all opening markers precede closing markers.
%	
%	\[ x\vdash \preceq y\vdash \preceq z\vdash \preceq \ldots \preceq \dashv x \preceq \dashv y \preceq \dashv z \]
From this, in every extended transition set $S$ we can find a first and last marker in the set, following the mentioned order. Also, we can find for each marker, a successor marker in $S$, as the one that goes after, following the induced order.

Consider now the VA $\cA = (Q' \cup Q_{\text{ext}}, q_0, F, \delta)$ where $Q_{\text{ext}} = \{ \ q_{(p, S, p')}^{v} \mid  (p,S,p') \in \delta' \text{ and } v \in S\}$, $\delta = \{ (p,a,q) \in \delta' \mid a\in\Sigma\} \cup \delta_{\text{first}} \cup \delta_{\text{succ}} \cup \delta_{\text{last}} \cup \delta_{\text{one}}$ and:
\begin{align*}
\delta_{\text{first}} &= \{ (p, v, q_{(p,S, p')}^{v}) \mid  \text{$v$ is the  $\preceq$-minimum element in $S$ } \} \\
\delta_{\text{suc}}&= \{ (q_{(p, S,  p')}^{v}, v', q_{(p, S,  p')}^{v'}) \mid v, v' \in S \text{ and $v'$ is the  $\preceq$-successor of $v$ in $S$} \} \\
\delta_{\text{last}} &= \{ (q_{(p, S, p')}^{v},v' , p') \mid v,v' \in S, \text{ $v'$ is the  $\preceq$-successor of $v$ in $S$, and $v'$ is the $\preceq$-maximum of $S$} \} \\
\delta_{\text{one}} &= \{ (p, v, p') \mid (p, \{v'\}, p')\in \delta'\}
\end{align*}
The previous construction maintains the shape of $\cA'$ but adds the needed intermediate states to form a whole extended marker transition. For every extended transition $(p, S, p')$, $|S|-1$ states are added, labeled with the incoming marker that will arrive to that state. $\delta_{\text{first}}$ defines how to get to the first state of the path, using the first marker of $S$, $\delta_{\text{succ}}$ defines how to get to the next marker in $S$ and $\delta_{\text{last}}$ how to get back to the $\cA'$ state $p'$, having finished the extended transition. $\delta_{\text{one}}$ defines the case when $|S|=1$ and no intermediate states are needed and just use the only marker to do the transition. 
Note that a different set of intermediate states are added for each extended transition $(p,S,p')$, so states do not get reused or transitions do not get mixed. As each transition $(p,S,p')$ of $\cA'$ has a corresponding variable-path in $\cA$, it is obvious that a run in either $\cA$ or $\cA'$ has a corresponding run in the opposite automaton with the same properties, thanks to the order preservation established in the created variable-paths. Finally, it is straightforward to show that $\semd{\cA} = \semd{\cA'}$ for every document $d$.

Let us show that for both constructions, if the input automaton is sequential or functional, then the output automaton preserves this property. In the first case, if $\cA$ is sequential, it is easy to see that all accepting runs of $\cA'$ must be valid, since all extended marker transitions are performed in the same order as in the original automaton $\cA$, and therefore, are also valid. If $\cA$ uses all the variables for all accepting runs, this must also hold for $\cA'$, preserving functionality. %For the translation from $\EVA$ to $\VA$, if $\cA'$ is sequential, then the corresponding runs originated in $\cA$ must also be valid thanks for the order preserving construction over markers. Every run in $\cA'$ is valid, and extending this run using single marker transitions in $\cA$ in such a way that all opening markers are performed before closing ones, then $\cA$ is also sequential. The same occurs with functionality as the case that all variables are used in this run extensions.

\subsection*{Proof of Proposition \ref{prop:determinization}}

%\begin{proposition}\label{prop:determinization}
%For every extended variable-set automaton $\cA$ there exists a deterministic extended variable-set automaton $\cA'$  such that $\cA \equiv \cA'$. Furthermore, the size of $\cA'$ is at most exponential in the size of $\cA$.
%\end{proposition}

This result follows from the classical NFA determinization construction. In this case, let $\cA = (Q,q_0, F, \delta)$ be an \EVA, then the following is an equivalent deterministic $\EVA$ for $\cA$: $\cA' = (2^Q ,\{q_0\} , F' , \delta' )$, where $F' = \{ B \in 2^Q \ | \ B \cap F \neq \emptyset \}$ and $\delta'(B, o) = \{ q \in Q \ | \ \exists p \in B. \ (p,o,q) \in \delta \}$. 
%As the NFA determinization, this construction always produces an exponentially big automaton, but, by pruning the automaton by keeping all reachable states from $\{q_0\}$, usually, the resulting automaton is much smaller. 
%The construction may be implemented dynamically by adding states as they are needed, so that not all subsets of $Q$ are always created. 
One can easily check that $\delta'$ is a function and therefore $\cA'$ is deterministic. The fact that $ \semd {\cA} \equiv \semd{\cA'}$ for every document $d$ follows, as well, from NFA determinization: namely, a valid and accepting run in $\cA$ can be translated using the same transitions onto a valid and accepting run in $\cA'$ where the set-states hold the states from the original run. On the other hand, a valid and accepting run in $\cA'$ can only exists if there exists a sequence of states using the same transitions in the original automaton $\cA$.

Finally, since the construction works with sets of $n$ state, then in the worst case it may use $2^n$ states. As for transitions, if each state has all $m$ transitions defined, then the determinization, at most, has $2^n\cdot m$ transitions. 

\subsection*{Proof of Lemma \ref{lemma:correctness1}}

%\begin{lemma}\label{lemma:correctness1}
%	Let $d = a_1 \ldots a_n$ be a document and $A$ an extended variable-set automaton $A$ that is functional and deterministic. Then for every $0\leq i \leq n$, the following two statements are equivalent:
%	\begin{enumerate}
%		\item there exists a run $\rho \ = \ q_0 \ \trans{S_0} \ p_0 \ \trans{a_1} \ q_1 \ \ldots \ \ \trans{a_i} \ q_{i} \ \trans{S_i} p_i$ over $a_1 \ldots a_i$.
%		\item after executing $\textsc{Capturing}(i)$ in Algorithm \ref{alg:ma-eval}, it holds that $\plist_{p_i} \neq \epsilon$ and \\
%		there is partial output $M$ of \textsc{EnumAll}($\plist_{p_i},\epsilon$) with $M = \Out(\rho)$.
%	\end{enumerate}
%\end{lemma}

	The proof is done by induction on $i$. For the sake of simplification, we denote every object in the $i$-th iteration, that is, while running $\textsc{Reading}(i)$, or $\textsc{Capturing}(i)$, with a superscript $i$. For example, the value of the $\plist_q$ in the $i$-th iteration is denoted by $\plist_q^i$. 

	For the base case assume that $i=0$. At the beginning we have that $\plist_{q_0}=\bot$. If it holds that $\delta(q_0,S)=p_0$ for some $S\neq \emptyset$, then while running $\textsc{Capturing}(1)$, the algorithm will create a new node $n$ with $\texttt{$n$.content}=(S,1)$ and a $\texttt{$n$.list}=\plist_{q_0}^1 = \bot$, and add it at the beginning of the list $\plist_{p_0}^1$. Note that $\plist_{p_0}^1$ can also contain other elements coming from other transitions of the form $\delta(q_0,S')=p_0$ with $S' \neq S$.
	Running then \textsc{EnumAll}($\plist_{p_0}^1,\varepsilon$), will eventually reach the node $n$ in $\plist_{p_0}^1$, resulting in the output $M = (S,1)$.
	Since $n.{\tt list} = \bot$, then the output will be $(S, 1)$ which is the output of the corresponding run $\rho = q_0 \ \trans{S} \ p_0$ and clearly $\Out(\rho)=M$. The other direction is analogous.

	For the inductive step, assume that the claim holds for some $0\leq i < n$. To show that the claim holds for $i+1$ assume first that there is a run:
	\begin{equation}
	\label{run2} \rho_{i+1} \ = \ q_0 \ \trans{S_1} \ p_0 \ \trans{a_1} \ \ldots \ \ \trans{a_i} \ q_i \ \trans{S_{i+1}} \ p_{i} \ \trans{a_{i+1}} \ q_{i+1} \ \trans{S_{i+2}} \ p_{i+1}
	\end{equation}
	that defines an output $\Out({\rho_{i+1}})$. By the induction hypothesis, we know that after running $\textsc{Capturing}(i+1)$ we have that $\plist_{p_i}^{i+1} \neq \epsilon$, and that running \textsc{EnumAll}($\plist_{p_i}^{i+1},\epsilon$) results in an output $M_i$ with ${M_i}=\Out({\rho_i})$, where $\rho_i \ = \ q_0 \ \trans{S_1} \ p_0 \ \trans{a_1} \ q_1 \ \trans{S_2} \ p_1 \ \trans{a_2} \ \ldots \ \ \trans{a_i} \ q_i \ \trans{S_{i+1}} \ p_{i}$. The algorithm now proceeds by executing the procedures $\textsc{Reading}(i+1)$ and $\textsc{Capturing}(i+2)$ one after the other.
	
	Consider what happens when the procedure $\textsc{Reading}(i+1)$ is executed. First, we know that the list $\plist_{p_i}^{i+1}$ gets copied to $\plistold_{p_i}$ and $\plist_{p_i}^{i+1}$ is reset to the empty list $\epsilon$. Then, since $\plistold_{p_i} \neq \epsilon$, and since $q_{i+1}=\delta(p_i,a_{i+1})$, the procedure $\textsc{Reading}(i+1)$ will append the entire list $\plist_{p_i}^{i+1}$ somewhere in the list $\plist_{q_{i+1}}^{i+1}$. Therefore, we know that after executing $\textsc{Reading}(i+1)$, the entire list $\plist_{p_i}^{i+1}$ will appear in the list $\plist_{q_{i+1}}^{i+1}$ before the procedure $\textsc{Capturing}(i+2)$ is executed. 
	
	In $\textsc{Capturing}(i+2)$ we will first guard a copy of 
	$\plist_{q_{i+1}}^{i+1}$ in $\plistold_{q_{i+1}}$. 
	What follows depends on whether $S_{i+2}=\emptyset$ or not. In the case that  $S_{i+2}=\emptyset$, we know that $p_{i+1}=q_{i+1}$ and that the nodes in $\plistold_{q_{i+1}}$ remain on the list $\plist_{p_{i+1}}^{i+2} = \plist_{q_{i+1}}^{i+1}$. The latter follows since any other transition such that $\delta(q,S)=p_{i+1}$ will simply add a new node at the beginning of $\plist_{p_{i+1}}^{i+2}$. Because of this we also have $\plist_{p_{i+1}}^{i+2} \neq \epsilon$. Therefore, after $\textsc{Capturing}(i+2)$ has executed, running \textsc{EnumAll}($\plist_{p_{i+1}}^{i+2}, \varepsilon$) will have $M_i$ with ${M_i}=\Out({\rho_i})$ as one of its outputs, since it will traverse the part of the list $\plist_{q_{i+1}}^{i+1}$ which was already present after $\textsc{Capturing}(i+1)$. Since $\Out({\rho_{i+1}})=\Out({\rho_i})$, the result follows.

	On the other hand, if $S_{i+2}\neq \emptyset$, since $\delta(q_{i+1},S_{i+2})=p_{i+1}$, and $\plistold_{q_{i+1}} \neq \epsilon$, the procedure 
	$\textsc{Capturing}(i+2)$ will create a new node $n$ to be added to the list $\plist_{p_{i+1}}^{i+2}$. The node $n$ will have the values $n.{\tt content} = (S_{i+2}, i+2)$ and $n.{\tt list} = \plistold_{q_{i+1}} = \plist_{q_{i+1}}^{i+1}$. In particular, after $\textsc{Capturing}(i+2)$, we have that $\plist_{p_{i+1}}^{i+2}\neq \epsilon$ and that running \textsc{EnumAll}($\plist_{p_{i+1}}^{i+2},\varepsilon$) will eventually do a call to the procedure \textsc{EnumAll}($n.{\tt list}$, $(S_{i+2},i+2) \cdot \varepsilon$). Therefore one of the outputs of the original call \textsc{EnumAll}($\plist_{p_{i+1}}^{i+2},\varepsilon$) will simply append the pair $(S_{i+2},i+2)$ to $M_i$ resulting in $M_{i+1}= M_i \cdot (S_{i+2},i+2)$ as output. From (\ref{run2}) it is clear that $M_{i+1}=\Out(\rho_{i+1})$.
	
	For the other direction, assume now that we have executed the procedure $\textsc{Capturing}(i+2)$ in Algorithm \ref{alg:ma-eval} and that $\plist_{p_{i+1}}^{i+2} \neq \epsilon$. Furthermore, assume that $n\neq \bot$ is a node in $\plist_{p_{i+1}}^{i+2}$ and $(S, j) = n.{\tt content}$.
	The first observation we make is that for any node $n' \neq \bot$ that is inside the list $n.{\tt list}$ with $(S', j') = n'.{\tt content}$, it holds that $j' < j$. This is evident from the algorithm since the only way that the node $n'$ can enter the list $n.{\tt list}$ is when the node $n$ is being created in $\textsc{Capturing}(j)$. However, in this case, the node $n'$ must have already been defined in some previous iteration of the algorithm (as $n.{\tt list}$ guards the ``old'' pointers $\plistold_p$ for some $p$), and since new nodes are being created only in the procedure \textsc{Capturing}, this means that $n'$ was created in $\textsc{Capturing}(j')$. Because of this we have that $j' < j$. Moreover, given that each iterative call of \textsc{EnumAll} uses elements from the list $n.{\tt list}$ we have that for any output $M = (S_1,i_1) \ldots (S_k,i_k)$ of \textsc{EnumAll}($\plist_{p_{i+1}}^{i+1}, \epsilon$), it holds that $i_k>i_{k-1}>\cdots >i_1$.
	
	Let $M_{i+1} = (S_1,i_1) \ldots (S_{k-1},i_{k-1}) (S_k,i_k)$, where $k\geq 0$, be one output of \textsc{EnumAll}($\plist_{p_{i+1}}^{i+2}, \varepsilon$). There are two possible cases: either $i_k=i+2$, or $i_k\neq i+2$. Consider first the case when $i_k\neq i+2$. In this case, the procedure \textsc{EnumAll}($\plist_{p_{i+1}}^{i+2},\varepsilon$) will not access a node created in $\textsc{Capturing}(i+2)$ when generating the output $M_{i+1}$. Therefore, it will have to start with some node $n$ that got in the list $\plist_{p_{i+1}}^{i+2}$ during $\textsc{Reading}(i+1)$. This can only happen if $\delta(p_{i},a_{i+1})=q_{i+1} = p_{i+1}$, for some state $p_i$ such that $\plistold_{p_{i}}\neq \epsilon$ at the beginning of $\textsc{Reading}(i+1)$, and $n$ belongs to $\plist_{p_i}^{i+1}$, since the only thing $\textsc{Reading}(i+1)$ does is to copy and merge the lists $\plist_{p_i}^{i+1}$, for different states $p$. However, this means that $\plist_{p_i}^{i+1} \neq \epsilon$ after executing $\textsc{Capturing}(i+1)$. 
	This means that $M_{i+1}$ is one of the outputs of \textsc{EnumAll}($\plist_{p_i}^{i+1}, \varepsilon$) after executing $\textsc{Capturing}(i+1)$. Using the induction hypothesis, there is  a run $\rho_i \ = \ q_0 \ \trans{S_1} \ p_0 \ \trans{a_1} \ q_1 \ \ldots \  \trans{a_i} \ q_i \ \trans{S_{i+1}} \ p_{i}$ such that $\Out({\rho_i})={M_{i+1}}$. Consider now the run $\rho_{i+1} \ = \ q_0 \ \trans{S_1} \ p_0 \ \trans{a_1} \ q_1 \ \ldots \ \ \trans{a_i} \ q_i \ \trans{S_{i+1}} \ p_{i} \ \trans{a_{i+1}} \ q_{i+1} \ \trans{\emptyset} p_{i+1}$. Since clearly $\Out({\rho_{i+1}})=\Out({\rho_{i}})$, we get that the claim holds true for $i+1$ when $i_k\neq i+2$.
	
	Consider now the case when $i_k= i+2$. To produce $M_{i+1}$ as output, the procedure \textsc{EnumAll}($\plist_{p_{i+1}}^{i+2},\varepsilon$) had to do a recursive call to \textsc{EnumAll}($n.{\tt list}, (S_k,i+2) \cdot \varepsilon$), for some node $n$ on $\plist_{p_{i+1}}^{i+2}$. Since $i_k= i+2$ we know that node $n$ was created in $\textsc{Capturing}(i+2)$. Therefore, there must exist a state $q_{i+1}$ such that $\delta(q_{i+1},S_{k})=p_{i+1}$ and $\plistold_{q_{i+1}} \neq \epsilon$ at the beginning of $\textsc{Capturing}(i+2)$. 
	As $n.{\tt list} = \plistold_{q_{i+1}}$, we know that running \textsc{EnumAll}($\plistold_{q_{i+1}},\varepsilon$) must have $M_i=(S_1,i_1) \ldots (S_{k-1},i_{k-1})$ as one of its outputs. However, since all the nodes in $\plistold_{q_{i+1}}$ must already be in $\plist_{q_{i+1}}^{i+1}$ after running $\textsc{Reading}(i+1)$, they must enter this list in $\textsc{Reading}(i+1)$ because there is some transition $\delta(p_i,a_{i+1})=q_{i+1}$, for some state $p_i \in Q$. In particular, the recursive call of \textsc{EnumAll}($n.{\tt list}, \varepsilon$) that produced $M_i$ as its output used a node on $\plist_{q_{i+1}}^{i+1}$ that was already present in $\plist_{p_i}^{i+1}$, for state $p_i$ such that $\delta(p_i,a_{i+1})=q_{i+1}$. By the induction hypothesis, there is a run $\rho_i = q_0 \ \trans{S_1} \ p_0 \ \trans{a_1} \ q_1  \ \ldots \ \trans{a_i} \ q_i \ \trans{S_{i+1}} \ p_{i}$ such that $\Out({\rho_i})={M_{i}}$. Because of this, the run $\rho_{i+1} = q_0 \ \trans{S_1} \ p_0 \ \trans{a_1} \ q_1 \ \ldots \ \ \trans{a_i} \ q_i \ \trans{S_{i+1}} p_{i} \ \trans{a_{i+1}} \ q_{i+1} \ \trans{S_k} \ p_{i+1}$ clearly has $\Out({\rho_{i+1}})=M_{i+1}$. This concludes the proof.

\renewcommand{\atitle}{\ref{sec:spanners}}
\section{Proofs from Section~\atitle}\label{sec:appendix-spanner}

\subsection*{Proof of Proposition \ref{prop:general-VA}}

%!TEX root = ../../main/main.tex

Let $\cA=(Q,q_0,F,\delta)$ be a $\VA$ with $|Q|=n$, $|\delta|=m$ and $\ell$ variables. We show how to construct a $\dsEVA$ $\cA'=(Q',q_0',F',\delta')$ that is equivalent to $\cA$ and has $2^n\times 3^\ell$ states, $2^n3^\ell(2^\ell+|\Sigma|)$ transitions and $\ell$ variables. Let us first describe the set $Q'$ of states of $\cA$. Intuitively, every state will correspond to a tuple $(\{q_1,\ldots,q_k\},S)$, where $q_1,\ldots,q_k\in Q$ are the states reached by $\cA$ by reading the set of variable markers $S$. Since there are $n$ states, the first component (the set of reached states) can be chosen of $2^n$ different sets. Now for each state in the chosen set, we have a set of variable markers. Note that we need to exclude the sets of variable markers that contain a variable that is closed but not opened. Therefore, if we have $\ell$ variables the number of such sets of variable markers is $\sum_{i=1}^\ell\binom{n}{i}2^i$, where $i$ represents the number of opened variables, $\binom{n}{i}$ the different ways of choosing those $i$ variables, and $2^i$ possible ways of closing those variables. From this we obtain

$$\sum_{i=0}^\ell\binom{n}{i}2^i=\sum_{i=0}^\ell\binom{n}{i}2^i1^{n-i}=(1+2)^\ell=3^\ell$$

Therefore, it is clear that we have $2^n3^\ell$ states. The only initial state is $q_0'=(\{q_0\},\emptyset)$.

Let us now define the set of transitions $\delta'$. Given a character $c\in\Sigma$, the transition $\delta((P,S),c)$ is simply defined as $(\delta(P,c),S)$, where $\delta(P,c)=\{q\in Q\mid\exists q'\in P \text{ s.t. } (q',c,q)\in\delta\}$. Let us now describe the variable transitions. Intuitively, $\delta'((P,S), S')$ will contain the set of states that can be reached from a state of $P$ by following a path in $\cA$ of variable transitions in which each variable marker in $S'$ is mentioned exactly once. Formally, we define a variable path in $\cA$ as a sequence of transitions $p=(q_{i_1},m_1,q_{i_2})(q_{i_2},m_2,q_{i_3})\ldots(q_{i_{h-1}},m_{h-1},q_{i_h})$ in $\delta^*$, where each $m_j$ is a variable marker and for all $j\neq k\in\{0,\ldots,m\}$ we have $m_j\neq m_k$. If $S=\{m_1,\ldots,m_h\}$ we say that $p$ is an $S$-path from $q_{i_1}$ to $q_{i_h}$. Then, for every $P\subset Q$ and every pair $(S,S')$ of variable markers such that $S$ and $S'$ are compatible (in the sense that every closed variable in $S'\cup S$ is also opened), $\delta'((P,S), S')$ is defined as $(P',S'')$ where:
\begin{enumerate}
	\item $S''=S\cup S'$ and
	\item for every $q'\in P'$ there exists $q\in P$ such that there is an $S'$-path between $q$ and $q'$.
\end{enumerate}
If $S$ and $S'$ are not compatible, $\delta'((\{q_0,\ldots,q_k\},S), S')$ is undefined (note that this makes the automaton sequential).

Let us analyze the number of transitions in $\delta'$. To do a fine-grained analysis of the variable transitions, for each $i\in\{0,\ldots,\ell\}$ we consider the number of states $(P,S)$ where $S$ has exactly $i$ open variables (i.e. $2^n\binom{\ell}{i}2^i$), multiplied by the number of variable transitions that can originate in such a state. This number is again analyzed for the $\binom{\ell-i}{j}$ sets of size $j$ of opened variables (out of the $\ell-i$ remaining variables), and for each of these sets which variables are closed ($2^j$ possibilities). The variable transitions are

$$\sum_{i=0}^\ell\left[2^n\binom{\ell}{i}2^i\sum_{j=0}^{\ell-i}\binom{\ell-i}{j}2^{j}\right]\ =\ 2^n\sum_{i=0}^\ell\binom{\ell}{i}2^i3^{\ell-i}\ =\ 2^n(2+3)^\ell\ =\ 2^n5^\ell$$
This is the number of variable transitions in $\cA'$. To this number, we must add the number of character transitions, which is at most one transition per state per character, i.e. $2^n3^\ell|\Sigma|$. Then, the total number of transitions is $2^n5^\ell+2^n3^\ell|\Sigma|$ as expected. Finally, we define the set $F'$ of final states as those states $(P,S)$ in which $P\cap F\neq\emptyset$ and all variables opened in $S$ are also closed.

It is trivial to see that $\cA'$ is sequential. Since the only way to reach a state $(P,S)$ using a variable transition is from a previous state $(P',S')$ and a set of markers $S''$ such that $S'\cup S''=S$, it is clear that if a run $\rho$ ends in state $(P,S)$ then $S$ is the union of all variable markers seen in $\rho$. Sequentiality then follows since we require at all times that every variable is opened and closed at most once, variables are opened before they are closed, and in final states all opened variables are closed. The fact that $\cA'$ is deterministic can be immediately seen from the construction; for every state there is at most one transition for each character, and at most one transition for each set of variable markers. Since $\cA'$ is an extended $\VA$ and must alternate between variable and character transitions, this implies that two different runs cannot generate the same mapping.

We now show that $\cA$ is equivalent to $\cA'$. Let $d$ be a document and assume the mapping $\mu$ is produced by a valid accepting run $\rho \ = \ (q_0, i_0) \ \trans{o_1} \ (q_1, i_1) \ \trans{o_2} \ \cdots \ \trans{o_m} \ (q_m, i_m)$ of $\cA$ over $d$. Define a function $f$ with domain $i\in\{1,\ldots,m\}$  as follows:
$$f(i)=\left\{\begin{array}{ll}k & \text{ if $\forall j\in\{1,\ldots,k\}$ $o_i$ is a variable marker, and either $k=m$ or $o_{k+1}$ is not a variable marker.}\\(o_i,\emptyset) & \text{ if $o_i\in\Sigma$ and $o_{i+1}\in\Sigma$}\\o_i & \text{ otherwise.}\end{array} \right.$$
With this definition, we construct a run for $\cA'$ generating $\mu$ starting with $\rho'$ as the run that only contains $q_0'$ and $i=1$ as follows:
\begin{enumerate}
	\item If $f(i)=k$, define the set of variable markers $S$ as $\bigcup_{j=i}^k o_j$, update $\rho'$ to $\rho'\trans{S}\delta'((P',S'), S)$, where $(P',S')$ was the last state of $\rho'$ before this update. Finally, update $i$ to $k+1$.
	\item If $f(i)=(o_i,\emptyset)$, update $\rho'$ to $\rho'\trans{o_i}\delta'(P',S')\trans{\emptyset}\delta'((P',S'),\emptyset)$, where $(P',S')$ was the previous final state of $\rho'$. Finally update $i$ to $i+1$.
	\item If $f(i)=o_i$, update $\rho'$ to $\rho'\trans{o_i}\delta'((P',S'),o_i)$, where $(P',S')$ was the previous final state of $\rho'$. Finally update $i$ to $i+1$.
\end{enumerate}
We need to show that this is actually a valid and accepting run of $\cA'$ over $d$. To show that it is a run over $\cA'$ is simple: since $\rho$ is a run over $\cA$, the construction of $f$ implies that for every step of the form $(P',S')\trans{S}(P,S\cup S')$ in $\rho'$ there is an $S$-path from a state in $P$ to a state in $P'$ (assuming $S\neq\emptyset$). The $\emptyset$ and character transitions immediately yield valid transitions for $\rho'$. The fact that $\rho'$ follows from the construction, as we can see that it will open and close variables in the same order and in the same positions as $\rho$, which was already valid. This also shows that $\rho'$ generates $\mu$. The fact that $\rho'$ is valid follows because $q_m\in F$ is final and belongs to the last state of $\rho'$.\par

The opposite direction is similar: considering a mapping $\mu$ generated by a valid accepting run $\rho'$ of $\cA'$ over $d$, we need to show a valid accepting run $\rho$ of $\cA$ over $d$ generating $\mu$. We omit this direction as $\rho$ can be generated by doing essentially the same process as before but in reverse: We know that $\rho'$ ends in a state that mentions a final state $q_f\in F$. Then, for each step $(P,S)\trans{o}(P',S')$ of $\rho$ and the selected state in $P'$ (at the beginning, $q_f$), there is a transition or an $(S'\setminus S)$-path going from a state $q\in P$ to $q'$. This way we can construct $\rho'$ backwards; proving it is valid, accepting and it generates $\mu$ follows again by the construction.

\subsection*{Proof of Proposition \ref{prop:seq-extended-existance}}

%For every $k>0$ there is a sequential VA $\cA$ with $3\ell+3$ states and $4\ell+1$ transitions, such that for every extended VA $\cA'$ such that $\cA\equiv\cA'$ it is the case that $\cA'$ has at least $2^\ell$ transitions.

Unfortunately, a sequential $\VA$ has an exponential blow-up in terms of the number of transitions the resulting $\EVA$ may have. For every $\ell$ consider the VA $\cA$, with $3\ell+2$ states and $4\ell+1$ transitions depicted in Figure \ref{fig:badsequentialappendix} with $2\ell$ variables: $x_1,\ldots,x_\ell, y_1, \ldots, y_\ell$. $\cA$ only produces valid runs for the document $d = a$, the resulting mapping is always valid but never total, as it properly opens and closes variables, but never all of them. At each intermediate state, the run has the option to choose opening and closing either $x_i$ or $y_i$, for every $1 \leq i \leq \ell$, generating $2^\ell$ different runs. Therefore, if we only consider the equivalent $\EVA$ that extends transitions from $q_0$ to $q$ and no other pair in between, we obtain the extended $\VA$ $\cA'$ in Figure~\ref{fig:badsequentialextendedappendix}. This is the smallest $\EVA$ equivalent to $\cA$, since each of the mentioned transitions group the greatest amount of variables in a different run. Specifically, each of this transitions has a corresponding and different $\epsilon$ mapping, the one where the contained variables is defined. Therefore, it has $2^\ell$ transitions, as well as any other equivalent $\EVA$.

 \begin{figure}
 	\begin{center}
 			\begin{tikzpicture}[->,>=stealth',auto, thick,initial text= {}]
			
 			% The graph
 			\node [state,initial] at (0,0) (q0) {$q_0$};
 			\node [state] at (1.5,1) (q01) {};
 			\node [state] at (1.5,-1) (q02) {};		  
 			\node [state] at (3,0) (q1) {};
 			\node [state] at (4.5,1) (q11) {};
 			\node [state] at (4.5,-1) (q12) {};
 			\node [inner sep=2mm] at (6,0) (q2) {$\ldots$};
 			\node [state] at (7.5,1) (qn1) {};
 			\node [state] at (7.5,-1) (qn2) {};
 			\node [state] at (9,0) (qn) {$q$};
 			\node [state,accepting] at (11,0) (qf) {};

 			%\node [state,accepting] at (10,0) (q5) {$q_5$};
			
 			% Graph edges
 			\path[->] (q0) edge node[above] {$\Open{x_1} \ \ \ $} (q01);
 			\path[->] (q0) edge node[below] {$\Open{y_1} \ \ \ $} (q02);
 			\path[->] (q01) edge node[above] {$\ \ \Close{x_1}$} (q1);
 			\path[->] (q02) edge node[below] {$\ \ \Close{y_1}$} (q1);	  
 			\path[->] (q1) edge node[above] {$\Open{x_2} \ \ \ $} (q11);
 			\path[->] (q1) edge node[below] {$\Open{y_2} \ \ \ $} (q12);
 			\path[->] (q11) edge node[above] {$\ \  \Close{x_2}$} (q2);
 			\path[->] (q12) edge node[below] {$ \ \ \Close{y_2}$} (q2);
 			\path[->] (q2) edge node[above] {$\Open{x_\ell} \ \ \ $} (qn1);
 			\path[->] (q2) edge node[below] {$\Open{y_\ell} \ \ \ $} (qn2);
 			\path[->] (qn1) edge node[above] {$\ \ \Close{x_\ell}$} (qn);
 			\path[->] (qn2) edge node[below] {$\ \ \Close{y_\ell}$} (qn);
 			\path[->] (qn) edge node[above] {$a$} (qf);
 			\end{tikzpicture}
 	\end{center}
 	\vspace*{-15pt}
 	\caption{A sequential $\VA$ with $2\ell$ variables such that every equivalent $\EVA$ has $O(2^\ell)$ transitions.}
 	\label{fig:badsequentialappendix}
 \end{figure}
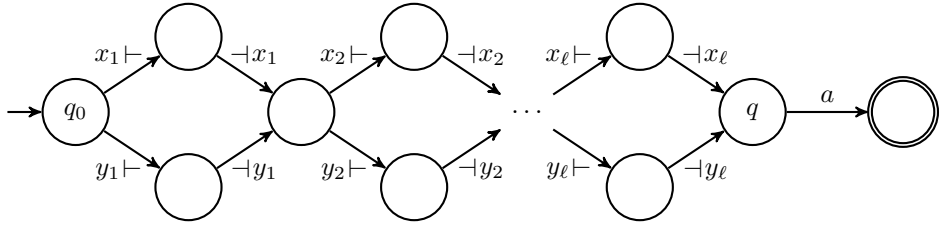

  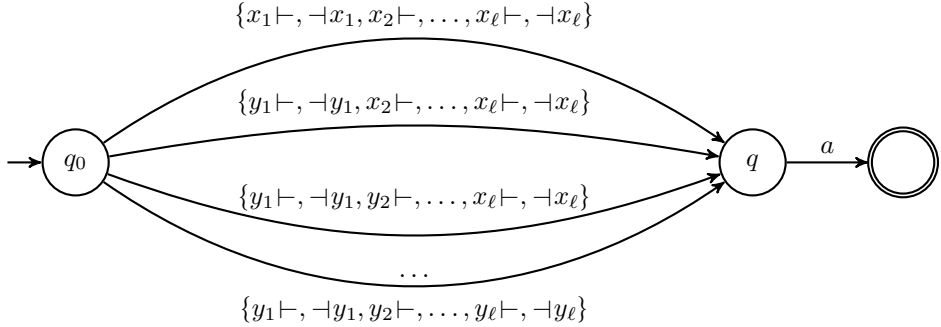
\begin{figure}
 	\begin{center}
 			\begin{tikzpicture}[->,>=stealth',auto, thick, scale = 1.0,initial text= {}]
			
 			% The graph
 			\node [state,initial] at (0,0) (q0) {$q_0$};
 			\node [state] at (9,0) (qn) {$q$};
 			\node [inner sep=2mm] at (4.5,-1.5) (q2) {$\ldots$};
 			\node [state,accepting] at (11,0) (qf) {};

 			%\node [state,accepting] at (10,0) (q5) {$q_5$};
			
 			% Graph edges
 			\path[->] (q0) edge[bend left=35] node[above] {$\{ \Open{x_1}, \Close{x_1}, \Open{x_2},\ldots, \Open{x_\ell}, \Close{x_\ell} \} $} (qn);
 			\path[->] (q0) edge[bend left=10] node[above] {$\{ \Open{y_1}, \Close{y_1}, \Open{x_2},\ldots, \Open{x_\ell}, \Close{x_\ell} \} $} (qn);
 			\path[->] (q0) edge[bend right=20] node[above=0.2] {$\{ \Open{y_1}, \Close{y_1}, \Open{y_2},\ldots, \Open{x_\ell}, \Close{x_\ell} \} $} (qn);
 			\path[->] (q0) edge[bend right=35] node[below] {$\{ \Open{y_1}, \Close{y_1}, \Open{y_2}, \ldots, \Open{y_\ell}, \Close{y_\ell} \} $} (qn);
 			\path[->] (qn) edge node[above] {$a$} (qf);
 			\end{tikzpicture}
 	\end{center}
 	\vspace*{-15pt}
 	\caption{The smallest $\EVA$ $\cA'$ equivalent to $\cA$ with $2^\ell$ transitions.}
 	\label{fig:badsequentialextendedappendix}
 \end{figure}	

\subsection*{Proof of Proposition \ref{prop:func-extended-blowup}}

%Let $\cA$ be a functional $\VA$ with $n$ states, $m$ transitions and $\ell$ variables. There is a $\DESVA$ $\cA'$ with at most $2^n$ states, $2^n(n^2+|\Sigma|)$ transitions and $\ell$ variables such that $\cA\equiv\cA'$.

We showed in the proof of Theorem \ref{theo:equivalence} that given a $\VA$ $\cA$ we can  construct an equivalent $\EVA$ $\cA_\text{ext}$, and the functional property also holds for $\cA_\text{ext}$. We show here that if $\cA$ has $n$ states and $m$ transitions, then $\cA_\text{ext}$ has at most $n$ states and $m + n^2$ transitions. 

The bound $n$ over the number of states in $\cA_\text{ext}$ directly follows from the construction in Theorem~\ref{theo:equivalence}. 
The bound $m + n^2$ over the number of transitions in $\cA_\text{ext}$ follows from the fact that $\cA$ is functional, given that in a functional $\VA$ the number of extended marker transitions that can be established between two states is at most one. Specifically, we prove the following lemma\footnote{A similar lemma appears in \cite{FKP17}.} (for a formal definition of variable path see the Proof of Theorem \ref{theo:equivalence}).
\begin{lemma}
	If $\cA$ is functional, then for every two states $q$ and $q'$ in $\cA$ that can produce valid runs, it holds that $\markers(\pi) = \markers(\pi')$ for every pair of variable paths $\pi$ and $\pi'$ between $q$ and $q'$.
\end{lemma}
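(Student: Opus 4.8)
The plan is to derive the equality of marker sets directly from functionality, by embedding each variable path into a complete accepting run and then comparing the total marker sets. First I would make precise the hypothesis that $q$ and $q'$ ``can produce valid runs'': it supplies a run segment $\alpha$ from $q_0$ to $q$ and a run segment $\beta$ from $q'$ to some final state $q_f \in F$, so that the concatenation $\alpha\,\pi\,\beta$ chains correctly through the shared endpoints $q$ and $q'$ and ends in $q_f$. Because $\pi'$ is also a variable path from $q$ to $q'$, the very same $\alpha$ and $\beta$ yield a second legitimate run $\alpha\,\pi'\,\beta$. Both $\alpha\,\pi\,\beta$ and $\alpha\,\pi'\,\beta$ are accepting, since each ends in $q_f \in F$.

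The key step is then to invoke functionality. Since $\cA$ is functional, \emph{every} accepting run is functional, so both $\alpha\,\pi\,\beta$ and $\alpha\,\pi'\,\beta$ open and close each variable of $\Var(\cA)$ exactly once. Writing $M_\alpha$ and $M_\beta$ for the sets of variable markers occurring along $\alpha$ and $\beta$, functionality of $\alpha\,\pi\,\beta$ forces its three marker contributions to be pairwise disjoint and to cover the full marker set, i.e. $M_\alpha \sqcup \markers(\pi) \sqcup M_\beta = \markers_{\Var(\cA)}$, where $\markers_{\Var(\cA)} = \{\Open{x}, \Close{x} \mid x \in \Var(\cA)\}$; applying the same reasoning to $\alpha\,\pi'\,\beta$ gives $M_\alpha \sqcup \markers(\pi') \sqcup M_\beta = \markers_{\Var(\cA)}$.

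Subtracting the common part, I would conclude
$$\markers(\pi) \;=\; \markers_{\Var(\cA)} \setminus (M_\alpha \cup M_\beta) \;=\; \markers(\pi'),$$
which is exactly the claim. The main obstacle I anticipate is not the counting argument, which is routine once the two runs are in place, but rather pinning down the hypothesis ``can produce valid runs'' so that a single context $(\alpha,\beta)$ serves both paths. This is where I would be careful: because $\pi$ and $\pi'$ share the endpoints $q$ and $q'$, any prefix reaching $q$ and any suffix leaving $q'$ for a final state combine with either path, and validity of the resulting runs is automatic, since we only need them accepting and functionality upgrades ``accepting'' to ``valid and functional''. A secondary point to verify is that the disjointness of $M_\alpha$, $\markers(\pi)$, $M_\beta$ is genuinely a consequence of functionality rather than an added assumption: were two of these sets to share a marker, the run $\alpha\,\pi\,\beta$ would open or close some variable twice, contradicting that it is functional.
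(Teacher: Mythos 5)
Your proposal is correct and follows essentially the same argument as the paper: both embed $\pi$ and $\pi'$ into accepting runs using a common prefix from $q_0$ to $q$ and a common suffix from $q'$ to a final state, then invoke functionality to conclude that each complete run must use exactly the full marker set $\markers_{\Var(\cA)}$, forcing $\markers(\pi) = \markers(\pi')$. The only cosmetic difference is that the paper phrases this as a proof by contradiction while you argue directly via set subtraction; the decomposition and the key use of functionality are identical.
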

\begin{proof}
	If not, then there are two states $q$ and $q'$ in $\cA$ such that the are at least two variable paths $\pi$ and $\pi'$ between $q$ and $q'$, with different sets of markers appearing in them. Since $q$ and $q'$ can produce a valid run, then they are both reachable from $q_0$ and can reach a final state. Specifically, let $\pi_i$ be the path from $q_0$ to $q$, and $\pi_f$ be the path from $q'$ to a final state. Then the concatenated paths $\pi_i \pi \pi_f$ and $\pi_i \pi'\pi_f$ are both accepting. Both also must be valid, because $\cA$ is functional. But, the set of markers in $\pi$ and $\pi'$ are different, yet, the rest of the paths are the same and they open and close all variables in $\cA$. This is a contradiction: either $\pi_i \pi \pi_f$ or $\pi_i \pi'\pi_f$ cannot open and close all variables. Therefore, all paths between $q$ and $q'$ must contain the same set of markers appearing in them. 
\end{proof}
	
Thanks to the previous lemma, we can bound the number of possible extended marker transitions between every pair of states to just one: the set of markers appearing in paths connecting these two states. 
%One can extend the last proved property to the case of $q_0$ and every other reachable state. Since all paths from $q_0$ to an specific state have the same set of markers, then each state has a set of opened and closed markers that all runs share up to that state. 
Therefore, using our construction for $\cA_\text{ext}$, at most one extended marker transition may be added between two states. Then, additionally to the $m$ transitions in $\cA$, at most $n^2$ extended marker transitions can be added (for every pair of states in $\cA$). 
We conclude that $\cA_\text{ext}$ has at most $m + n^2$ transitions.

Finally, as showed in Proposition \ref{prop:determinization}, $\dsEVA$ $\cA'$ can be constructed such that $\cA_\text{ext} \equiv \cA'$, where, $\cA'$ has at most $2^n$ states. Since $\cA'$ is deterministic, every state can have, at most, the number of extended transitions added or all the possible symbols in $\Sigma$. Therefore, the number of transitions for $\cA'$ is at most $2^n(n^2+|\Sigma|)$.	
	
%\subsection*{Proof of Proposition \ref{prop:seq-extended-blowup}}
%
%\input{../sections/appendix/seq_ext}

\subsection*{Proof of Proposition \ref{prop:algebra-eVA}}

\subsubsection*{Join of functional extended \VA}
%For any pair of $\fEVA$ $\cA_1$ and $\cA_2$, there exists an $\EVA$ $\cA'$ of quadratic size such that $\cA'\equiv\cA_1 \Join \cA_2$.

Let $\cA_1 = (Q_1, q_0^1, F_1, \delta_1)$ and $\cA_2 = (Q_2, q_0^2, F_2, \delta_2)$ be two $\fEVA$. Let $\VV_1 = \var(\cA_1)$, $\VV_2 = \var(\cA_2)$ and $\VV_\Join = \VV_1 \cap \VV_2$. The intuition behind the following construction is similar to the standard construction for intersection of NFAs: we run both automaton in parallel, limiting the possibility to use simultaneously markers on both automata only on shared variables, and let free  use of markers that are exclusive to $\VV_1$ or $\VV_2$. Formally, we define $\cA_\Join = (Q_1 \times Q_2 , (q_0^1, q_0^2), F_1 \times F_2, \delta)$ where $\delta$ is defined as follows: 

\begin{itemize}
\item $\big( (p_1,p_2), a, (q_1, q_2))\big)\in\delta$ \ if \ $a \in \Sigma$, $(p_1,a,q_1) \in \delta_1$ and $(p_2,a,q_2) \in \delta_2$.
\item $\big( (p_1,p_2), S_1, (q_1, p_2))\big)\in\delta$  \ if \  $p_2\in Q_2$,  $(p_1,S_1,q_1) \in \delta_1$, and $S_1\cap \markers_{\VV_\Join} = \emptyset$.
\item $\big( (p_1,p_2), S_2, (p_1, q_2))\big)\in\delta$  \ if \  $p_1\in Q_1$, $(p_2,S_2,q_2) \in \delta_2$ and $S_2\cap \markers_{\VV_\Join} = \emptyset$. 
\item $\big( (p_1,p_2), S_1\cup S_2, (q_1, q_2))\big)\in\delta$  \ if \  $(p_1,S_1,q_1) \in \delta_1$, $(p_2,S_2,q_2) \in \delta_2$, and 
$S_1 \cap \markers_{\VV_\Join} = S_2 \cap \markers_{\VV_\Join}$.
\end{itemize}

To show that $\semd{\cA_\Join} \subseteq \semd{\cA_1} \Join \semd{\cA_2}$, let $\mu$ be a mapping in $\semd{\cA_\Join}$ for the document $d$, and $\rho_\mu$ the corresponding valid and accepting run of $\cA_\Join$ over $d$. 
By construction, from $\rho_\mu$ we can get a sequence of states in $\cA_1$ and $\cA_2$ that define runs $\rho_1$ and $\rho_2$ in their respective automaton. This preserves both order and positions of markers. Since $\cA_1$ and $\cA_2$ are functional and $\rho_\mu$ is accepting, then $\rho_1$ and $\rho_2$ are accepting and valid runs of $\cA_1$ and $\cA_2$, respectively. This implies that $\mu^{\rho_1} \in \semd{\cA_1}$ and $\mu^{\rho_2} \in \semd{\cA_2}$. Finally, since all common marker transitions are performed by both automata at the same union transitions, then $\mu^{\rho_1} \sim \mu^{\rho_2}$ and therefore $\mu = \mu^{\rho_1} \cup \mu^{\rho_2} \in \semd{\cA_1} \Join \semd{\cA_2} $. 

To show that $\semd{\cA_1} \Join \semd{\cA_2} \subseteq \semd{\cA_\Join}$, let $\mu_1\in\semd{\cA_1}$, $\mu_2\in\semd{\cA_2}$ such that $\mu_1 \sim \mu_2$ and $\rho^{\mu_1}$ and $\rho^{\mu_2}$ be their corresponding runs. Since they are compatible mappings, then both runs use each marker in $\markers(\VV_\Join)$ in the same positions of $d$. Therefore, by merging the marker transitions made in each run, the corresponding union transitions must exists in $\cA_\Join$ and used to construct a run $\rho$ in $\cA_\Join$. Finally, since $\rho_1$ and $\rho_2$ are accepting, valid, and total, then $\rho$ is also accepting, valid and total for $\var(\cA_1) \cup \var(\cA_2)$, that is, $\mu^{\rho} \in \semd{\cA_\Join}$. It is easy to see that $\mu^{\rho} = \mu_1 \cup \mu_2$, and therefore $\mu_1 \cup \mu_2 \in \semd{\cA_\Join}$.

To show that $\cA_\Join$ is also functional, let $\rho$ be an accepting run in $\cA_\Join$ for $d$. Thanks to the construction, and as shown before, corresponding runs in $\cA_1$ and $\cA_2$ can be produced from $\rho$ that are also accepting, and therefore valid and total since they are functional. Since all common markers are used in the same positions and precisely once in the corresponding runs, this is also true for $\rho$. Also, all variables are used in runs of $\cA_1$ and $\cA_2$, therefore $\rho$ is valid and total for $\var(\cA_1) \cup \var(\cA_2)$.
Regarding the size of $\cA_\Join$, one can verify that $\cA_\Join$ has $|Q_1|\times |Q_2|$ states and at most $O(|\delta_1| \times |\delta_2|)$ transitions. Therefore, $\cA_\Join$ is quadratic in size.

\subsubsection*{Projection of functional extended \VA}
%Let $\cA$ be a $\fEVA$ and $Y\subset \VV$, there exists $\EVA$ $\cA_\pi$ such that $\cA_\pi \equiv \pi_Y \cA_1 $ of linear size.

To prove this, for the sake of simplification we use the notion of $\epsilon$-transitions in $\EVA$, as the usual notion for regular NFA, namely, transition of the form $(q, \epsilon, p)$. 
As it is standard in automata theory, if a run uses an $\epsilon$-transition, this produces no effect on the document read or variables that are opened or closed, and only the current state of the automaton changes from $q$ to $p$.
Furthermore, in the semantics of $\epsilon$-transitions we assume that no two consecutive $\epsilon$-transitions can be used. 
Clearly, $\epsilon$-transitions do not add expressivity to the model and only help to simplify the construction of the projection.

Let $\cA =(Q,q_0, F,\delta )$ be a $\fEVA$ and $Y\subset \VV$. Let $U = \markers_\VV \setminus \markers_Y$ be markers for unprojected variables, then $\cA_\pi = (Q,q_0, F, \delta')$ where $(q,a,p) \in \delta'$ whenever $(q,a,p) \in \delta$ for every $a\in \Sigma$, $(q,S \setminus U,p) \in \delta'$ whenever $(q,S,p) \in \delta$ and $S\setminus U \neq \emptyset$, and $(q,\epsilon,p) \in \delta'$ whenever  $(q,S,p) \in \delta$ and $S \setminus U = \emptyset$.

The equivalence between $\cA$ and $\cA_\pi$ is straightforward. For every $\mu \in \semd{\cA}$, there exists an accepting and valid run $\rho$ in $\cA$ over $d$.  For $\rho$ there exists a run $\rho'$ in $\cA_\pi$ formed by the same sequence of states, but extended marker or $\epsilon$-transitions are used that only contain markers from $Y$. Moreover, $\rho'$ must also be valid since it maintains the order of $Y$-variables used in $\rho$.  This shows that $\pi_Y \semd{\cA} \subseteq \semd{\cA_\pi}$.  The other direction, $\semd{\cA_\pi} \subseteq \pi_Y \semd{\cA}$, follows from the fact that $\cA'$ has no additional accepting paths in comparison to $\cA$. It is also easy to see that $\cA_\pi$ must be functional.

Finally, it is important to note that, as for classical NFAs, from $\cA'$ an equivalent $\epsilon$-transition free $\EVA$ can constructed using $\epsilon$-closure over states. 

\subsubsection*{Union of functional extended \VA}

This construction is the standard disjoint union of automaton,  with $\epsilon$-transitions to each corresponding initial state. Let $\cA_1 = (Q_1, q_0^1, F_1, \delta_1)$ and $\cA_2 = (Q_2, q_0^2, F_2, \delta_2)$ be two $\fEVA$ such $Q_1\cap Q_2 = \emptyset$. Then, $\cA_\cup = (Q_1\cup Q_2, q_0, F_1\cup F_2, \delta_1\cup \delta_2 \cup \{ (q_0, \epsilon, q_0^1), (q_0, \epsilon, q_0^2) \} )$ where $q_0$ is a fresh new state. This simply adds a new initial state connected with $\epsilon$-transitions to the initial states of $\cA_1$ and $\cA_2$, respectively. Therefore every run in $\cA_\cup$ must produce a run from $\cA_1\cup\cA_2$ and vice versa. An equivalent $\epsilon$-transition free automaton can be constructed as in the projection case.

\subsection*{Proof of Proposition \ref{prop:reg_span_compiler}}

%!TEX root = ../../main/main.tex

Let $\gamma$ be a regular spanner in $\VA^{\{\pi,\cup,\Join\}}$ that uses $k$ functional $\VA$ as input, each of them with at most $n$ states. By Proposition~\ref{prop:algebra-eVA}, we know that we can construct the product between two automata with $n$ and $m$ states, and the resulting automaton will have at most $nm$ states and $nm$ transitions. Moreover, projections and unions remain linear in the size of the input automata. Therefore, if we apply the transformations of Proposition~\ref{prop:algebra-eVA} in a bottom-up fashion to $\gamma$, each algebraic operation will multiply the number of states and transitions of the resulting automaton by $n$. It is trivial to prove then by induction that the final automaton will have at most $n^k$ states and $n^k$ transitions. This automaton needs to be determinized at the end. By Proposition~\ref{prop:func-extended-blowup}, the result will have $2^{n^k}$ states and $n^{2k}+|\Sigma|$ transitions, concluding the proof.	

\subsection*{Proof of Proposition \ref{prop:func_algebra_compilation}}

Contrary to the previous proposition, the idea here is to first determinize each automaton and then apply the join and union construction of functional $\EVA$. Given that each automaton will have size $O(2^n)$ after determinization, then the product (e.g. join) of two automata of size $O(2^{n})$ will have size $O(2^{2n})$. Therefore, the number of states of the whole construction will be $O(2^{kn})$ where $k$ is the number of functional $\EVA$s in the expression. 

The only subtle point here is that each operation (i.e. join or union) must preserve the functional and deterministic property of the input automata in order to avoid a determinization after the join and union operations are computed. Indeed, one can easily check in the Proof of Proposition~\ref{prop:algebra-eVA} that this is the case for the join of two deterministic $\fEVA$. 
Unfortunately, the linear construction of the union of two $\fEVA$ does not preserve the deterministic property of the input automaton. For this reason, we need an alternative construction of the union that preserves determinism. This is shown in the next lemma concluding the proof of the proposition. 

\begin{lemma}
Let $\cA_1$ and $\cA_2$ be two deterministic $\fEVA$. Then there exists a deterministic $\fEVA$ $\cA_{\cup}$ such that $\cA_{\cup} \equiv \cA_1 \cup \cA_2$. Moreover, $\cA_{\cup}$ is of size $|\cA_1|\times |\cA_2|$, i.e. quadratic with respect to $\cA_1$ and $\cA_2$.
\end{lemma}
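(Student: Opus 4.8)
The plan is to avoid the $\epsilon$-based union of Proposition~\ref{prop:algebra-eVA} (which preserves functionality but destroys determinism) and instead adapt the classical product construction for the union of deterministic automata. As in the functional union case, I assume $\cA_1$ and $\cA_2$ share the same variable set $V$; this is exactly the condition under which $\semd{\cA_1}\cup\semd{\cA_2}$ is expressible by a single \emph{functional} automaton, since every accepting run of a \fEVA\ must have domain equal to its whole variable set. Writing $\cA_i = (Q_i, q_0^i, F_i, \delta_i)$, the first step is to totalise each transition function by adjoining a fresh absorbing sink $\bot_i$: set $\hat\delta_i(p,o) = \delta_i(p,o)$ when defined and $\hat\delta_i(p,o)=\bot_i$ otherwise, with $\hat\delta_i(\bot_i,o)=\bot_i$ for every label $o$.

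Next I would define $\cA_\cup = \big((Q_1\cup\{\bot_1\})\times(Q_2\cup\{\bot_2\}),\, (q_0^1,q_0^2),\, F_\cup,\, \delta\big)$, where the acceptance condition encodes the union as a disjunction,
\[
F_\cup = \big(F_1\times(Q_2\cup\{\bot_2\})\big)\cup\big((Q_1\cup\{\bot_1\})\times F_2\big),
\]
and the transition $\delta((p_1,p_2),o) = (\hat\delta_1(p_1,o),\hat\delta_2(p_2,o))$ is included \emph{only} for labels $o\in\Sigma\cup(2^{\markers_{\VV}}\setminus\{\emptyset\})$ such that $\delta_1(p_1,o)$ or $\delta_2(p_2,o)$ is defined. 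Restricting to such labels is what keeps the number of transitions bounded by $O(|\cA_1|\cdot|\cA_2|)$ instead of blowing up over all exponentially many marker sets, and it is harmless because any omitted label sends both components to the dead non-accepting state $(\bot_1,\bot_2)$. Since $\hat\delta_1,\hat\delta_2$ are functions, $\delta$ is a partial function, so $\cA_\cup$ is deterministic, and its size is the claimed $|\cA_1|\times|\cA_2|$.

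The step I expect to be the main obstacle is \emph{functionality}: I must show every accepting run of $\cA_\cup$ is valid and total, which is precisely where the same-variable-set hypothesis is essential. The key observation is that a single run of the product forces both components to read the identical letters and, at each variable step, the identical marker set. So let $\rho$ be an accepting run ending in $(f_1,f_2)\in F_\cup$; by definition $f_1\in F_1$ or $f_2\in F_2$, say $f_1\in F_1$. Since $f_1\neq\bot_1$ and $\bot_1$ is absorbing, the first-component projection $\rho_1$ never visits $\bot_1$ and is therefore a genuine accepting run of $\cA_1$. Functionality of $\cA_1$ makes $\rho_1$ valid and total over $V$; because $\rho$ and $\rho_1$ carry the same sequence of marker sets, $\rho$ opens and closes every variable of $V$ exactly once and in the correct order, hence is functional and satisfies $\mu^{\rho}=\mu^{\rho_1}$. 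Had $\var(\cA_1)\neq\var(\cA_2)$ this argument would fail, since the shared marker sequence need not be total for the union of the variable sets.

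Finally I would establish $\semd{\cA_\cup}=\semd{\cA_1}\cup\semd{\cA_2}$. For the inclusion $\supseteq$, given $\mu\in\semd{\cA_i}$ I take its valid accepting run in $\cA_i$ and drive $\cA_\cup$ along the same sequence of labels; the other component either tracks it (also producing $\mu$) or drops into its sink, and in both cases the $i$-th component reaches a state in $F_i$, so the product run is accepting with output $\mu$. The inclusion $\subseteq$ is exactly the functionality argument above, which for any accepting run $\rho$ of $\cA_\cup$ exhibits an accepting run of $\cA_1$ or $\cA_2$ with the same output $\mu^{\rho}$. Together with the determinism and size bound already established, this proves the lemma.
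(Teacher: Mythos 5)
Your proof is correct and takes essentially the same route as the paper: the paper also runs $\cA_1$ and $\cA_2$ in parallel with disjunctive acceptance, letting the run ``branch off'' into a single automaton's states exactly when the other component can no longer move, which is precisely your sink-completed product under the identification $(q_1,\bot_2)\leftrightarrow q_1$ and $(\bot_1,q_2)\leftrightarrow q_2$ (the pair $(\bot_1,\bot_2)$ being unreachable in both). If anything, you are more explicit than the paper on two points it leaves implicit: that the same-variable-set hypothesis of Proposition~\ref{prop:algebra-eVA} is exactly what makes the accepting-run projection argument yield functionality, and that determinism follows because the completed transition maps are (partial) functions.
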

\begin{proof}
Let $\cA_1 = (Q_1, q_0^1, F_1, \delta_1)$ and $\cA_2 = (Q_2, q_0^2, F_2, \delta_2)$ be two $\fEVA$ such $Q_1\cap Q_2 = \emptyset$. The intuition behind the construction is to start running both automata in parallel, but add the possibility to branch off and continue the run in just one automaton, only when both cannot simultaneously execute a transition. 
Formally, let $\cA_{\cup}  = (Q, (q_0^1, q_0^2) , F, \delta )$ such that $Q = Q_1 \times Q_2 \cup Q_1 \cup Q_2$, $F = F_1\times Q_2 \cup Q_1\times F_2 \cup F_1 \cup F_2$, and $\delta$ satisfies that:
\begin{itemize}
\item $\delta_1 \subseteq \delta$ and $\delta_2 \subseteq \delta$, 
\item $\big( (p_1, p_2), o, (q_1, q_2) \big) \in \delta$ whenever $(p_1, o, q_1)\in\delta_1$, and $(p_2, o, q_2)\in\delta_2$,
\item $\big( (p_1, p_2), o, q_1 \big) \in \delta$ whenever $(p_1, o, q_1)\in\delta_1$, and $(p_2, o, q_2)\notin \delta_2$ for every $q_2 \in Q_2$, and
\item $\big( (p_1, p_2), o, q_2 \big) \in \delta$ whenever $(p_2, o, q_2)\in\delta_2$, and $(p_1, o, q_1)\notin \delta_1$ for every $q_1 \in Q_1$.
\end{itemize}

To show $\semd{\cA_{\cup}} \subseteq \semd{\cA_1} \cup \semd{\cA_2}$, let $\mu \in \semd{\cA_{\cup}}$ be an arbitrary mapping and $\rho$ be the corresponding run in $\cA_{\cup}$. Since $\rho$ is accepting, the last state in the run can either be from $Q_1\times Q_2$, $Q_1$ or $Q_2$. It is easy to see that either case, a run for $\mu$ exists in the original automaton. More specifically, if it is from $Q_1\times Q_2$, then both automata have complete runs defined for $\mu$, if it is from $Q_1$ or $Q_2$, then $\cA_1$ or $\cA_2$, respectively, has a defined run for $\mu$. Then, we conclude that $\mu \in \semd{\cA_1} \cup \semd{\cA_2}$.  

To show $\semd{\cA_1} \cup \semd{\cA_2} \subseteq \semd{\cA_{\cup}}$, consider $\mu$ in either $\semd{\cA_1}$  or $\semd{\cA_2}$. Without loss of generality, assume that $\mu \in \semd{\cA_1}$. Then, a run $\rho_1$ in $\cA_1$ exists that produces $\mu$. If we can also define a run $\rho_2$ of $\cA_2$ over $d$ that outputs $\mu$, then the run $\rho$ in $\cA_{\cup}$ can be constructed by coupling up states from $\rho_1$ and $\rho_2$. Since both use the same transitions, then the last state in $\rho$ must be in $F_1 \times Q_2$ and is accepting. Otherwise, $\rho_2$ cannot be defined, then some transition in $\rho_1$ is not defined in $\cA_2$. This means that $\rho$ in $\cA_{\cup}$ can be constructed by following first the transitions defined in both $\cA_1$ and $\cA_2$, but, at the first undefined transition for $\cA_2$, $\rho$ can branch off and continue on states from $Q_1$. That transition exists since it does not exist for $\cA_2$, but it does for $\cA_1$. After that, $\rho$ continues as $\rho_1$, making $\rho$ also accepting. In both cases, we conclude that $\mu \in \semd{\cA_{\cup}}$.

One can easily check that if $\cA_1$ and $\cA_2$ are functional, then $\cA_{\cup}$ is also functional, since every accepting run $\rho$ in $\cA_{\cup}$ has a corresponding accepting run either in $\cA_1$, $\cA_2$,  or both. These runs are valid and total and, thus, $\rho$ must also be valid.
From the construction, one can also check that if $\cA_1$ and $\cA_2$ are deterministic, then $\cA_{\cup}$ is also deterministic.
Finally, the size of $\cA_{\cup}$ is quadratic since it uses $O(|Q_1|\times |Q_2|)$ states and at most $O(|\delta_1|\times |\delta_2|)$ transitions. This was to be shown.
\end{proof}

\renewcommand{\atitle}{\ref{sec:lower}}
\section{Proofs from Section~\atitle}\label{sec:appendix-lower}

\subsection*{Proof of Theorem \ref{theo:counting}}

\begin{algorithm}[t]
	%	\caption{Evaluate $A = (Q, q_0, F, \delta)$ over a word $a_1 \ldots a_n$}\label{alg:ma-eval}
	\caption{Count the number of mappings in $\semd{\cA}$ over the document  $d = a_1 \ldots a_n$}\label{alg:ma-count}
	\begin{algorithmic}[1]
		\Function{Count}{$\cA$, $a_1 \ldots a_n$}
		\ForAll{$q \in Q \setminus \{q_0\}$}
		\State $N[q] \gets 0$
		\EndFor
		\State $N[q_0] \gets 1$

		\For{$i := 1$ to $n$}
		\State $\textsc{Capturing}(i)$			
		\State $\textsc{Reading}(i)$
		
		\EndFor
		
		\State $\textsc{Capturing}(n+1)$
		
		\State \Return $\sum_{q \in F} N[q]$
		\EndFunction
		
		\medskip
		
		\Procedure{Capturing}{$i$}
		\State $N' \gets N$
		\ForAll{$q \in Q$ \textbf{with} $N'[q] > 0$}
		  	\ForAll{$S \in \markers_\delta(q)$}
				\State $p \gets \delta(q,S)$
				\State $N[p] \gets N[p] + N'[q]$
			\EndFor
		\EndFor
		\EndProcedure
		
		\medskip
		
		\Procedure{Reading}{$i$}
		
		\State $N' \gets N$
		\State $N \gets 0$
		
		\ForAll{$q \in Q$ \textbf{with} $N'[q] > 0$}
		\State $p \gets \delta(q,a_i)$
		
		\State $N[p] \gets N[p] + N'[q]$
		\EndFor
		\EndProcedure
	\end{algorithmic}
\end{algorithm}

The \textsc{Count} function in Algorithm~\ref{alg:ma-count} calculates $|\semd{\cA}|$ given a $\dsEVA$ $\cA = (Q, q_0, F, \delta)$ and a document $d = a_1 \ldots a_n$. 
This algorithm is a natural extension of Algorithm~\ref{alg:ma-eval} in Section~\ref{sec:algorithm}. 
Instead of keeping the set of list $\{\plist_q\}_{q \in Q}$ where each list $\plist_q$ succinctly encodes all mappings of runs which end in state $q$, we keep an array $N$ where $N[q]$ stores the number of runs that end in state $q$. 
Since $\cA$ is sequential (i.e. every partial run encodes a valid partial mapping) and deterministic (i.e. each partial run encodes a different partial mapping), we know that the number of runs ending in state $q$ is equal to the number of valid partial mappings in state $q$. 
Therefore, if $N[q]$ stores the number of runs at state $q$, then the sum of all values $N[q]$ for every state $q \in F$ is equal to the number of mappings that are output at the final states. 

As we said, Algorithm~\ref{alg:ma-count} is very similar to the constant delay algorithm. At the beginning (i.e. lines 2-4), the array $N$ is initialize with $N[q] = 0$ for every $q \neq q_0$ and $N[q_0] = 1$, namely, the only partial run before reading or capturing any variable is the run $q_0$. 
Next, the algorithm iterates over all letters in the document, alternating between \textsc{Capturing} and \textsc{Reading} procedures (lines 5-8).
The purpose of the $\textsc{Capturing}(i)$ procedure is to extend runs by using extended variable transitions between letters $a_{i-1}$ and $a_{i}$. 
This procedure first makes a copy of $N$ into $N'$ (i.e. $N'$ will store the number of runs in each state before capturing) and then adds to $N[p]$ the number of runs that reach $q$ before capturing (i.e. $N'[q]$) whenever there exists a transition $(p, S, q) \in \delta$ for some $S \in \markers_\delta(q)$.
On the other side, the procedure $\textsc{Reading}(i)$ is coded to extend runs by using a letter transition when reading $a_i$.
Similar to \textsc{Capturing}, \textsc{Reading} starts by making a copy of $N$ into $N'$ (line 17) and $N$ to $0$ (line 18).
Intuitively, $N'$ will store the number of valid runs before reading $a_i$ and $N$ will store the number of valid runs after reading $a_i$.
Then, \textsc{Reading} procedure iterates over all states $q$ that are reached by at least one partial run and adds $N'[q]$ to $N[p]$ whenever there exists a letter transition $(q, a_i, p) \in \delta$. Clearly, if there exists $(q, a_i, p) \in \delta$, then all runs that reach $q$ after reading $a_1 \ldots a_{i-1}$ can be extended to reach $p$ after reading $a_1 \ldots a_i$.
After reading the whole document and alternating between $\textsc{Capturing}(i)$ and $\textsc{Reading}(i)$, we extend runs by doing the last extended variable transition after reading the whole word, by calling $\textsc{Capturing}(n+1)$ in line 8. 
Finally, the output is the sum of all values $N[q]$ for every state $q \in F$, as explained before. 

The correctness of Algorithm \ref{alg:ma-count} follows by a straightforward induction over $i$. Indeed, the inductive hypothesis states that after the $i$-iteration, $N[q]$ has the number of partial runs of $\cA$ over $a_1 \ldots a_i$. 
Then, by following the same arguments as in Lemma~\ref{lemma:correctness1}, one can show that $N[q]$ store the number of partial runs of $\cA$ after capturing and reading the $(i+1)$-th letter.

\subsection*{Proof of Theorem \ref{theo:spanl}}

Let us first define the class $\textsc{SpanL}$.
Formally, let $M$ be a non-deterministic Turing machine with output tape, where each accepting run of $M$ over an input produces an output. Given an input~$x$, we define $span_M(x)$ as the number of {\em different} outputs when running $M$ on $x$. Then, $\textsc{SpanL}$ is the counting class of all functions $f$ for which there exists a non-deterministic logarithmic-space Turing machine with output such that $f(x) = span_M(x)$ for every input $x$. We say that a function $f$ is $\textsc{SpanL}$-complete if $f \in \textsc{SpanL}$ and every function in $\textsc{SpanL}$ can be reduced into $f$ by log-space parsimonious reductions~\cite{LOGC}.

For the inclusion of $\COUNT[\fVA]$ in \textsc{Spanl}, let $M$ be a non-deterministic TM that receives $\cA$ and $d$ as input. The work of $M$ is more or less straightforward: it must simulate a run of $\cA$ over $d$ to generate a mapping $\mu \in \semd{\cA}$, and it does so by alternating between extended variable transitions and letter transitions reading $d$ and writing the corresponding run on the output tape. At all times, $M$ keeps a pointer (i.e. with log space) for the current state and a pointer to the current letter. Furthermore, it starts and ends with a variable transition as defined in Section~\ref{sec:algorithm}. Whenever a variable transition is up, the machine must choose non-deterministically from all its outgoing variable transitions from the current state. Recall that $M$ can also choose to  not take any variable transition, in which case it stays in the same state without writing on the output tape. Instead, if $(q,S, p)$ is chosen then $M$ writes the set of variables in $S$ on the output tape and updates the current state. It does so maintaining a fixed order between variables (either lexicographic or the order presented in the input). 
On the other hand, when a letter transition is up, if a transition with the corresponding letter from $d$ exists (defined by the current letter), then the current letter is printed in the output tape and, the current state and letter are updated, changing to a capturing phase.  If no transition exists from the current state, then $M$ stops and rejects. Once the last letter is read (the pointer to the current letter is equal to $|d|$), then the last variable transition is chosen. 
Finally, if the final state is  accepting, then $M$ accepts and outputs  what is on the output tape.
The correctness of $M$ (i.e. $|\semd{\cA}| \ = \ span_M(\cA,d)$) follows directly from the functional properties of $\cA$. More precisely, we know that each accepting run is valid, and will therefore produce an output. Finally, in case that $\cA$ has two runs on $x$ that produce the same output, by the definition of \textsc{Spanl} this output will be counted only once, as required to compute $|\semd{\cA}|$ correctly.

\newcommand{\cB}{\mathcal{B}}

For the lower-bound, we show that the Census problem \cite{LOGC}, which is \textsc{SpanL}-hard, can be reduced into $\COUNT[\fVA]$ via a parsimonious reduction in logarithmic-space. Formally, given a NFA $\cB$ and length $n$, the Census problem asks to count the number of words of length $n$ that are accepted by $\cB$. 
We reduce an input of the Census problem $(\cB, n)$ into $\COUNT[\fVA]$ by computing a functional $\VA$ $\cA_{\cB, n}$ and a document $d_{\cB, n}$ such that the number of words of length $n$ that $\cB$ accepts, is equivalent to count how many mappings does $\cA_{\cB, n}$ generate over $d_{\cB, n}$. 
Let $\cB = (Q, \Sigma, \Delta, q_0, F)$ be an NFA with $\Sigma = \{a,b\}$. Define $d_{\cB, n} = (\#cc)^n$ and $\cA_{\cB, n} = (Q', q_0', F', \delta')$ over the alphabet $\{c, \#\}$ such that $Q' = Q \times \{0,\ldots, n\}$, $q_0' = (q_0,0)$, $F'= F \times \{n\}$.
Furthermore, for the sake of simplification we define $\delta'$ by using extended transitions as follows:
\[
\begin{array}{rcl}
(q,a,p)\in\Delta  & \text{ then } & \Big((q,i-1), \  \# \cdot \Open{x_{i}} \cdot \, c \, \cdot \Close{x_{i}} \cdot c, \ (p,i)\Big) \in \delta' \ \ \text{for all } i\in\{1,\ldots, n\} \\
(q,b,p)\in\Delta & \text{ then } & \Big((q,i-1), \ \# \cdot c \cdot \Open{x_{i}} \cdot \, c \, \cdot \Close{x_{i}}, \ (p,i)\Big) \in \delta' \ \  \text{for all } i\in\{1,\ldots, n\}
\end{array}
\]
In the previous definition, a transition of the form $ ((q,i-1),  w, (p,i))$ means that the $\VA$ will go from state $(q,i-1)$  to the state $(p,i)$ by following the sequence of operations in $w$. 
For example the sequence $\# \cdot \Open{x_{i}} \cdot \, c \, \cdot \Close{x_{i}} \cdot c$ means that an $\#$-symbol will be read, followed by open $x_i$, read $c$,  close $x_i$, and read $c$. 
Clearly, extended transitions like above can be encoded in any standard $\VA$ by just adding more states. 

Note that to get to a state $(p,i)$ the only option is to start from the state $(q,i-1)$. Since all runs start at $(q_0, 0)$ and final states are of the form $(p,n)$, an accepting run of $\cA_{\cB,n}$ over $d_{\cB,n}$ must traverse $n + 1$ states of the form $(q,i)$, one for each $i\in\{0,\ldots, n\}$, and therefore assign all $n$ variables $x_i$. Also, between two consecutive states the transition always captures a span of length 1 (i.e. $\Open{x_{i}} \cdot \, c \, \cdot \Close{x_{i}}$) and read three characters, starting with an \#-symbol which is never captured. Therefore, all accepting runs assign all $n$ variables, and $x_i$ is either assigned to $[3i-1, 3i\rangle$ or $[3i,3i+1\rangle$. Since all the variables are opened and closed correctly between each $(q,i-1)$ and $(p,i)$, we can conclude that $\cA_{\cB,n}$ is functional.

One can easily check that the reduction of $(\cB, n)$ to $(\cA_{\cB, n}, d_{\cB, n})$ can be done with logarithmic space.
To prove that the reduction is indeed parsimonious (i.e. $|\{w \in \Sigma^n \mid w \in \cL(\cB)\}| = |\seme{\cA_{\cB, n}}{d_{\cB, n}}|$), we show that there exists a bijection between words of length $n$ accepted by $\cB$ and mappings in $\seme{\cA_{\cB, n}}{d_{\cB, n}}$. Specifically, consider the function  $f: \{w \in \Sigma^n \mid w \in \cL(\cB)\} \to \seme{\cA_{\cB, n}}{d_{\cB, n}}$ such that $f(w)$ is equivalent to the mapping $\mu_w : \{x_1,\ldots, x_n\} \to \sub(d_{\cA, n})$:
\[ 
\begin{array}{rcl}
\mu_w(x_i) & = & \begin{cases} [3i-1, 3i\rangle, & \text{if }w_i = a \\ [3i, 3i+1\rangle, & \text{if } w_i = b \end{cases}
\end{array}
\]
for every word $w = w_1 \ldots w_n \in \cL(\cB)$.
To see that $f$ is indeed a bijection, note that for every word $w \in \cL(\cB)$ of length $n$ we have an accepting run of length $n$ in $\cA$ and we can build a mapping in $\seme{\cA_{\cB, n}}{d_{\cB, n}}$. Note that all accepting runs for $w$ give the same mapping.
Moreover, note that for two different words, different mapping are defined and then $f$ is an injective function. 
In the other direction, for every mapping in $\seme{\cA_{\cB, n}}{d_{\cB, n}}$ we can build some word of length $n$ that is accepted by $\cB$ and, thus, $f$ is surjective. Therefore, $f$ is a bijection and the reduction from the Census problem into  $\COUNT[\fVA]$ is a parsimonius reduction. This completes the proof. 

% to    consider an accepting run $\rho$ of $\cB$ over $w$.
%Since there is a one-to-one correspondence between the transitions in $\cB$ and the transitions in $\cA_{\cB, n}$, we can build a run $\rho'$ from $\rho$ by using the corresponding transitions on $\cA_{\cA, n}$ starting from $(q_0,0)$. The run of $\cB$ over $w$ has length $n+1$ (i.e. the number of states) and then the final state of $\rho'$  must be of the form $(q,n)$. Therefore, $\rho'$ will be accepting for $\cA_{\cA, n}$ over $d_{\cB, n}$. We conclude that $\mu^{\rho'}$ is well defined, $\mu^{\rho'} = \mu_w$ and thus $f$ is a bijection.

%The latter function is clearly  a bijection, and therefore, Census($\cA, 1^n$) = $\text{\#MA}_{\text{ext}}^{\text{valid}}(\cA_{\cA, n}, d_{\cA, n})$. The computation of $\cA_{\cA, n}, d_{\cA, n}$ can be done in logarithmic space using a constant number of pointers to the original input $\cA$'s states and $n$. Therefore, $\text{\#MA}_{\text{ext}}^{\text{valid}}$  is span-L-hard and span-L-complete.

\end{document}